\newcommand{\version}{March 10th, 2011}
\renewcommand{\theequation}{\thesection.\arabic{equation}}
\numberwithin{equation}{section}
\newcommand{\bdm}{\begin{displaymath}}
\newcommand{\edm}{\end{displaymath}}
\newcommand{\bdn}{\begin{eqnarray}}
\newcommand{\edn}{\end{eqnarray}}
\newcommand{\bay}{\begin{array}{c}}
\newcommand{\eay}{\end{array}}
\newcommand{\ben}{\begin{enumerate}}
\newcommand{\een}{\end{enumerate}}
\newcommand{\beq}{\begin{equation}}
\newcommand{\eeq}{\end{equation}}
\newcommand{\bml}[1]{\begin{multline} #1 \end{multline}}
\newcommand{\bmln}[1]{\begin{multline*} #1 \end{multline*}}
\newcommand{\lf}{\left}
\newcommand{\ri}{\right}
\newcommand{\RR}{\mathbb{R}^2}
\newcommand{\rv}{\vec{r}}
\newcommand{\rvi}{\vec{r}_i}
\newcommand{\avj}{\vec{a}_j}
\newcommand{\avi}{\vec{a}_i}
\newcommand{\diff}{\mathrm{d}}
\newcommand{\eps}{\varepsilon}
\newcommand{\chin}{\chi_{\mathrm{in}}}
\newcommand{\chout}{\chi_{\mathrm{out}}}
\newcommand{\xiin}{\xi_{\mathrm{in}}}
\newcommand{\xiout}{\xi_{\mathrm{out}}}
\newcommand{\Ofirst}{\Omega_{\mathrm{c_1}}}
\newcommand{\Osec}{\Omega_{\mathrm{c_2}}}
\newcommand{\Othird}{\Omega_{\mathrm{c_3}}}
\newcommand{\ba}{\mathcal{B}}
\newcommand{\gpf}{\mathcal{E}^{\mathrm{GP}}}
\newcommand{\gpe}{E^{\mathrm{GP}}}
\newcommand{\gpm}{\Psi^{\mathrm{GP}}}
\newcommand{\chem}{\mu^{\mathrm{GP}}}
\newcommand{\gpdom}{\mathscr{D}^{\mathrm{GP}}}
\newcommand{\gpdomt}{\tilde{\mathscr{D}}^{\mathrm{GP}}}
\newcommand{\hgpf}{\hat{\mathcal{E}}^{\mathrm{GP}}}
\newcommand{\hgpe}{\hat{E}^{\mathrm{GP}}}
\newcommand{\hchem}{\hat{\mu}^{\mathrm{GP}}}
\newcommand{\hgpm}{g}
\newcommand{\hgpd}{\hat{\rho}}
\newcommand{\hgpdom}{\hat{\mathscr{D}}^{\mathrm{GP}}}
\newcommand{\gvf}{\mathcal{E}^{\mathrm{gv}}_{\omega}}
\newcommand{\gvfo}{\mathcal{E}^{\mathrm{gv}}_{\omega_0}}
\newcommand{\gvfopt}{\mathcal{E}^{\mathrm{gv}}_{\oopt}}
\newcommand{\gve}{E^{\mathrm{gv}}_{\omega}}
\newcommand{\giante}{E^{\mathrm{gv}}}
\newcommand{\gveopt}{E^{\mathrm{gv}}_{\oopt}}
\newcommand{\gveo}{E^{\mathrm{gv}}_{\omega_0}}
\newcommand{\gvmopt}{g_{\mathrm{opt}}}
\newcommand{\hgvf}{\tilde{\mathcal{E}}^{\mathrm{gv}}_{\omega}}
\newcommand{\hgve}{\tilde{E}^{\mathrm{gv}}_{\omega}}
\newcommand{\hgveo}{\tilde{E}^{\mathrm{gv}}_{\omega_0}}
\newcommand{\hgvm}{g_{\omega}}
\newcommand{\hgvmo}{g_{\omega_0}}
\newcommand{\hgvchem}{\tilde{\mu}^{\mathrm{gv}}_{\omega}}
\newcommand{\hgvchemo}{\tilde{\mu}^{\mathrm{gv}}_{\omega_0}}
\newcommand{\gpfi}{\mathcal{E}^{(i)}}
\newcommand{\Eg}{\mathcal{E}}
\newcommand{\Fg}{\mathcal{F}}
\newcommand{\dg}{{\rm deg}}
\newcommand{\curl}{{\rm curl}}
\newcommand{\set}{\mathcal{S}}
\newcommand{\tff}{\mathcal{E}^{\mathrm{TF}}}
\newcommand{\tfd}{\mathcal{A}^{\mathrm{TF}}}
\newcommand{\tfe}{E^{\mathrm{TF}}}
\newcommand{\tfm}{\rho^{\mathrm{TF}}}
\newcommand{\tfdom}{\mathscr{D}^{\mathrm{TF}}}
\newcommand{\rtf}{R_{\mathrm{h}}}
\newcommand{\rd}{\bar R}
\newcommand{\rb}{R_{>}}
\newcommand{\rt}{R_{<}}
\newcommand{\rtilde}{\tilde{R}}
\newcommand{\rmax}{R_{\mathrm{m}}}
\newcommand{\rmaxn}{R_{*}}
\newcommand{\rmaxgv}{\tilde{R}_{\mathrm{m}}}
\newcommand{\tfchem}{\mu^{\mathrm{TF}}}
\newcommand{\htff}{\tilde{\mathcal{E}}^{\mathrm{TF}}}
\newcommand{\htfe}{\tilde{E}^{\mathrm{TF}}}
\newcommand{\At}{\mathcal{A}_{\mathrm{bulk}}}
\newcommand{\ann}{\mathcal{A}}
\newcommand{\supp}{\mathrm{supp}}
\newcommand{\tfsupp}{\mathrm{supp}\lf(\tfm\ri)}
\newcommand{\optphtf}{\omega^{\mathrm{TF}}}
\newcommand{\oopt}{\omega_{{\rm opt}}}
\newcommand{\gaintf}{H^{\mathrm{TF}}}
\newcommand{\gain}{H}
\newcommand{\costtf}{F^{\mathrm{TF}}}
\newcommand{\const}{C}
\newcommand{\trial}{\Psi_{\mathrm{trial}}}
\newcommand{\latt}{\mathcal{L}}
\newcommand{\hlatt}{\hat{\mathcal{L}}}
\newcommand{\spac}{\ell}
\newcommand{\cell}{\mathcal{Q}}
\newcommand{\celli}{\mathcal{Q}^i_{}}
\newcommand{\hspac}{\hat{\ell}}
\newcommand{\hcell}{\hat{\mathcal{Q}}}
\newcommand{\hcelli}{\hat{\mathcal{Q}}^i_{}}
\newcommand{\magnp}{\vec{A}}
\newcommand{\rmagnp}{\vec{B}_{\omega}}
\newcommand{\ftrial}{f_{\mathrm{trial}}}
\newcommand{\half}{\hbox{$\frac12$}}
\newcommand{\thalf}{\hbox{$\frac32$}}
\newcommand{\fout}{F_{\mathrm{out}}}
\newcommand{\fin}{F_{\mathrm{in}}}
\newcommand{\Z}{\mathbb{Z}}
\newcommand{\N}{\mathbb{N}}
\newcommand{\D}{\mathcal{D}}
\newcommand{\F}{\mathcal{F}}
\newcommand{\E}{\mathcal{E}}
\newcommand{\A}{\mathcal{A}}
\newcommand{\B}{\mathcal{B}}
\newcommand{\Dt}{\tilde{\mathcal{D}}}
\newcommand{\Q}{\mathcal{Q}}
\newcommand{\OO}{\mathcal{O}}
\newcommand{\al}{\alpha}
\newcommand{\alt}{\tilde{\alpha}}
\newcommand{\ep}{\varepsilon}
\newcommand{\Om}{\Omega}
\newcommand{\om}{\omega}
\newcommand{\dd}{\partial}
\newtheorem{teo}{Theorem}[section]
\newtheorem{lem}{Lemma}[section]
\newtheorem{pro}{Proposition}[section]
\newtheorem{defi}{Definition}[section]
\newcounter{remark}[section]
\newenvironment{rem}{\stepcounter{remark} \vspace{0,1cm} \noindent \textit{Remark \thesection.\theremark}\,}{\vspace{0,2cm}}
\begin{document}

\markboth{\scriptsize{Critical Speeds in GP Theory -- CPRY -- \version}}{\scriptsize{Critical Speeds in the GP Theory -- CPRY  -- \version}}

\title{Critical Rotational Speeds in the Gross-Pitaevskii Theory on a Disc with 
Dirichlet Boundary Conditions}

\author{M. Correggi${}^{a}$\footnote{Supported by a grant ({\it assegno di ricerca}) of {\it Istituto Nazionale di Alta Matematica ``F. Severi''}.} , F. Pinsker${}^{b}$, N. Rougerie${}^{c}$, J. Yngvason${}^{d,e}$	\\
	\mbox{}	\\
	\normalsize\it ${}^{a}$ CIRM, Fondazione Bruno Kessler, Via Sommarive 14, 38123 Trento, Italy.	\\
	\normalsize\it ${}^{b}$ DAMTP, University of Cambridge, Wilbertforce Road, Cambridge CB3 0WA, United Kingdom.\\
	\normalsize\it ${}^{c}$ CNRS et Universit\'e de Cergy-Pontoise, D\'{e}partement de Math\'{e}matiques, CNRS-UMR 
8088, \\ \normalsize\it Site de Saint Martin, 2 avenue Adolphe Chauvin, 95302 Cergy-Pontoise Cedex. \\
	\normalsize\it ${}^{d}$ Fakult\"at f\"ur Physik, Universit{\"a}t Wien, Boltzmanngasse 5, 1090 Vienna, Austria.	\\
	\normalsize\it ${}^{e}$ Erwin Schr{\"o}dinger Institute for Mathematical Physics, Boltzmanngasse 9, 1090 Vienna, Austria.}	
\date{\version}

\maketitle

\begin{abstract} We study the two-dimensional Gross-Pitaevskii theory of a rotating Bose gas in a disc-shaped trap with Dirichlet boundary conditions, generalizing and extending previous results that were obtained under Neumann boundary conditions. The focus is on the energy asymptotics, vorticity and qualitative properties of the minimizers in the parameter range $|\log\eps|\ll\Omega\lesssim\eps^{-2}|\log\eps|^{-1}$ where $\Omega$ is the rotational velocity and the coupling parameter is written as $\eps^{-2}$ with $\eps\ll 1$. Three critical speeds can be identified. At $\Omega=\Ofirst\sim|\log\eps|$ vortices start to appear and for 
$|\log\eps|\ll \Omega< \Osec\sim \eps^{-1}$ the vorticity is uniformly distributed over the disc. For $\Omega\geq\Osec$ the centrifugal forces create a hole around the center with strongly depleted density. For $\Omega\ll \eps^{-2}|\log\eps|^{-1}$ vorticity is still uniformly distributed in an annulus containing the bulk of the density,  but at $\Omega=\Othird\sim\eps^{-2}|\log\eps|^{-1}$ there is a transition to a giant vortex state where the vorticity disappears from the bulk. The energy is then well approximated by a trial function that is an eigenfunction of angular momentum but one of our results is that the true minimizers break rotational symmetry in the whole parameter range, including the giant vortex phase.

	\vspace{0,2cm}

	MSC: 35Q55,47J30,76M23. PACS: 03.75.Hh, 47.32.-y, 47.37.+q.
	\vspace{0,2cm}
	
	Keywords: Bose-Einstein Condensates, Superfluidity, Vortices, Giant Vortex.
\end{abstract}

\tableofcontents

\section{Introduction and Main Results}

The Gross-Pitaevskii (GP) theory is the most commonly used model to describe the behavior of rotating superfluids. Since the nucleation of quantized vortices is a signature of the superfluid behavior it is of great interest to understand that phenomenon in the framework of the GP theory. A fascinating example of superfluid is provided by a cold Bose gas forming a Bose-Einstein condensate (BEC). The possibility to nucleate quantized vortices in a rotating BEC has triggered a lot of interest in the last decade, both experimental and theoretical (see the reviews \cite{Co,Fe1} and the monograph \cite{A} for further references).

Bose-Einstein condensates are trapped systems: A magneto-optical confinement is imposed on the atoms. When rotating such a system, the strength of the confinement can lead to two different behaviors. If the trapping potential increases quadratically with the distance from the rotation axis (`harmonic' trap), there exists a limiting angular velocity that one can impose to the gas. Any larger velocity would result in a centrifugal force stronger than the trapping force. The atoms would then be driven out of the trap. By contrast, a stronger confinement (`anharmonic' trap)  allows in principle an arbitrary angular velocity. In this paper we focus on the two-dimensional GP theory for a BEC with anharmonic confinement. 

Theoretical and numerical arguments have been proposed in the physics literature (see, e.g., \cite{FJS,FB,KB}) in favor of the existence of three critical speeds at which important phase transitions are expected to happen: 
\begin{itemize}
\item If the velocity $\Om$ is smaller than the first critical velocity $\Ofirst$, then there are no vortices in the condensate (`vortex-free state');
\item If $ \Omega $ is between $ \Ofirst $ and $ \Osec$, there is a hexagonal lattice of singly quantized vortices (`vortex-lattice state');
\item When $\Om$ is taken larger than $\Osec$, the centrifugal force becomes so important that it dips a hole in the center in the condensate. The annulus in which the mass is concentrated still supports a vortex lattice however (`vortex-lattice-plus-hole state'), until $\Om$ crosses the third threshold $\Othird$;
\item If $\Om $ is larger than $ \Othird$, all vortices retreat in the central low density hole, resulting in a `giant vortex' state. The central hole acts as a multiply quantized vortex with a large phase circulation.
\end{itemize}
In \cite{CDY1,CY,CRY,R} we have studied these phase transitions using as model case a BEC in a `flat' trap, that is a constant potential with hard walls. This is the `most anharmonic' confinement one can imagine and serves as an approximation for potentials used in experiments. Mathematically, it has the advantage that the rescaling of spatial variables as $\eps\to0$ and/or $\Omega\to\infty$ is avoided. The GP energy functional in the non-inertial rotating frame is defined as
\beq
    \label{GPf}
    \gpf[\Psi] : = \int_{\ba} \diff \vec{r} \: \left\{ |\nabla \Psi|^2 - 2 \Psi^* \vec{\Omega}  \cdot \vec{L} \Psi + \eps^{-2} |\Psi|^4 \right\}
\eeq
where we have denoted the physical angular velocity by $2\vec{\Omega}$, $\vec L=-i \rv \wedge \nabla $ is the angular momentum operator and $ \ba $ the unit two-dimensional disc. We have written the coupling constant as $\ep^{-2}$. The subsequent analysis (as well as the papers \cite{CDY1,CY,CRY,R}) is concerned about the `Thomas-Fermi' (or strongly interacting) limit where $\ep \rightarrow 0$.\\
The  simplest way to define the ground state of the system is to minimize the energy functional (\ref{GPf}) under the mass constraint
\[
\int_{\ba} \diff \rv \: \left| \Psi \right|^2 = 1
\] 
with no further conditions. This is the approach that has been considered in the previous papers \cite{CDY1,CY,CRY,R}, leading to Neumann boundary conditions on $\dd \ba$. We will refer to this situation as the `flat Neumann problem' in the sequel. 

There are, however, both physical and mathematical reasons for considering also the corresponding problem with a Dirichlet boundary condition, i.e., requiring the wave function to vanish on the boundary of the unit disc. Physically, this corresponds to a hard, repelling wall which is usually a closer approximation to real experimental situations than a `sticky' wall modeled by a Neumann boundary condition. The Dirichlet boundary condition can be formally implemented by replacing the flat trap with a smooth confining potential of the form $r^s$ and taking\footnote{This limit has to be taken with care, however, because it can not be interchanged with the asymptotic limit $\eps\Omega\to\infty$ we shall consider. This point will be discussed further in \cite{CPRY}.} $s\to\infty$. 

Mathematically, the new boundary condition is responsible for some new aspects requiring several modifications of the proofs. For one thing,  the density profile is no longer a monotonously increasing function of the radial variable and the position of the density maximum has to be precisely estimated. Furthermore,  energy estimates have to be refined to take the boundary effect into account, and a boundary estimate for the GP minimizer, that was an important ingredient in the proof of the giant vortex transition in \cite{CRY}, has to be replaced by a different approach. 

In addition to these adaptations to the new situation the present paper contains also substantial improvements of results proved previously in the Neumann case. These concern in particular the uniform distribution of vorticity in the bulk (Theorem \ref{uniform distribution}) and the rotational symmetry breaking (Theorem \ref{symmetry breaking}). Besides, the error term in our energy estimate in Theorem \ref{giant vortex teo} below is much smaller than the corresponding term in \cite[Theorem 1.2]{CRY}. This last improvement is due to the new method for estimating a potential function that we use to avoid the boundary estimate.

\vspace{0,3cm}

From now on the minimization of (\ref{GPf}) is considered on the domain   
\beq
	\label{minimization dom}
	\gpdom : = \lf\{ \Psi \in H^1_0(\ba) : \: \lf\| \Psi \ri\|_2 = 1 \ri\},
\eeq
where $H^1_0(\ba)$ is the Sobolev space of complex valued functions $\Psi$ on $\ba$ with $\int_\ba(|\Psi|^2+|\nabla\Psi|^2)<\infty$ and $\Psi(\rv)=0$ on $\partial\ba$. The ground state energy is thus defined as
\beq
	\label{GPe} 
	\gpe : = \inf_{\Psi \in \gpdom} \gpf[\Psi],
\eeq
and any corresponding minimizer is denoted by $ \gpm $. This case will be referred to as the `flat Dirichlet problem'. In the following we will often use a different form of the GP functional which can be obtained by introducing a vector potential, i.e.,
\beq
	\label{GPf magn}
	\gpf[\Psi] = \int_{\ba} \diff \rv \: \lf\{ \lf| \lf( \nabla - i \magnp \ri) \Psi \ri|^2 - \Omega^2 r^2 |\Psi|^2 + \eps^{-2} |\Psi|^4 \ri\},
\eeq
where 
\beq
	\magnp : = \vec\Omega\wedge \rv= \Omega r \vec{e}_{\vartheta}.
\eeq
Here $(r,\vartheta)$ are two-dimensional polar coordinates and $\vec{e}_{\vartheta}$ a unit vector in the angular direction.

\vspace{0,3cm}

The GP minimizer is in general not unique because vortices can break the rotational symmetry (see Section \ref{sec symm break}) but any minimizer satisfies in the open ball the variational equation (GP equation) 
\beq
	\label{GP variational}
	- \Delta \gpm - 2 \vec\Omega\cdot \vec L\, \gpm + 2 \eps^{-2} \lf| \gpm \ri|^2 \gpm = \chem \gpm,
\eeq
with additional Dirichlet conditions at the boundary, i.e., 
\beq
	\label{Dirichlet bc}
	\gpm(\rv) = 0 \:\:\: \mbox{on} \:\:\: \partial \ba.
\eeq
The chemical potential in \eqref{GP variational} is given by the normalization condition on $ \gpm $, i.e.,
\beq
	\label{chem}
	\chem : = \gpe + \frac{1}{\eps^2} \int_{\ba} \diff \rv \: \lf| \gpm \ri|^4.
\eeq

For such a model, variational arguments have been provided in \cite{FB} to support the following conjectures about the three critical speeds:
\begin{eqnarray}
\Ofirst &\propto& |\log \ep |,\\
\Osec &\propto& \ep ^{-1},\\
\Othird &\propto& \ep ^{-2} |\log \ep| ^{-1}.
\end{eqnarray} 
As for the behavior of the condensate close to $\Ofirst$, the centrifugal force is not strong enough for the specificity of the anharmonic confinement to be of importance. A consequence is that the analysis developed in \cite{IM,IM2} (see also \cite{AJR} for recent developments) for harmonic traps applies and leads to the rigorous estimate
\beq \label{estimOfirst}
 \Ofirst = |\log \ep| (1+o(1))
\eeq
when $\ep \rightarrow 0$. In this paper we aim at providing estimates of $\Osec$ and $\Othird$ and thus will assume that
\[
\Om \gg |\log \ep|,
\] 
i.e., we consider angular velocities strictly above $\Ofirst$. The situation is then very different from that in a harmonic trap because of the onset of strong centrifugal forces when $\Om$ approaches $\Osec$.

\vspace{0,3cm}

Our main results can be summarized as follows. We show that if $\Om \leq 2 (\sqrt{\pi}\ep) ^{-1}$, the condensate is disc-shaped, while for $\Om > 2 (\sqrt{\pi}\ep) ^{-1}$ the matter density is confined in an annulus along the boundary of $\ba$. In addition we prove that if
\[
|\log \ep| \ll \Om \ll \frac{1}{\ep ^2 |\log \ep|},
\]
there is a uniform distribution of vorticity in the bulk of the condensate. Although our estimates are not precise enough to show that there is a hexagonal lattice of vortices, these results support the qualitative picture provided in \cite{FB}. We deduce that when $\ep \rightarrow 0$
\beq \label{estimOsec}
\Osec = \frac{2}{\sqrt{\pi} \ep } (1+o(1)).
\eeq
We refer to Section \ref{sec unif distr} for the detailed statements of these results.

In Section \ref{sec giant vortex} we present our results about the third critical speed. We show that if $\Om = \Om_0 \ep ^{-2} |\log \ep| ^{-1}$ with $\Om_0 > 2(3 \pi) ^{-1}$, then there are no vortices in the bulk of the condensate. This provides an upper bound on the third critical speed
\beq \label{estimOthird}
\Othird \leq \frac{2}{3\pi \ep ^2 |\log \ep| } (1+o(1)).
\eeq
It should be noted right away that we do believe that this upper bound is optimal. This has been proved in \cite{R} in the flat Neumann case and the adaptation of the adequate tools to the flat Dirichlet case is possible but beyond the scope of this paper. We hope to come back to the regime $\Om \propto \Othird$ in the future.

We also remark that the estimates we obtain for the three critical speeds in the limit $\ep \rightarrow 0$ are the same in the flat Neumann and Dirichlet settings. In the cases of the first and second critical velocities this is plausible because the features that mark the onset of the transition (the first vortices and the appearance of the 'hole' respectively) occur far from the boundary of the trap. The independence of the third critical velocity of boundary conditions is less obvious but the main reason is that the maximum of the density is to leading order the same for both boundary conditions. 
\newline
\indent In the regime $\Om > \Othird$ a very natural question occurs about the distribution of vorticity in the central hole of low matter density: Is the phase of the condensate created by a single multiply quantized vortex at the center of the trap? We show that this is not the case in Section \ref{sec symm break} and, as a consequence, the rotational symmetry is always broken at the level of the ground state, even when $\Om > \Othird$. 

\vspace{0,3cm}

Before stating our results more precisely, we want to make a comparison with the 2D Ginzburg-Landau (GL) theory for superconductors in applied magnetic fields (see \cite{BBH,FH,SS2} for a mathematical presentation). The analogies between GP and GL theories have often been pointed out in the literature, with the external magnetic field playing in GL theory the role of the angular velocity in GP theory. We stress that our results in fact enlighten significant differences between the two theories. Whereas the first critical speed in GP theory can be seen as the equivalent of the first critical field in GL theory, the second and third critical speeds have little to do with the second and third critical fields of the GL theory. The difference can be seen both in the order of magnitudes of these quantities as functions of $\ep$ (which for a superconductor is the inverse of the GL parameter) and in the qualitative properties of the states appearing in the theories. In GP theory there is no equivalent of the normal state and there is no vortex-lattice-plus-hole state in GL theory. The giant vortex state of GP theory could be compared to the surface superconductivity state in GL theory, but the physics governing the onset of these two phases is quite different. The main reason for this different behavior is the combined influence of the centrifugal force and mass constraint in GP theory, two features that have no equivalent in GL theory.

\vspace{0,3cm}

We will now state our results rigorously. The core analysis that we present below is an adaptation of the techniques developed in \cite{CDY1,CY,CRY} for the Neumann case, but the Dirichlet condition leads to important novel aspects that we discuss in the sequel.

\subsection{The Regime $ |\log\eps| \ll \Omega \ll \eps^{-2} |\log\eps|^{-1} $: Uniform Distribution of Vorticity}
\label{sec unif distr}

Before stating our results we need to introduce some notation. We define the density functional
\beq
	\label{hGPf}
	\hgpf[f] : = \int_{\ba} \diff \rv  \lf\{ \lf| \nabla f \ri|^2 - \Omega^2 r^2 f^2 + \eps^{-2} f^4 \ri\},
\eeq
for any {\it real} function $ f $. The minimization is given by
\beq
	\label{hGPe}
	\hgpe : = \inf_{f \in \hgpdom} \hgpf [f],	\hspace{1,5cm}	\hgpdom : = \lf\{ f \in H_0^1(\ba) : \: f = f^*, \lf\| f \ri\|_2 = 1 \ri\}
\eeq
and $ \hgpm $ is the associated minimizer (see Proposition \ref{hGPf minimization}). In order to give a precise meaning to the expression `bulk of the condensate', we introduce the following Thomas-Fermi functional, obtained by dropping the first term in \eqref{GPf magn} or \eqref{hGPf}: 
\beq
	\label{TFf}
	\tff[\rho] : = \frac{1}{\eps^2} \int_{\ba} \diff \rv \lf\{ \rho^2 - \eps^2 \Omega^2 r^2 \rho \ri\},
\eeq
which is expected to provide the energy associated with the non-uniform density of the condensate. We refer to the Appendix for the properties of its ground state energy $\tfe$ and associated minimizer $\tfm$. Let us define
\beq \label{TFsupport}
\tfd : = \tfsupp.
\eeq
If $\Om \leq 2(\sqrt{\pi} \ep)^{-1}$, $\tfd = \ba$, while if $\Om > 2(\sqrt{\pi} \ep) ^{-1}$, $\tfd$ is an annulus of outer radius $1$ and inner radius $\rtf$ with $1-\rtf \propto (\ep \Om ) ^{-1}$. As we shall see below, $|\gpm| ^2$ is close to $\tfm$ and thus, if $\Om \gg \ep ^{-1}$, the mass of $\gpm$ is concentrated close to the boundary of $\ba$.  
\newline
Our result about the uniform distribution of vorticity in fact holds in a slightly smaller region than $\tfd$, namely the annulus
\beq \label{bulk}
\At : = \left\{ \vec{r} \in \ba : \: \rtilde \leq r \leq \rmax \right \}
\eeq
where, for a certain quantity $ \gamma := \gamma(\eps,\Omega) >0 $ such that $ \gamma = o(1) $ as $ \eps \to 0 $ (see Section \ref{sec distribution}, Equation \eqref{rtilde} for its precise definition),
\beq
	\label{Rbulkinf}
	\rtilde := 
	\begin{cases}
		0,									&	\mbox{if} \:\: \Omega \leq \bar{\Omega} \eps^{-1}, \: \mbox{with} \:\: \bar{\Omega} < 2/\sqrt{\pi},	\\
		\rtf + \gamma \eps^{-1} \Omega^{-1},	&	\mbox{if} \:\:  2 (\sqrt{\pi} \eps)^{-1} \lesssim \Omega \ll \eps^{-2} |\log\eps|^{-1},
		%\rtf^2 + \eps^{-1} \Omega^{-1} |\log(\eps^2 \Omega |\log\eps|)|^{-1},	&	\mbox{if} \:\:  (2/\sqrt{\pi}) \eps^{-1} \lesssim \Omega \ll \eps^{-2} |\log\eps|^{-1},
	\end{cases}
\eeq
and $\rmax$ is the position of the unique maximum of the density $\hgpm$ (see Proposition \ref{unique maximum}). It should be noted that $\rtilde$ is close to $\rtf$ and $\rmax$ is close to $1$ in such a way that
\bdm
	\lf| \tfd \setminus \At \ri| \ll \OO(\eps^{-1}\Omega^{-1}) = \lf| \tfd \ri|,
\edm
i.e., the domain $ \At $ tends to the support of the TF density as $ \eps \to 0 $. Also, thanks to the above estimate, we have 
\beq
\label{bulk estimate}
\int_{\At} \diff \rv \: |\gpm| ^2 = 1 - o(1),
\eeq
i.e., the mass is concentrated in $\At$. We refer to \eqref{lb max pos 1}, \eqref{lb max pos 2} and \eqref{improved rmax} below for precise estimates of $\rmax$.

%Before stating the Theorem we need to introduce some notation, since the uniform distribution of vorticity is proven in a domain which is slightly smaller than $ \tfd : = \tfsupp $ in particular if $ \Omega \gtrsim \eps^{-1} $. More precisely we consider the region where $ \rd \leq r \leq \rmax $, where $ \rmax $ is the position of the unique maximum of the density $ \hgpm $ (see Proposition \ref{unique maximum}, \eqref{lb max pos 1}, \eqref{lb max pos 2} and \eqref{improved rmax}) and $ \rd \simeq \rtf $ (see \eqref{rd} for its precise definition). It has to be stressed that the domain $ \ba_{\rmax} \setminus \ba_{\rd} $ tends to the support of the TF density as $ \eps \to 0 $, i.e., if $ \Omega \gtrsim \eps^{-1} $,
%\bdm
%	\lf| \tfd \setminus \lf(  \ba_{\rmax} \setminus \ba_{\rd} \ri) \ri| \ll \OO(\eps^{-1}\Omega^{-1}) = \lf| \tfd \ri|.
%\edm

\vspace{0,3cm}

We now state our result about the uniform distribution of vorticity. It is the analogue of \cite[Theorem 3.3]{CY}  but here we prove that the distribution of vorticity is uniform in the whole regime $|\log \ep| \ll \Om \ll \ep ^{-2} |\log \ep| ^{-1}$ whereas in \cite{CY} this was proved only for $\Om \lesssim \ep^{-1}$.
	
	\begin{teo}[\textbf{Uniform distribution of vorticity}]
		\label{uniform distribution}
		\mbox{}	\\
		Let $ \gpm$ be any GP minimizer and $ \eps > 0 $ sufficiently small. If $|\log \ep| \ll \Om \ll \ep ^{-2}|\log \ep| ^{-1}$, there exists a finite family of disjoint balls\footnote{Throughout the whole paper the notation $ \B(\rv,\varrho) $ stands for a ball of radius $ \varrho $ centered at $ \rv $, whereas $ \B(R) $ is a ball with radius $ R $ centered at the origin.} $ \lf\{ \ba_i \ri\} : = \{ \ba(\rvi, \varrho_i) \} \subset \At $ such that 
		\ben
			\item	$ \varrho_i \leq \OO(\Omega^{-1/2}) $, $ \sum \varrho_i \leq \OO(\Omega^{1/2}) $ and $ \sum \varrho_i^2 \ll (1 + \eps\Omega)^{-1} $, 
			\item	 $ \lf| \gpm \ri|^2 \geq \const \gamma (1 + \eps \Omega)  $ on $ \partial \ba_i $ for some $ C > 0 $.
		\een
		Moreover, denoting by $ d_{i,\eps} $ the winding number of $ |\gpm|^{-1} \gpm $ on $ \partial \ba_i $ and introducing the measure
		\beq
			\label{measure}
			\nu : = \frac{2\pi}{\Omega} \sum d_{i,\eps} \delta\lf(\rv - \rv_{i,\eps} \ri),
		\eeq
		then, for any family of sets $ \set \subset \At $ such that $ |\set| \gg \Omega^{-1} |\log(\eps^2 \Omega |\log\eps|)|^2 $ as $\ep \rightarrow 0$,
		% and $ \set \subset \ba_{\rmax} \cap \ba_{\rd} $ if $ \Omega \gtrsim (2/\sqrt{\pi}) \eps^{-1} $,
		\beq
		\label{measure convergence}
			\frac{\nu(\set)}{|\set|} \underset{\eps \to 0}{\longrightarrow} 1.
		\eeq
	\end{teo}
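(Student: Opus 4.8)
My strategy is to localize the energy on a grid of cells covering the bulk annulus $\At$, extract vortex balls cell by cell using a Jacobian/vortex-ball construction, and then show that the total winding number in any sufficiently large region is forced to match the ``natural'' number $\Omega |\set| / (2\pi)$ dictated by the rotation. The quantity $\gamma(1+\eps\Omega)$ is essentially the lower bound on $|\gpm|^2$ in the bulk, so condition (2) says the balls sit where the density is of the expected magnitude, and the cost of a winding number $d_i$ on $\partial\ba_i$ will be $\sim \pi d_i^2 \gamma(1+\eps\Omega)|\log\eps|$ in kinetic energy, to be balanced against the rotational gain $\sim 2\pi\Omega d_i \varrho_i^2 \cdot (\text{density})$.

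The key steps, in order. First I would establish the pointwise lower bound $|\gpm|^2 \gtrsim \gamma(1+\eps\Omega)$ away from the vortex cores, using the GP variational equation \eqref{GP variational}, the Dirichlet data \eqref{Dirichlet bc}, and the comparison $|\gpm|^2\approx\tfm$ on $\At$; this is what makes the density positive enough to run the standard vortex machinery. Second I would split $\gpm = |\gpm| e^{i\varphi}$ locally and write the kinetic energy in the magnetic form \eqref{GPf magn}, reducing the problem to a Ginzburg--Landau-type functional with effective parameter whose logarithmic weight produces the balls $\ba_i$ with $\varrho_i\leq\OO(\Omega^{-1/2})$; the radius bound and the sum bounds $\sum\varrho_i\leq\OO(\Omega^{1/2})$, $\sum\varrho_i^2\ll(1+\eps\Omega)^{-1}$ follow from the a priori energy bound $\gpe \leq \hgpe$ combined with the energy cost $\sim|\log\eps|$ per unit of vorticity, so that the total number of vortices is $\OO(\Omega)$. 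Third, and this is the heart of the argument, I would prove the measure convergence \eqref{measure convergence}: for a test set $\set$ I compare the energy of $\gpm$ with that of a trial state carrying the prescribed circulation, using the fact that $\curl\magnp = 2\Omega$. A lower bound forces $\sum_{i:\rv_i\in\set} d_{i,\eps}$ to be at least $(1-o(1))\Omega|\set|/(2\pi)$, while the global energy upper bound prevents it from exceeding this value on average; the mismatch in both directions is controlled by the error term $\Omega^{-1}|\log(\eps^2\Omega|\log\eps|)|^2$, which is exactly why $|\set|$ must be large compared with this threshold for \eqref{measure convergence} to hold.

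\textbf{The main obstacle.} The delicate point is the two-sided matching of circulation to area uniformly over all admissible $\set$, not merely on $\At$ as a whole. Summing local lower bounds over cells is straightforward, but the upper bound requires that no region can carry \emph{less} vorticity than average without another region carrying \emph{more}, and controlling this redistribution demands a careful energy decomposition in which the cross (interaction) terms between the vortex contributions and the background phase are shown to be negligible. Concretely, I expect the hard work to lie in bounding the difference between the true phase circulation and $\int_{\set}\curl\magnp/(2\pi) = \Omega|\set|/\pi$ by an error that is both additive in the cells and summable to $o(|\set|)$ once $|\set|\gg\Omega^{-1}|\log(\eps^2\Omega|\log\eps|)|^2$. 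The Dirichlet boundary condition complicates the step near $r=\rmax$, since the non-monotone density profile means the standard cell decomposition must be cut off precisely at $\rmax$ rather than at the boundary; handling the cells straddling $\rmax$ without losing the sharp constant in \eqref{measure convergence} is the technically demanding part.
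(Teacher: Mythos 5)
Your plan follows essentially the same route as the paper: decouple the profile energy from a reduced Ginzburg--Landau-type energy, localize the matching upper and lower energy bounds to a lattice of cells of area $\Omega^{-1}|\log(\eps^2\Omega|\log\eps|)|^2$ (whence the threshold on $|\set|$), construct vortex balls and control the degrees two-sidedly in the good cells via the Sandier--Serfaty machinery, and discard the few bad cells. The only caveat is that the paper gets the density lower bound not pointwise on $|\gpm|^2$ away from cores (which would be circular before the balls are built) but on the fixed radial profile $g^2\approx\tfm$ via the decomposition $\gpm = g\,u$, with $|u|>1/2$ on $\partial\ba_i$ emerging as an output of the ball construction; your condition (2) then follows rather than being a prerequisite.
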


	\begin{rem}{\it (Distribution of vorticity)}
		\mbox{}	\\
		The result proven in the above Theorem implies that the vorticity measure converges after a suitable rescaling to the Lebesgue measure, i.e., the vorticity is uniformly distributed. However such a statement is meaningful only for angular velocities at most of order $ \eps^{-1} $, when the TF support $ \tfd $ can be bounded independently of $ \eps $. On the opposite if $ \Omega \gg \eps^{-1} $, $ \tfd $ shrinks and its Lebesgue measure converges to 0 as $ \eps^{-1} \Omega^{-1} $. To obtain an interesting statement one has therefore to allow the domain $\mathcal S$ to depend on $\eps$ with $|\mathcal S|\to 0$ as $\eps\to 0$.
	\end{rem}
	
	\begin{rem}{\it (Conditions on $ \set $)}
		\mbox{}	\\
		We remark that the lower bound on the measure of the set $ \set $, i.e., $ |\set| \gg  \Omega^{-1} |\log(\eps^2 \Omega |\log\eps|)|^2 $, is important, even though not optimal, as it will be clear in the proof: In order to localize the energy bounds to suitable lattice cells, one has to reject a certain number of \lq bad cells' where nothing can be said about the vorticity of $ \gpm $. However since the number of bad cells is much smaller than the total number of cells, this has no effect on the final statement provided the measure of $ \set $ is much larger than the area of a single cell, i.e., $ \OO(\Omega^{-1} |\log(\eps^2 \Omega |\log\eps|)|^2) $. A similar effect occurs in \cite[Theorem 3.3]{CY}, where the stronger condition $ |\set| > C $ is assumed.
	\end{rem}

	\begin{rem}{\it (Vortex balls)}
		\mbox{}	\\
		The balls contained in the family $ \{ \B_i \} $ are not necessarily vortex cores in the sense that each one might contain a large number of vortices. However the conditions stated at point 1 of the above Theorem \ref{uniform distribution} have important consequences on the properties of the family. For instance, if $ \Omega \gg \eps^{-1} $, the last one, i.e., $ \sum \varrho_i^2 \ll  \eps^{-1}\Omega^{-1} $, guarantees that the area covered by balls is smaller than the area of the annulus $ \At $ where the bulk of the condensate is contained. At the same time the other two conditions imply that the radius of any ball in the family is at most $ \OO(\Omega^{-1/2}) $ and their number can not be too large: Assuming that for each ball $ \varrho_i \sim \Omega^{-1/2} $, the second condition would yield a number of balls of order at most $ \Omega $, which is expected to be close to the total winding number of any GP minimizer.
	\end{rem}

An important difference between the flat Neumann and the flat Dirichlet problems can be seen directly from the energy asymptotics. Indeed, in the flat Neumann case (see \cite[Theorem 3.2]{CY}) the energy is composed of the contribution of the TF profile (leading order) and the contribution of a regular vortex lattice (subleading order). In the flat Dirichlet case the radial kinetic energy arising from the vanishing of the GP minimizer on $\partial\ba$ might be larger (see Remark 1.4 below) than the contribution of the vortex lattice. As a result the functional (\ref{hGPe}) that includes this radial kinetic energy plays a key role in the energy asymptotics of the problem:

	\begin{teo}[\textbf{Ground state energy asymptotics}]
		\label{gs asympt}
		\mbox{}	\\
		As $ \eps \to 0 $,
		\beq
			\label{gs en asympt 1}
			\gpe = \hgpe + \Omega |\log ( \eps^2\Omega )| (1 + o(1)),
		\eeq
		if $ |\log \ep| \ll \Omega \lesssim \eps^{-1} $, and
		\beq
			\label{gs en asympt 2}
			\gpe = \hgpe + \Omega |\log\eps| (1 + o(1)),
		\eeq
		if $ \eps^{-1} \lesssim \Omega \ll \eps^{-2} |\log\eps|^{-1} $.
	\end{teo}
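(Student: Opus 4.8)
The plan is to prove matching upper and lower bounds for $\gpe - \hgpe$, splitting the analysis according to whether $\Omega \lesssim \eps^{-1}$ or $\Omega \gtrsim \eps^{-1}$. The guiding heuristic is that $\hgpe$ captures the density profile (the radial/TF part of the energy together with the cost of vanishing on $\partial\ba$), while the remaining $\Omega|\log(\cdot)|$ term is the energy of an optimal vortex lattice distributed uniformly over the bulk $\At$, in accordance with Theorem \ref{uniform distribution}.

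Let me think about what the two regimes are really doing. The crossover at $\Omega \sim \eps^{-1}$ is exactly where the hole appears, $1 - \rtf \propto (\eps\Omega)^{-1}$. For $\Omega \lesssim \eps^{-1}$ the bulk has size $O(1)$, whereas for $\Omega \gtrsim \eps^{-1}$ it is a thin annulus of width $(\eps\Omega)^{-1}$. The vortex energy per unit winding is of order $\log$ of the ratio (inter-vortex spacing)/(vortex core size). The inter-vortex spacing is $\Omega^{-1/2}$ (since vortex density $\sim \Omega$), and the core size is set by the healing length, which relative to the local density gives the argument $\eps^2\Omega$ in one regime and $\eps$ in the other once the density saturates near its maximum. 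That explains the two logarithms $|\log(\eps^2\Omega)|$ and $|\log\eps|$, and tells me the whole proof is about getting this logarithmic factor right.

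My proof would proceed as follows.

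\section*{Proof proposal}

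\textbf{Upper bound.} First I would construct a trial function of the form $\trial = \hgpm \, \phi$, where $\hgpm$ is the real minimizer of \eqref{hGPe} and $\phi$ is a unit-modulus phase factor carrying a regular lattice of singly quantized vortices placed in the bulk $\At$. The density profile $\hgpm$ contributes exactly $\hgpe$ to the energy, up to the cross terms coming from the vortex phase. Inserting $\trial$ into \eqref{GPf magn}, the key computation is to show that the phase gradient and the magnetic term $-\Omega^2 r^2$ combine, after optimizing the vortex lattice spacing against the local density $\hgpm^2 \approx \tfm$, to yield $\Omega\,|\log(\eps^2\Omega)|(1+o(1))$ when $\Omega\lesssim\eps^{-1}$ and $\Omega\,|\log\eps|(1+o(1))$ when $\Omega\gtrsim\eps^{-1}$. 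Concretely I would tile $\At$ by cells of size $\spac\sim\Omega^{-1/2}$, place one vortex of a standard profile in each, and estimate the renormalized vortex energy cell by cell; the logarithmic factor emerges as $\log(\spac/(\text{core size}))$, where the core size is the local healing length $\eps/\hgpm$. The subtlety here, specific to the Dirichlet case, is that the radial kinetic energy of $\hgpm$ near $\partial\ba$ is already absorbed into $\hgpe$, so I must verify that the vortex contribution is computed against the correct local density and that cells too close to the boundary (where $\hgpm$ is small) are handled without spoiling the bound.

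\textbf{Lower bound.} This is the harder half. I would start from any true minimizer $\gpm$ and write $\gpm = \hgpm \, u$ on the region where $\hgpm > 0$, so that the energy splits as in a Cauchy--Schwarz/completion-of-square argument into $\hgpe$ plus a positive localized functional of $u$ weighted by $\hgpm^2$. The variational equation for $\hgpm$ (the analogue of \eqref{GP variational} without the rotation term) should kill the cross terms, leaving a reduced functional of the form $\int \hgpm^2 |(\nabla - i\magnp)u|^2 + \eps^{-2}\hgpm^4(1-|u|^2)^2$ to be bounded below by $\Omega|\log(\cdot)|(1+o(1))$. The main tool is a vortex-ball construction (Jerrard/Sandier lower bounds for the Ginzburg--Landau functional, in the weighted form used in \cite{CY,CRY}): one extracts disjoint balls carrying the winding numbers $d_{i,\eps}$ of Theorem \ref{uniform distribution} and bounds the weighted Dirichlet energy from below by $\sum_i \pi\, \hgpm^2(\rvi)\, |d_{i,\eps}|\, \log(\cdot)$. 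Summing over the uniformly distributed vortices and using that the total winding is $\approx\Omega\,|\At|/(2\pi)\cdot 2\pi$ recovers the leading $\Omega\,|\log|$ term.

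\textbf{Main obstacle.} I expect the decisive difficulty to be pinning down the correct logarithmic factor in the lower bound — i.e. distinguishing $|\log(\eps^2\Omega)|$ from $|\log\eps|$ — uniformly across the thin-annulus regime $\Omega\gtrsim\eps^{-1}$. This requires precise control of $\hgpm$ (hence of $\tfm$ and the position $\rmax$ of its maximum) throughout $\At$, since the weight $\hgpm^2$ in the vortex-ball estimate varies across the annulus and its value near $\rmax$ fixes which logarithm appears. The Dirichlet boundary condition makes this worse than in \cite{CY} because $\hgpm$ is no longer monotone, so the comparison of the local healing length with the vortex spacing, and the treatment of the thin boundary layer where $\hgpm$ drops to zero, must be done carefully; this is presumably why the paper singles out the refined estimates on $\rmax$ and the new method for the potential function. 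The localization of the energy onto good lattice cells, discarding the $o(\Omega)$ bad cells as in Theorem \ref{uniform distribution}, is the mechanism that makes the two bounds match to leading order.
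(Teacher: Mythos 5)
Your proposal follows essentially the same route as the paper: the upper bound via the trial function $c\,\hgpm\,\xi\,\Phi$ carrying a regular vortex lattice with cell area $\pi\Omega^{-1}$, a cut-off at the vortex cores optimized to produce the two logarithms, and a Riemann-sum replacement of $\hgpm^2$ by $\tfm$; and the lower bound via the Lassoued--Mironescu splitting $\gpm=\hgpm u$, restriction to a subdomain where the pointwise estimates on $\hgpm$ hold, localization to lattice cells, and the weighted Ginzburg--Landau lower bound of Sandier--Serfaty on each cell. The one caveat is that in the lower bound you cannot use "the total winding is $\approx\Omega|\At|$" as an input (that would be circular with Theorem \ref{uniform distribution}, whose proof relies on these energy bounds); instead the cell-wise estimate $\E^{(i)}[u]\geq\Omega\hspac^2|\log(\min[\eps,\eps^2\Omega])|(1-o(1))$, valid for an arbitrary configuration because it already balances the vortex cost against the rotational gain, is what closes the argument in the paper.
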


	\begin{rem}{\it (Composition of the energy)}
		\mbox{}	\\
		The leading order term in the GP energy asymptotics is given by the energy $ \hgpe $ which contains the kinetic contribution of the density profile (see \eqref{hGPf}), i.e., one can decompose $ \hgpe $ as $ \tfe + \OO(\eps^{-1}) + \OO(\eps^{1/2} \Omega^{3/2}) $, where the first remainder is the most relevant in the regime $ \Omega \lesssim \eps^{-1} $ and the second becomes dominant for angular velocities much larger than $ \eps^{- 1} $. 
		\newline
		The kinetic energy of the density profile can in turn be decomposed into the energy associated with Dirichlet conditions $ \propto \eps^{-1} + \eps^{1/2}\Omega^{3/2} $ and the one due to the inhomogeneity of the profile $ \sim \sqrt{\tfm} $, which is $ \OO(1) + \OO(\eps^{2}\Omega^2|\log\eps|) $ (see Remark 2.1). The first contribution dominates for any angular velocity $ \Omega \ll \eps^{-3} |\log\eps|^{-2} $ and this is why it is the only one appearing in \eqref{gs en asympt 1} and \eqref{gs en asympt 2}. 
		\newline
		Note also that the kinetic energy due to Dirichlet boundary conditions is, in general, much larger than the vortex energy contribution, i.e., the second term in (\ref{gs en asympt 1}) and (\ref{gs en asympt 2}), except in the narrow regime
\bdm
	\eps^{-1} |\log(\eps^2\Omega)|^{-1} \ll \Omega \ll \eps^{-1} |\log\eps|,
\edm
where the latter becomes predominant.
	\end{rem}
	
An important consequence of the above energy asymptotics is that we always have (see Proposition \ref{GPmin estimates})
\beq \label{density estimate}
\left\Vert |\gpm| ^2 - \tfm \right\Vert_{L ^2(\ba)} = o(1) \ll \left\Vert \tfm \right\Vert_{L ^2 (\ba)}
\eeq	
which allows to deduce
\beq \label{bulk estimate 2}
\int_{\tfd} \diff \rv \: |\gpm| ^2 = 1 - o(1).
\eeq
This implies that if $\Om > 2 ( \sqrt{\pi} \ep) ^{-1}$, the mass of $\gpm$ is concentrated in an annulus, marking the transition to the vortex-lattice-plus-hole state. We thus have
\beq \label{result Osec}
\Osec = \frac{2}{\sqrt{\pi}\ep} (1+o(1)).
\eeq
Note that we actually prove stronger results than \eqref{bulk estimate} and \eqref{bulk estimate 2}. If $ \Omega > \Osec $, any GP minimizer is in fact exponentially small in the central hole, minus possibly a very thin layer close to $r=\rtf$ (see Proposition \ref{GPm exponential smallness}).

\subsection{The Regime $ \Omega \sim \eps^{-2} |\log\eps|^{-1} $: Emergence of the Giant Vortex}
\label{sec giant vortex}

When the angular velocity reaches the asymptotic regime $ \Omega \sim \eps^{-2} |\log\eps|^{-1} $ a transition in the GP ground state takes place above a certain threshold:  Vortices are expelled from the essential support of any GP minimizer $ \gpm $. The density is concentrated in a shrinking annulus where such a wave function is vortex free. Anticipating this transition we shall throughout this section assume that
\beq
	\label{angular velocity gv}
	\Omega = \frac{\Omega_0}{\eps^2 |\log\eps|},
\eeq
for some constant $ \Omega_0 > 0 $.
\newline
The bulk of the condensate has to be defined differently in this regime: We set
\beq
	\label{annulus}
	 \At : = \lf\{ \rv \in \ba : \: \rb \leq r \leq 1 - \eps^{3/2}|\log\eps|^2 \ri\}
\eeq
where
\beq
	\label{rd gv}
	\rb : = \rtf + \eps |\log\eps|^{-1}.
\eeq
%and 
%\beq \label{rmax gv}
%\rmax \geq  1 - C \eps^{3/2} |\log\eps|^2
%\eeq 
The main result in this regime is contained in the following

	\begin{teo}[\textbf{Absence of vortices in the bulk}]
		\label{no vortices}
		\mbox{}	\\
		If the angular velocity is given by \eqref{angular velocity gv} with $ \Omega_0 > 2(3\pi)^{-1} $, then no GP minimizer has a zero inside $ \At $ if $ \eps $ is small enough.
		\newline
		More precisely, for any $ \rv \in \At $,
		\beq
			\label{pointwise est gv}
			\lf| \lf| \gpm(\rv) \ri|^2 - \tfm(r) \ri| \leq \OO(\eps^{-3/4} |\log\eps|^{2}) \ll \tfm(r).
		\eeq
	\end{teo}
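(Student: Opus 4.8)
The pointwise estimate is the substantial claim: since the theorem's own inequality $\OO(\eps^{-3/4}|\log\eps|^2)\ll\tfm(r)$ holds throughout $\At$, a two-sided control of $|\gpm|^2$ by $\tfm$ forces $|\gpm|^2>0$ on $\At$, so the absence of zeros is a corollary. My plan is to obtain this control by comparing any minimizer with the density profile $g$ (the positive minimizer of \eqref{hGPe}) and the TF profile $\tfm$. The starting point is the exact decoupling obtained by writing $\gpm=g\,u$ (licit since $g>0$ in the open disc): using the variational equation for $g$ and one integration by parts,
\beq
  \gpf[g\,u]=\hgpe+\int_{\ba} g^2\lf|\lf(\nabla-i\magnp\ri)u\ri|^2+\frac{1}{\eps^2}\int_{\ba} g^4\lf(1-|u|^2\ri)^2=:\hgpe+\mathcal{E}_g[u].
\eeq
Evaluating at a minimizer and invoking Theorem \ref{gs asympt} (in the form \eqref{gs en asympt 2}, valid since here $\Omega\gg\eps^{-1}$) pins the energy budget $\mathcal{E}_g[u]=\gpe-\hgpe=\Omega|\log\eps|(1+o(1))$. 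This budget is produced by the optimal phase winding in the central hole \emph{outside} $\At$; the analysis below is localized to $\At$ and aims to show that any genuine winding of $u$ there is strictly costly.

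The core is a cost/gain balance for vortices of $u$ inside $\At$. I would first run a vortex ball construction of Jerrard--Sandier type on $u$, producing disjoint balls $\{\ba(a_i,r_i)\}$ of small total radius for which the positive Dirichlet part is bounded below by the cost
\beq
  \int g^2\lf|\nabla u\ri|^2\geq\sum_i\pi\,g(a_i)^2\,|d_i|\lf(|\log\eps|-\OO(\log|\log\eps|)\ri),\qquad d_i:=\dg\lf(u,\partial\ba(a_i,r_i)\ri).
\eeq
The gain comes from the cross term $-2\int g^2\,\magnp\cdot(iu,\nabla u)$: introducing the potential (``cost'') function $F$ defined by $\nabla^\perp F=g^2\magnp$ and integrating by parts turns this term into $-2\int F\,\diff\mu_u$ against the vorticity $\mu_u=\curl(iu,\nabla u)\approx2\pi\sum_i d_i\,\delta_{a_i}$, i.e. a gain $\approx-4\pi\sum_i d_i\,F(a_i)$. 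Combining cost and gain, each ball contributes to $\mathcal{E}_g[u]$ a net amount proportional to $\big(\pi g(a_i)^2|\log\eps|-4\pi F(a_i)\big)|d_i|$, and the threshold $\Omega_0>2(3\pi)^{-1}$ is exactly the condition making this quantity strictly positive everywhere on $\At$.

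The hard part is the sharp estimation of $F$ on $\At$, uniformly up to the two thin layers $r\simeq\rb$ and $r\simeq1-\eps^{3/2}|\log\eps|^2$, and crucially \emph{without} a pointwise smallness estimate for $\gpm$ near $\partial\ba$ --- the ingredient used in the Neumann case that is unavailable here. For the Dirichlet problem the natural device is to anchor $F$ at $r=1$, so that the boundary term generated by the integration by parts is harmless; it then remains to prove, from the precise behaviour of $g^2\approx\tfm$ near its maximum and near the hole edge, that $4\pi F(r)<\pi g(r)^2|\log\eps|\,(1-o(1))$ on $\At$ exactly when $\Omega_0>2(3\pi)^{-1}$. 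This is where the constant $2(3\pi)^{-1}$ is produced and where essentially all the analytic effort concentrates.

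Granting the balance, the conclusion has two steps. For a minimizer $\mathcal{E}_g[u]$ is pinned to its budget, while a giant-vortex trial function attains the same budget with no vortices in $\At$; hence every degree $d_i$ in $\At$ must vanish and $u$ is zero-free there. For the pointwise estimate I would then combine: (i) the $L^2$-smallness of $1-|u|^2$ extracted from $\eps^{-2}\int g^4(1-|u|^2)^2\le\mathcal{E}_g[u]$; (ii) a one-sided elliptic bound --- Kato's inequality applied to the GP equation gives $\Delta|\gpm|\ge(2\eps^{-2}|\gpm|^2-\Omega^2r^2-\chem)|\gpm|$, whence $|\gpm|^2\lesssim\tfm$ by the maximum principle; and (iii) the absence of vortices just established, which rules out local depletion of the density and furnishes the matching lower bound. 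A Gagliardo--Nirenberg interpolation between the quantitative $L^2$ control and an elliptic gradient bound for $|\gpm|^2$, together with the closeness of $g^2$ to $\tfm$, then produces the stated rate $\OO(\eps^{-3/4}|\log\eps|^2)$ on $\At$.
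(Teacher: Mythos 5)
Your overall architecture (decouple the energy with a positive radial profile, run a vortex-ball and jacobian analysis against a potential function $F$, and convert the resulting integral bounds into a pointwise statement) is the right one, but the decomposition you start from is the wrong one for this regime, and the gap is not cosmetic. You write $\gpm = g\,u$ with $g$ the minimizer of $\hgpf$ and keep the full vector potential $\magnp = \Omega r\,\vec{e}_{\vartheta}$ in the reduced energy. Then your potential function satisfies $F'(r)\sim g^2(r)\,\Omega r$, so however you anchor it, $|F|$ reaches values of order $\Omega\int \tfm \sim \Omega \sim \eps^{-2}|\log\eps|^{-1}$ somewhere in $\At$ (e.g.\ at the inner edge $r\simeq\rb$ if you anchor at $r=1$), while the vortex cost is only $\frac12 g^2|\log\eps|\sim\eps^{-1}$. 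Your balance $\pi g^2(a_i)|\log\eps| - 4\pi F(a_i)$ is therefore hugely negative on most of the bulk: with this $F$ vortices are always favorable, and no choice of $\Omega_0$ produces the threshold $2(3\pi)^{-1}$. This is not an accident --- it reflects the fact that the minimizer really does carry a macroscopic circulation $\approx\Omega$; what must be shown is that this circulation resides in the hole. The paper's proof therefore first extracts the giant-vortex phase, writing $\gpm = g_{\omega_0}(r)\,u\,e^{i([\Omega]-\omega_0)\vartheta}$ on the annulus $\ann$, with $g_{\omega_0}$ the minimizer of the giant-vortex functional and $\omega_0$ the optimal phase; the residual field $\vec{B}_{\omega_0} = (\Omega r - ([\Omega]-\omega_0)r^{-1})\vec{e}_{\vartheta}$ is then $\OO(\eps^{-1})$ on $\ann$, the associated potential obeys $|F|\lesssim \eps^{-1}|r-\rt|\,g^2$, and it is this $F$ --- further split as $F=\fin+\fout$ with $\fin(1)=0$ to kill the uncontrollable boundary term; your ``anchor at $r=1$'' does not achieve this, since shifting $F$ by the constant $F(1)$ changes nothing in the balance --- that loses to $\frac12 g^2|\log\eps|$ precisely when $\Omega_0>2(3\pi)^{-1}$. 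Relatedly, your energy budget is mis-identified: Theorem \ref{gs asympt} is proved only for $\Omega\ll\eps^{-2}|\log\eps|^{-1}$ and cannot be invoked at $\Omega=\Omega_0\eps^{-2}|\log\eps|^{-1}$; the correct budget is $\gpe\le\hgveo+\OO(\eps^\infty)$, i.e.\ $\E[u]\le\OO(\eps^{\infty})$, obtained from the giant-vortex trial function.

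A second, smaller gap is the passage from integral to pointwise control. The paper does not use Kato's inequality or Gagliardo--Nirenberg; it bounds $\F[u]\le\OO(|\log\eps|^2(\log|\log\eps|)^{-2})$, derives $\|\nabla u\|_{L^{\infty}(\At)}\le C\eps^{-3/2}|\log\eps|^{3/2}$ from the equation satisfied by $u$, and then argues that if $\lf||u|-1\ri|$ were $\ge\eps^{1/4}|\log\eps|^3$ at one point of $\At$ it would remain so on a ball of radius $\propto\eps^{7/4}|\log\eps|^{3/2}$, forcing $\F[u]\ge C|\log\eps|^3$, a contradiction. This yields \eqref{pointwise est gv} directly and makes the absence of zeros immediate. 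Your step (iii), which uses the ``absence of vortices'' (vanishing degrees) to obtain a pointwise lower bound on the density, is circular as stated: vanishing winding numbers alone do not exclude zeros of $u$, and the theorem's first assertion is exactly what is being proved.
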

	
	\begin{rem}{\it (Bulk of the condensate)}
		\mbox{}	\\
		As the notation indicates, the domain $ \At $ contains the bulk of the condensate: Using the explicit expression \eqref{TFm} of $ \tfm(r) $, one can easily verify that
		\beq
			\lf\| \tfm \ri\|_{L^2(\At)} = 1 - \OO(|\log\eps|^{-4}),
		\eeq
		which implies by \eqref{pointwise est gv} that the same estimate holds true also for $ |\gpm|^2 $.
	\end{rem}

A consequence of this result is the estimate
\beq \label{result Othird}
\Othird \leq \frac{2}{3\pi \ep ^2 |\log \ep|} (1 + o(1)).
\eeq
As already noted, we believe that this upper bound is optimal, i.e., we actually have
\[
\Othird = \frac{2}{3\pi \ep ^2 |\log \ep|}(1 + o(1)).
\]
The proof of this conjecture could use the tools of \cite{R} but we leave this aside for the present. 

\vspace{0,3cm}

The theorem above is based on a comparison of a minimizer with a giant vortex wave function of the form
	\bdm		
	f(r) \exp\lf\{i \lf( [\Omega] - \omega \ri) \vartheta\ri\},
	\edm
	where $ [ \:\:\cdot\:\: ] $ stands for the integer part and $ \omega \in \Z $ is some additional phase. Therefore we introduce a density functional
	\bml{
		\label{gvf}
		\gvf[f] : = \gpf\lf[ f(r) \exp\lf\{i \lf( [\Omega] - \omega \ri) \vartheta\ri\} \ri] = \\
		\int_{\ba} \diff \rv \lf\{ \lf| \nabla f \ri|^2 + ( [\Omega] - \omega)^2 r^{-2} f^2 - 2 ( [\Omega] - \omega) \Omega f^2 + \eps^{-2} f^4 \ri\} =	\\
		\int_{\ba} \diff \rv \lf\{ \lf| \nabla f \ri|^2 + B_{\omega}^2 f^2 - \Omega^2 r^2 f^2 + \eps^{-2} f^4 \ri\},
	}
	where $ f \in \gpdom $ is {\it real-valued} and
	\beq
		\label{rmagnp}
		\rmagnp(r) : = \lf( \Omega r - ([\Omega] - \omega) r^{-1} \ri) \vec{e}_{\vartheta}.
	\eeq
	We also set
	\beq
		\label{gve}
		\gve := \inf_{f \in \gpdom, f= f^*} \gvf[f].
	\eeq
	By simply testing the GP functional on a trial function of the form above, one immediately obtains the upper bound
	\beq
		\label{giante}
		\gpe \leq \giante : = \inf_{\omega \in \Z} \gve.
	\eeq
	In the following Theorem we prove that the r.h.s. of the expression above gives precisely the leading order term in the asymptotic expansion of $ \gpe $ as $ \eps \to 0 $ and we state an estimate of the phase optimizing $ \gve $.
	
	\begin{teo}[\textbf{Ground state energy asymptotics and optimal phase}]
		\label{giant vortex teo}
		\mbox{}	\\
		For any $ \Omega_0 > (3\pi)^{-1} $ and $ \eps $ small enough
		\beq
			\label{gv en asympt}
			\gpe = \giante + \OO((\log|\log\eps|)^{-2} |\log\eps|^2).
		\eeq
		Moreover $ \giante = \gveopt $ with $ \oopt \in \N $ satisfying
		\beq
			\label{oopt}
			\oopt := \frac{2}{3\sqrt{\pi} \eps} (1 + \OO(|\log\eps|^{-4}).
		\eeq
	\end{teo}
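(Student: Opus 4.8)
The upper bound $\gpe\le\giante$ is already furnished by \eqref{giante}, so the plan is to produce a matching lower bound and, simultaneously, to locate the minimizing phase. I would first dispose of the optimal phase. For each admissible $\omega$ let $g=g_{\omega}$ be the (real, radial) minimizer of $\gvf$; its entire $\omega$-dependence sits in the effective radial potential $B_\omega^2-\Omega^2r^2=d^2r^{-2}-2\Omega d$ with $d:=[\Omega]-\omega$, cf. \eqref{gvf}. Treating $d$ first as a continuous variable and minimizing the corresponding energy, the leading balance is between the angular cost $d^2\int r^{-2}g^2$ and the rotational gain $-2\Omega d\int g^2$, giving $d_{\mathrm{opt}}=\Omega/\langle r^{-2}\rangle_{g}$. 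Since the mass of $g$ concentrates in the thin Thomas--Fermi annulus of width $1-\rtf\propto(\eps\Omega)^{-1}$, one has $\langle r^{-2}\rangle_{g}=1+2\langle 1-r\rangle_{g}+o(\eps)$, so $\omega_{\mathrm{opt}}=[\Omega]-d_{\mathrm{opt}}\approx 2\Omega\langle 1-r\rangle_{g}$; inserting the explicit $\tfm$ from the Appendix (whose profile fixes the numerical coefficient) and rounding to the nearest integer yields \eqref{oopt} with the constant $2/(3\sqrt\pi)$, and shows $\inf_\omega\gve=\gveopt=\giante$. Along the way I would record the pointwise control of $g_{\oopt}$ — closeness to $\sqrt{\tfm}$ in the bulk, and exponential smallness in the central hole and near $\partial\ba$ (Proposition \ref{GPm exponential smallness}) — needed below.

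For the lower bound I take any GP minimizer $\gpm$ and perform the Lassoued--Mironescu energy decoupling relative to $g:=g_{\oopt}$. Writing $\gpm=g\,e^{i([\Omega]-\oopt)\vartheta}\,v$ on $\{g>0\}$, the effective covariant derivative becomes $\nabla-i\rmagnp$ with $\rmagnp$ as in \eqref{rmagnp}; integrating the cross term by parts (the boundary contribution vanishing because $g=0$ on $\partial\ba$) and using the variational equation for $g$ together with $\|\gpm\|_2=\|g\|_2=1$ to cancel the Lagrange-multiplier term, one obtains the identity
\begin{equation}
\gpf[\gpm]=\gveopt+\int_{\ba}\diff\rv\,\Big\{g^2|\nabla v|^2-2\,g^2\,\rmagnp\cdot(iv,\nabla v)+\eps^{-2}g^4\big(|v|^2-1\big)^2\Big\},
\end{equation}
where $(iv,\nabla v):=\tfrac{i}{2}(v\nabla v^*-v^*\nabla v)$ is the current of $v$. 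The singularity of $v$ where $g$ vanishes is only apparent, since the integrand reassembles into finite quantities built from $\gpm$; the Dirichlet boundary layer and the hole are handled by the exponential smallness above, so it suffices to bound the reduced functional from below on the bulk annulus $\At$ of \eqref{annulus}.

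The heart of the matter is to show that this reduced functional is $\ge-\OO((\log|\log\eps|)^{-2}|\log\eps|^2)$, the dangerous term being the current term, which vortices of $v$ can exploit. Here I would use the potential-function method. Because $g$ is radial and $\rmagnp$ azimuthal, the field $g^2\rmagnp$ is divergence-free, so there is a radial potential $F$ with $\nabla^{\perp}F=g^2\rmagnp$, i.e. $F'(r)=g^2 B_\omega$, defined up to an additive gauge constant. Integrating by parts,
\begin{equation}
\int_{\At}g^2\,\rmagnp\cdot(iv,\nabla v)=\int_{\At}F\,\curl(iv,\nabla v)+(\text{boundary terms})=\int_{\At}F\,\mu(v)+(\text{boundary terms}),
\end{equation}
where $\mu(v)$ is the vorticity. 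The optimal choice of $\oopt$ is exactly what makes $B_\omega$ change sign inside $\At$ and thus renders $\|F\|_{L^\infty(\At)}$ small. A vortex-ball construction of Jerrard--Sandier--Serfaty type, carried out with the weight $g^2$, then provides simultaneously a lower bound $\int_{\At}g^2|\nabla v|^2\gtrsim 2\pi\sum_i g(a_i)^2|d_i|\,|\log\eps|$ and the upper bound $\int_{\At}F\,\mu(v)\le\|F\|_{L^\infty(\At)}\sum_i|d_i|$. Since at the optimal phase $\|F\|_{L^\infty(\At)}\ll g^2|\log\eps|$ precisely when $\Omega_0>(3\pi)^{-1}$, the vortex cost dominates the magnetic gain and no vortex configuration can lower the energy below the stated error; combining with the decoupling identity and the upper bound closes the estimate.

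The principal obstacle is the construction and control of $F$ under Dirichlet conditions. In the Neumann case of \cite{CRY} a pointwise boundary estimate on $\gpm$ was available; here $g$ is forced to vanish on $\partial\ba$, which degenerates the weight and, more seriously, makes the boundary terms in the integration by parts over the annulus $\At$ delicate. The crux is to fix the additive gauge in $F$ and the inner/outer radii of $\At$ so that $\|F\|_{L^\infty(\At)}$ is driven below the vortex cost while the boundary contributions are absorbed into $\gveopt$ rather than into the vortex budget; the smallness of $g^2$ at both rims of $\At$ is what makes this possible. Both the weaker threshold $\Omega_0>(3\pi)^{-1}$ — as opposed to the $2(3\pi)^{-1}$ needed for the pointwise statement of Theorem \ref{no vortices} — and the sharp error $\OO((\log|\log\eps|)^{-2}|\log\eps|^2)$ hinge on executing this potential estimate with the optimal constants.
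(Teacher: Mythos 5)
Your overall architecture --- Lassoued--Mironescu decoupling with respect to the giant vortex profile, a potential function $F$ with $\nabla^{\perp}F=2g^2\vec{B}$, weighted vortex balls plus a jacobian estimate, and the derivation of $\oopt$ from the first-order balance $\int g^2\bigl(\Omega-([\Omega]-\oopt)r^{-2}\bigr)=\OO(1)$ --- is exactly the paper's, and you have correctly located the one place where the Dirichlet condition bites: the boundary term produced by integrating the current term by parts. The gap is in how you propose to remove it. An additive gauge constant lets you normalize $F$ at one radius only; with $F(\rt)=0$ one still has $F(1)=\OO(1)\neq 0$, so the term $\int_{\partial\ba}F(1)\,(iu,\partial_{\tau}u)$ survives, and it cannot be ``absorbed into $\gveopt$'': it involves the phase circulation of $u$ on $\partial\ba$, where $u=\gpm/(g\,e^{i([\Omega]-\oopt)\vartheta})$ is not even well defined (numerator and denominator both vanish) and where no a priori bound on the current is available --- the smallness of $g^2$ at the rim does not control $(iu,\nabla u)$, which scales like $g^{-2}$ times the current of $\gpm$. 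This is precisely the boundary estimate that was available in the Neumann case of \cite{CRY} and is lost here.

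The paper's fix, advertised in the introduction as the new method for estimating the potential function, is the decomposition $F=\fin+\fout$ with $\fout$ solving $\nabla(g^{-2}\nabla\fout)=0$ and $\fout(1)=F(1)$, so that $\fin(1)=0$. Only the $\fin$ part is integrated by parts (no boundary term); the $\fout$ part is never integrated by parts but is estimated directly via $|\nabla\fout|\leq Cg^2$ and Cauchy--Schwarz by $C\F[u]^{1/2}$, and it is this $\F[u]^{1/2}$ term, fed into the self-improving inequality $\OO(\eps^{\infty})\geq\E[u]\geq C\bigl(\tfrac{\log|\log\eps|}{|\log\eps|}\F[u]-\F[u]^{1/2}\bigr)$, that produces the stated error $\OO((\log|\log\eps|)^{-2}|\log\eps|^2)$. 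Without this (or an equivalent) device your lower bound does not close. Two smaller points: the decoupling should be performed on the annulus $\ann=\{r\geq\rt\}$ rather than on all of $\ba$, since the profile vanishes like $r^{[\Omega/2]}$ at the origin (Lemma \ref{pointwise origin}) and the reduction to $\ann$ costs only $\OO(\eps^{\infty})$ by exponential smallness; and $\lf\|F\ri\|_{L^{\infty}(\At)}$ is \emph{not} $\ll g^2|\log\eps|$ --- both are of order $\eps^{-1}$ --- so the conclusion rests on the sharp constant comparison $\half g^2|\log\eps|-|\fin|>0$ of Proposition \ref{ang vel 3 pro}, not on an order-of-magnitude separation.
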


\begin{rem}{\it (Composition of the energy)}
		\mbox{}	\\
		We refer to \cite[Remark 1.4]{CRY} for details on the energy $\giante$ (denoted $\hgpe$ in that paper). Let us just emphasize that in this setting the Dirichlet boundary condition is responsible for a radial kinetic energy contribution that was not present in the flat Neumann case and gives the leading order correction $ \propto \eps^{-5/2}|\log\eps|^{-3/2} $ to $ \tfe $ in the asymptotic expansion of $ \giante $.
	\end{rem}

A consequence of Theorem \ref{no vortices} is that the degree of $\gpm$ is well defined on any circle $ \partial \B(r) $ of radius $ r $ centered at the origin, as long as 
\[
	\rb \leq r \leq 1- \ep ^{3/2} |\log \ep| ^2.
\] 
We are able to estimate this degree, proving that it is in agreement with that of the optimal giant vortex trial function (\ref{oopt}):

	\begin{teo}[\textbf{Degree of a GP minimizer}] 
		\label{theo degree}
		\mbox{}	\\
		If $ \Omega_0 > 2 (3\pi)^{-1} $ and $\ep$ is small enough,
			\begin{equation}\label{result degree}
				\dg \left\{ \gpm, \partial \B(r) \right\} = \Om - \frac{2}{3\sqrt{\pi} \eps} (1 + \OO(|\log\eps|^{-4}),
			\end{equation} 
		for any $ \rb \leq r \leq 1- \ep ^{3/2} |\log \ep| ^2$.
	\end{teo}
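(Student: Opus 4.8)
\section*{Proof proposal for Theorem \ref{theo degree}}

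The plan is to combine the absence of vortices (Theorem \ref{no vortices}) with the energy asymptotics and optimal–phase estimate (Theorem \ref{giant vortex teo}), and to pin down the integer degree by exploiting the strict convexity of the giant–vortex energy $\gve$ as a function of the phase $\om$. I first record that the degree is well defined and $r$–independent: by Theorem \ref{no vortices} the minimizer $\gpm$ has no zero in $\At$, so $\gpm/|\gpm|$ is a continuous $S^1$–valued map on the annulus, and $\dg\{\gpm,\p\B(r)\}$ is a homotopy invariant, hence a single integer $d$ for all $\rb\le r\le 1-\eps^{3/2}|\log\eps|^2$. Setting $\om:=[\Om]-d\in\Z$, the task reduces to showing $|\om-\oopt|\ll\eps^{-1}|\log\eps|^{-4}$, after which \eqref{oopt} together with $[\Om]=\Om+\OO(1)$ yields \eqref{result degree}.

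The second step is an energy lower bound in terms of $d$. In $\At$ write $\gpm=|\gpm|\,e^{i\pha}$ with $\pha=d\vartheta+\psi$ and $\psi$ single–valued. Splitting the magnetic kinetic term into radial and angular parts and discarding the non–negative pieces $\int|\gpm|^2(\p_r\psi)^2$ and $\int|\gpm|^2(r^{-1}\p_\vartheta\psi)^2$, one is left with a cross term $-2\int|\gpm|^2 r^{-1}(\p_\vartheta\psi)\,B_{\om}$ and the main term $\int|\gpm|^2 B_{\om}^2$. Here the pointwise bound of Theorem \ref{no vortices}, namely $\big||\gpm|^2-\tfm\big|\ll\tfm$, is essential: replacing $|\gpm|^2$ by the radial profile $\tfm$ makes the cross term integrate to zero in $\vartheta$ (because $\oint\p_\vartheta\psi=0$) and turns the main term into $\int\tfm B_{\om}^2$. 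After controlling the contribution of the central hole, where $\gpm$ is exponentially small (Proposition \ref{GPm exponential smallness}), and of the thin outer layer $1-\eps^{3/2}|\log\eps|^2<r<1$, I expect to obtain
\beq
\gpe\ge \gve-\delta,\qquad \om=[\Om]-d,
\eeq
with an error $\delta$ at most polylogarithmic in $|\log\eps|$.

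For the third step I match this with the upper bound and invoke convexity. Testing gives $\gpe\le\giante=\gveopt$ by \eqref{giante}, and Theorem \ref{giant vortex teo} sharpens it to $\gpe=\gveopt+\OO((\log|\log\eps|)^{-2}|\log\eps|^2)$. Since $\oopt$ minimizes $\om\mapsto\gve$ over $\Z$, it suffices to establish quadratic growth away from the minimum. The $\om$–dependence of $\gvf$ enters only through $B_{\om}^2=\big(\Om r-([\Om]-\om)r^{-1}\big)^2$; differentiating $\gve=\min_f\gvf[f]$ and using the envelope theorem, the leading curvature is $\p_\om^2\gve=2\int r^{-2}\hgvm^2\ge 2$ (the profile's $\om$–dependence contributes only at lower order, and $r^{-2}\ge1$ on $\ba$), so $\gve-\gveopt\ge C(\om-\oopt)^2$ with $C>0$. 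Combining the three ingredients gives $(\om-\oopt)^2\lesssim (\log|\log\eps|)^{-2}|\log\eps|^2+\delta$, whence $|\om-\oopt|\lesssim|\log\eps|/\log|\log\eps|$, which is $\ll\eps^{-1}|\log\eps|^{-4}$ and is therefore absorbed into the error term of \eqref{oopt}, proving \eqref{result degree}.

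I expect the main obstacle to be the energy lower bound of the second step. The clean diamagnetic inequality $\gpf[\gpm]\ge\gvf[|\gpm|]$ fails because $|\gpm|$ is only \emph{approximately} radial, so the cross term does not vanish identically and the angular kinetic energy cannot be bounded below pointwise by $\int|\gpm|^2B_{\om}^2$. The decisive tool is the pointwise closeness $\big||\gpm|^2-\tfm\big|\ll\tfm$ from Theorem \ref{no vortices}, which must be leveraged to replace $|\gpm|^2$ by $\tfm$ in all leading terms while showing that the resulting errors—together with the contributions of the hole and of the outer Dirichlet layer, where Theorem \ref{no vortices} does not apply—remain much smaller than $(\eps^{-1}|\log\eps|^{-4})^2$, so that they do not spoil the determination of $\om$.
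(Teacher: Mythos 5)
Your strategy --- identify the degree $d$ by setting $\om=[\Om]-d$, prove $\gpe\geq\gve-\delta$, and then pin down $\om$ via quadratic growth of $\om\mapsto\gve$ away from $\oopt$ --- is genuinely different from the paper's, and it contains gaps, one of which I believe is fatal to the stated precision. The paper's own proof is far more direct: with the decomposition \eqref{gvfunction u}, $\gpm=\hgvmo\, u\, e^{i([\Om]-\om_0)\vartheta}$, the degree on $\partial\B(r)$ equals $[\Om]-\om_0+\dg\{u,\partial\B(r)\}$; the estimate \eqref{est omega_0} already gives $\om_0=\frac{2}{3\sqrt{\pi}\eps}(1+\OO(|\log\eps|^{-4}))$, and the residual winding of $u$ is bounded by $C|\log\eps|^2$ by integrating $|\partial_\tau u|$ over the circles foliating $\At$, using Cauchy--Schwarz together with $\|\nabla u\|^2_{L^2(\At)}\leq C\eps|\log\eps|^{3}\F[u]\le C\eps|\log\eps|^5$, which follows from \eqref{g lower bound} and Proposition \ref{reduced en bounds pro}. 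Since $|\log\eps|^2\ll\eps^{-1}|\log\eps|^{-4}$, this residual degree is absorbed into the error of \eqref{oopt} and no energy comparison in $\om$ is needed at all.

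The fatal gap is your convexity step. First, $\gve$ is an infimum over $f$ of functionals that are convex (quadratic) in $\om$, and an infimum of convex functions need not be convex; the envelope computation actually gives $\partial_\om^2\gve=2\int r^{-2}\hgvm^2-\bigl\langle\partial_\om\hgvm,\,\mathrm{Hess}\,\partial_\om\hgvm\bigr\rangle$ with a \emph{nonpositive} correction, and discarding it requires a coercivity estimate on the Hessian at $\hgvm$ controlling $\partial_\om\hgvm$ --- nothing of the sort is available in the paper. Second, and independently, even granting quadratic growth, the precision is insufficient: the only quantitative handle on the $\om$-dependence of $\gve$ is through $\htfe_\om$, whose expansion \eqref{hTFe} carries an error $\OO(\eps^{-2}|\log\eps|^{-1})$, while the kinetic correction in \eqref{hgved asympt} is $\OO(\eps^{-5/2}|\log\eps|^{-3/2})$; feeding either into $(\om-\oopt)^2\lesssim\gve-\gveopt$ yields at best $|\om-\oopt|\lesssim\eps^{-5/4}|\log\eps|^{-3/4}$, far larger than the $\OO(\eps^{-1}|\log\eps|^{-4})$ demanded by \eqref{result degree}. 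Your second step (the lower bound $\gpe\geq\gve-\delta$ for $\om=[\Om]-d$) is also only sketched --- the cross term and the outer Dirichlet layer, where vortices are not excluded and the single-valued decomposition $\pha=d\vartheta+\psi$ may fail, are precisely what the $\fin$/$\fout$ machinery of Section \ref{gv main proofs} was built to handle --- but given the generous tolerance $\delta\ll\eps^{-2}|\log\eps|^{-8}$ that part looks repairable; the loss of precision in the convexity step does not.
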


We note that because of the Dirichlet condition there is a small region close to $\dd \ba$ where the density goes to zero. We have basically no information on the GP minimizer in this layer that could a priori contain vortices. The existence of this layer is the main difference between the flat Dirichlet case and the flat Neumann case considered in \cite{CRY}. In particular the lack of a priori estimates on the phase circulation of $\gpm$ on $\dd \ba$ requires new ideas in the proof. 

\subsection{Rotational Symmetry Breaking}
\label{sec symm break}

As anticipated above, a very natural question arising from the results in Section 1.2 is that of the repartition of vortices in the central hole of low matter density. In particular, does one have 
\[
\gpm  = \gvmopt (r) e^{i\left( \Om- \oopt \right)\vartheta},
\] 
modulo a constant phase factor, which would imply that all the vorticity is contained in a central multiply quantized vortex?\\
We show below that this can not be the case: the GP functional is rotationally symmetric but if the angular velocity exceeds a certain threshold this symmetry is broken at the level of the ground state. No minimizer of the GP energy functional is an eigenfunction of the angular momentum, i.e. a function of the form $f(r) e^{in\vartheta}$ with $f$ real and $n$ an integer. A straightforward consequence is that there is not a unique minimizer but for any given minimizing function one can obtain infinitely many others by simply rotating the function by an arbitrary angle. In other words as soon as the rotational symmetry is broken, the ground state is degenerate and its degeneracy is infinite.

\vspace{0,3cm}

In \cite[Proposition 2.2]{CDY1} we have proven that the symmetry breaking phenomenon occurs in the case of a bounded trap $ \ba $ with Neumann boundary conditions when $ c |\log\eps| \leq \Omega \lesssim \eps^{-1} $, for some given constant $ c $. We are now going to show that such a result admits an extension to angular velocities much larger than $ \eps^{-1} $, i.e., the rotational symmetry is still broken even for very large angular velocities. Such an extension is far from obvious in view of the main result about the emergence of a giant vortex state discussed above: Since vortices are expelled from the essential support of the GP minimizer, there might a priori be a restoration of the rotational symmetry but the behavior of any GP minimizer inside the hole $ \ba(\rtf) $ remains unknown. 

	\begin{teo}[\textbf{Rotational symmetry breaking}]
		\label{symmetry breaking}
		\mbox{}	\\
		If $ \eps$ is small enough and $\ep \Om$ large enough, no minimizer of the GP energy functional \eqref{GPf} is an eigenfunction of the angular momentum.
	\end{teo}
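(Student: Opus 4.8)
The plan is to argue by contradiction, reducing the statement to a strict energy inequality. Any angular-momentum eigenfunction $f(r)e^{in\vartheta}$ with $f$ real is an admissible competitor for the giant-vortex problem: inserting it into $\gpf$ yields exactly $\gvf[f]$ with $[\Omega]-\omega=n$, and as $\omega$ ranges over $\Z$ so does $n$. Hence the infimum of $\gpf$ over all such symmetric functions is precisely $\giante$ from \eqref{giante}. Thus, were some minimizer of the form $f(r)e^{in\vartheta}$, we would have $\gpe=\gpf[f e^{in\vartheta}]\ge\giante\ge\gpe$, forcing $\gpe=\giante$. It therefore suffices to prove the strict inequality
\beq
\gpe<\giante .
\eeq

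I would then distinguish two regimes according to whether vorticity survives in the bulk. For $\eps^{-1}\ll\Omega\ll\eps^{-2}|\log\eps|^{-1}$ the conclusion is immediate from Theorem \ref{uniform distribution}: a symmetric function $f(r)e^{in\vartheta}$ has phase $n\vartheta$, so its winding number around any ball $\ba_i\subset\At$ (none of which encloses the origin, since $\At$ is then an annulus away from the center) vanishes. But \eqref{measure convergence} gives, for an admissible $\set\subset\At$, that $\frac{2\pi}{\Omega}\sum_{\ba_i\subset\set}d_{i,\eps}=|\set|(1+o(1))>0$, so some $d_{i,\eps}\ne 0$ — a contradiction. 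Hence no minimizer is symmetric in this regime. (Equivalently, comparing \eqref{gs en asympt 2} with the giant-vortex energy exhibits a strict gap produced by the vortex-lattice contribution.)

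The essential new case is the giant-vortex regime $\Omega=\Omega_0\eps^{-2}|\log\eps|^{-1}$, where Theorem \ref{no vortices} makes $\At$ vortex-free and the previous argument breaks down. Here I would prove $\gpe<\giante$ by a purely variational trial-function computation that breaks the symmetry of the optimal profile $\gveopt=\gvmopt(r)e^{i([\Omega]-\oopt)\vartheta}$. The idea is to relocate one unit of circulation from the central giant vortex into one of the two low-density regions where a minimizer is essentially uncontrolled — the inner hole $\ba(\rtf)$, where by the exponential smallness estimate (Proposition \ref{GPm exponential smallness}) the density is negligible, or the outer Dirichlet layer $r>1-\eps^{3/2}|\log\eps|^2$ — placing it at a definite angular position so that the competitor is no longer an eigenfunction of angular momentum. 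Choosing the relocation so as to leave the bulk degree unchanged, the leading giant-vortex energy on $\At$ is reproduced exactly and the entire energy difference is localized in the low-density region. The gain should come from the effective field $\rmagnp$, whose magnitude $|B_{\oopt}(r)|=|\Omega r-([\Omega]-\oopt)r^{-1}|$ is large there (it vanishes only near the density maximum $r\approx 1$), so that a vortex of suitable sign couples favorably to it; against this one must weigh the vortex self-energy, which is itself suppressed by the smallness of the density at the core.

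The crux, and the main obstacle, is to show that this net energy change is strictly negative and that it dominates the remainder $\OO((\log|\log\eps|)^{-2}|\log\eps|^2)$ of Theorem \ref{giant vortex teo}. Since both the available gain and any a priori bounds live in the low-density region, the sign and magnitude have to be extracted by a careful expansion rather than by comparison with the actual minimizer. I would organize this as a second-variation computation: because the background $\gvmopt(r)e^{i([\Omega]-\oopt)\vartheta}$ is rotationally symmetric, the Hessian of $\gpf$ decouples into angular sectors coupling $e^{i([\Omega]-\oopt+m)\vartheta}$ to $e^{i([\Omega]-\oopt-m)\vartheta}$, and symmetry breaking is equivalent to the existence of a single sector $m\ne 0$ in which the quadratic form is negative; the displaced-vortex perturbation above furnishes the explicit negative direction. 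Finally I would make the threshold on $\ep\Omega$ explicit and check that the negative sector persists throughout the regime, completing the contradiction with minimality.
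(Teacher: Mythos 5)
Your overall strategy does coincide with the paper's: the proof there also reduces the statement to showing that the best symmetric candidate $f_{\bar n}(r)e^{i\bar n\vartheta}$ (whose energy $E_{\bar n}$ equals $\giante$) cannot be a global minimizer, and does so precisely through the second variation, exhibiting a perturbation supported in a two-mode sector $e^{i(\bar n\pm d)\vartheta}$ with $\Q_{\Psi}[\Xi]<0$; your observation that the Hessian decouples into such sectors because $(\Psi^*)^2\Xi^2$ couples the modes $\bar n+d$ and $\bar n-d$ is exactly the structure exploited. Your separate treatment of the regime $\eps^{-1}\ll\Omega\ll\eps^{-2}|\log\eps|^{-1}$ via Theorem \ref{uniform distribution} is a correct alternative shortcut there (all winding numbers of a symmetric state on balls not enclosing the origin vanish, contradicting \eqref{measure convergence}), whereas the paper's second-variation argument handles all $\Omega\gg\eps^{-1}$ uniformly and does not need this case distinction.

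The gap is that the entire content of the theorem lies in producing an explicit negative direction, and you do not produce one; moreover, the mechanism you sketch is the wrong one. Relocating a unit of circulation into the central hole or the outer Dirichlet layer cannot be expected to give a provably negative quadratic form: every term of $\Q_{\Psi}[\Xi]$ is weighted by the density of the background, which in the hole is exponentially small (Proposition \ref{GPm exponential smallness}), so both the "cost" and the "gain" of your displaced vortex are exponentially small and of undetermined relative size, while in the boundary layer the profile is exactly what the paper admits it cannot control. You would be extracting a sign from a competition between two quantities neither of which is accessible. The paper's actual perturbation, adapted from \cite{Seir}, is $\Xi=(A+B)e^{i(\bar n+d)\vartheta}+(A-B)e^{i(\bar n-d)\vartheta}$ with $A=r^{d+1}f_{\bar n}^{\prime}$ cut off at $\rmaxn$ (to respect the Dirichlet condition) and $B=\bar n r^{d}f_{\bar n}$, and the negativity comes from the bulk near the density maximum, not from the low-density regions: the bracket $(d+1)\mu_{\bar n}-2(d+1)\eps^{-2}f_{\bar n}^2+2\Omega\bar n\leq -\Omega^2(d-1-o(1))$ is negative for $d\geq2$ because $\mu_{\bar n}=-\Omega^2(1-o(1))$ and $\bar n=\Omega(1+o(1))$, and it multiplies $\int_0^{\rmaxn}r^{2d+2}f_{\bar n}f_{\bar n}^{\prime}>0$ by monotonicity of $f_{\bar n}$ up to its maximum; the extra term created by cutting $A$ off at $\rmaxn$ is controlled by the $L^2$ smallness of $f_{\bar n}$ beyond $\rmaxn$. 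Until you specify a perturbation for which the sign of $\Q$ can actually be computed, the argument in the giant vortex regime is a plan rather than a proof.
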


%This Theorem motivates a strong conjecture about the occurrence in any GP minimizer $ \gpm $ of at least one vortex outside of the origin 
%contained in the complement of the annulus $ \At $. Indeed, the breaking of the symmetry under rotations can be the result of the occurrence of isolated vortices outside of the origin.\\

We note that it is proved in \cite{AJR} for a related model that the ground state \emph{is} rotationally symmetric if $\Om < \Ofirst$ and $\ep$ is small enough. Theorem \ref{symmetry breaking} shows that the symmetry, broken due to the nucleation of vortices, \emph{never} reappears, even when $\Om > \Othird$.

\subsection{Organization of the Paper}

The paper is organized as follows. Section \ref{sec preliminary} is devoted to general estimates that will be used throughout the paper. We then prove our results about the regime $|\log \ep| \ll \Om \ll \ep ^{-2} |\log \ep| ^{-1}$ in Section \ref{sec unif distr proof}. The analysis of the energy functional \eqref{hGPf} is the main new ingredient with respect to the method of \cite{CY}. We adapt the techniques developed in that paper for the evaluation of the energy of a trial function containing a regular lattice of vortices. The corresponding lower bound is proved via a localization method allowing to appeal to results from GL theory \cite{SS1,SS2}. The inhomogeneity of the density profile is dealt with using a Riemann sum approximation.\\
Section \ref{sec gvortex} is devoted to the giant vortex regime. Our main tools are the techniques of vortex ball construction and jacobian estimates, originating in the papers \cite{Sa,J,JS} (see also \cite{SS2}). We implement this approach using a cell decomposition as in \cite{CRY}. New ideas are necessary to control the behavior of GP minimizers on $\dd \ba$.\\ 
%and we also simplify the proof of the absence of vortices with respect to \cite{CRY}.\\
The symmetry breaking result is proved in Section \ref{sec symm break proof}. Following \cite{Seir}, given a candidate rotationally symmetric minimizer, we explicitly construct a wave function giving a lower energy. Finally the Appendix gathers important but technical results about the TF functional and the third critical speed.

\section{Preliminary Estimates: The Density Profile with Dirichlet Boundary Conditions}
\label{sec preliminary}

This section is devoted to the proof of estimates which will prove to be very useful in the rest of the paper but are independent of the main results. We mainly investigate the properties of the density profile which captures the main traits of the modulus of the GP minimizer $ |\gpm|$: More precisely we study in details the minimization of the density functional $ \hgpf $ \eqref{hGPf} and prove bounds on its ground state energy $ \hgpe $ \eqref{hGPe} and associated minimizers $ \hgpm $. 

The leading order term in the ground state energy $ \hgpe $ is given by the infimum of the TF functional \eqref{TFf}, i.e., 
\beq
	\label{TFe}
	\tfe : = \inf_{\rho \in \tfdom} \tff[\rho],	\hspace{1,5cm}	\tfdom : = \lf\{ \rho \in L^1(\ba) : \: \rho > 0, \lf\| \rho \ri\|_1 = 1 \ri\}.
\eeq
We postpone the discussion of the properties of $ \tfe $ as well as the corresponding minimizer $ \tfm $ to the Appendix.

\begin{pro}[\textbf{Minimization of $ \hgpf $}]
	\label{hGPf minimization}
	\mbox{}	\\
	If $ \Omega \ll \eps^{-3}|\log\eps|^{-2} $ as $ \eps \to 0 $,
	\beq
		\label{hGPe asympt}
		\tfe \leq \hgpe \leq \tfe + \OO(\eps^{-1}) + \OO\lf(\eps^{1/2} \Omega^{3/2} \ri).
	\eeq
	Moreover there exists a minimizer $g$ that is unique up to a sign, radial and can be chosen to be positive away from the boundary $\partial\ba$. It solves inside $ \ba $ the variational equation
	\beq
		\label{hGPm variational eq}
		- \Delta \hgpm - \Omega^2 r^2 \hgpm + 2 \eps^{-2} \hgpm^3 = \hchem \hgpm,
	\eeq
	with boundary condition $ \hgpm(1) = 0 $ and $ \hchem = \hgpe + \eps^{-2} \lf\| \hgpm \ri\|_4^4 $.
\end{pro}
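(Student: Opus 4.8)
The plan rests on the algebraic identity, valid for every real $f$,
\[
	\hgpf[f] = \int_{\ba} \diff\rv \: |\nabla f|^2 + \tff[f^2],
\]
which splits the functional into a nonnegative Dirichlet integral and the Thomas--Fermi energy of the density $\rho = f^2$. For $f \in \hgpdom$ the function $\rho = f^2$ is an admissible TF density ($\rho \geq 0$, $\|\rho\|_1 = \|f\|_2^2 = 1$), so discarding the kinetic term immediately gives $\hgpf[f] \geq \tff[f^2] \geq \tfe$, hence the lower bound $\hgpe \geq \tfe$ with no further work. The whole content of the energy estimate is therefore the matching upper bound.

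\textbf{The upper bound is the main obstacle.} I would test $\hgpf$ on a trial state built from $\sqrt{\tfm}$. In the bulk of $\tfd$ one sets the trial function proportional to $\sqrt{\tfm}$; near the inner edge $r=\rtf$ (present only when $\Om > 2(\sqrt{\pi}\eps)^{-1}$) the square-root singularity of $\nabla\sqrt{\tfm}$ is regularized on the local healing scale, which one checks contributes only $\OO(\eps^2\Omega^2|\log\eps|)$ and is subdominant; and across a boundary layer at $r=1$ the profile is interpolated down to zero so as to meet the Dirichlet condition. Balancing the kinetic cost of this layer against the depletion it induces selects a layer width equal to the local healing length, which is $\sim\eps$ when $\tfm(1)=\OO(1)$ (disc regime) and $\sim\eps^{1/2}\Omega^{-1/2}$ when $\tfm(1)\sim\eps\Omega$ (annular regime), producing the two error contributions $\OO(\eps^{-1})$ and $\OO(\eps^{1/2}\Omega^{3/2})$ respectively; writing their sum covers both cases since one term dominates in each. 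The mass removed by the cutoffs is of lower order and is restored by rescaling at negligible energy cost. The hypothesis $\Omega \ll \eps^{-3}|\log\eps|^{-2}$ enters here precisely to guarantee that the boundary-layer width stays below the TF annulus width $1-\rtf \sim (\eps\Omega)^{-1}$, so that the construction is geometrically consistent and all error terms remain subdominant; verifying these competing length scales and error bounds is the technical heart of the argument.

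\textbf{Existence, reduction to nonnegative functions.} For the qualitative claims I would argue by the direct method. Coercivity follows from $\hgpf[f] \geq \|\nabla f\|_2^2 - \Omega^2$ (using $r\leq 1$, $\|f\|_2 = 1$ and the nonnegativity of the quartic term), so a minimizing sequence is bounded in $H^1_0(\ba)$; since in two dimensions $H^1$ embeds compactly into $L^2$ and $L^4$, a weak $H^1$ limit $g$ still satisfies $\|g\|_2 = 1$, lies in $H^1_0$, and realizes the infimum by weak lower semicontinuity of the Dirichlet integral together with strong convergence of the potential and quartic terms. Because $|\nabla |f|\,| \leq |\nabla f|$ a.e. while the remaining terms depend on $f$ only through $f^2$, the minimizer may be taken with $g \geq 0$.

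\textbf{Uniqueness, radial symmetry, positivity and the chemical potential.} In the density variable $\rho = g^2$ the functional reads $\int |\nabla\sqrt{\rho}|^2 + \tff[\rho]$, and both pieces are convex in $\rho$ on the convex set $\{\rho \geq 0,\ \int\rho = 1\}$: the kinetic term by the classical convexity of $\rho \mapsto \int|\nabla\sqrt{\rho}|^2$, and $\tff$ because $\eps^{-2}\int\rho^2$ is strictly convex with the remainder linear. Strict convexity forces a unique minimizing density, hence a unique nonnegative minimizer $g$, with all minimizers equal to $\pm g$. Radial symmetry then comes for free: the functional is rotation invariant, so any rotation of $g$ is again a nonnegative minimizer and must coincide with $g$ by uniqueness, whence $g$ is radial. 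The Euler--Lagrange equation \eqref{hGPm variational eq}, with the multiplier $\hchem$ enforcing $\|f\|_2 = 1$, is the standard first variation; elliptic regularity upgrades $g$ to a classical solution in the interior, and the strong maximum principle applied to $-\Delta g = (\hchem + \Omega^2 r^2 - 2\eps^{-2}g^2)\,g$ with $g \geq 0$, $g \not\equiv 0$ yields $g > 0$ inside $\ba$. Finally, multiplying \eqref{hGPm variational eq} by $g$, integrating, and subtracting $\hgpe = \hgpf[g]$ gives $\hchem = \hgpe + \eps^{-2}\|g\|_4^4$, as claimed.
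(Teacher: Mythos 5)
Your proof is correct and follows essentially the same route as the paper: the lower bound by discarding the kinetic term, and the upper bound via a trial function built from a regularized $\sqrt{\tfm}$ cut off to zero in a boundary layer whose width is optimized to $\eps$ in the disc regime and $\eps^{1/2}\Omega^{-1/2}$ in the annular regime, with the hypothesis $\Omega\ll\eps^{-3}|\log\eps|^{-2}$ ensuring the $\OO(\eps^2\Omega^2|\log\eps|)$ contribution from the profile's own kinetic energy stays subdominant. The qualitative claims (existence, uniqueness up to sign via convexity in $\rho=f^2$, radiality, positivity, and the formula for $\hchem$), which the paper leaves implicit, are handled by standard and correct arguments.
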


\begin{rem}{\it (Composition of the energy $ \hgpe $)}
	\label{rem energy comp}
	\mbox{}	\\
	The remainders appearing on the r.h.s. of \eqref{hGPe asympt} can be interpreted as the kinetic energy due to Dirichlet boundary conditions: The bending of the TF density close to $ r = 1 $ in order to fulfill the boundary condition produces some kinetic energy which is not negligible and can be estimated by means of the trial function used in the proof of the above proposition, i.e., $ \OO(\eps^{-1}) $ as long as $ \Omega\lesssim \eps^{-1} $, and $ \OO(\eps^{1/2}\Omega^{3/2}) $ for larger angular velocities. Note indeed that the second correction becomes relevant only if $ \Omega \gtrsim \eps^{-1} $.
\newline
The orders of those corrections can be explained as follows: If $ \Omega \lesssim \eps^{-1} $ the TF density goes from its maximum of order 1 to 0 in a layer of thickness $ \sim \eps $ (because of the nonlinear term), yielding a gradient $ \sim \eps^{-1} $ and thus a kinetic energy of order $ \eps^{-1} $. If $ \Omega \gg \eps^{-1} $ the thickness of the annulus where $ \hgpm $ varies from $ \sqrt{\eps\Omega} $ to 0 becomes of order $ \eps^{1/2} \Omega^{-1/2} $ and the associated kinetic energy is $ \OO(\eps^{1/2}\Omega^{3/2}) $.
\newline
Note that in both cases the kinetic energy associated with the boundary conditions is much larger than the radial kinetic energy of the profile $ \sqrt{\tfm} $ which is $ \OO(1) $ in the first case and $ \OO(\eps^2\Omega^2|\log\eps|) $ in the second one (see \cite[Section 4]{CY}): The condition $ \Omega \ll \eps^{-3} |\log\eps|^{-2} $ is precisely due to the comparison of such energies for large angular velocities.
\newline
Finally we point out that, if $ \Omega \ll \eps^{-1} $, the correction of order $ \eps^{-1} $ due to Dirichlet boundary conditions can become much larger than two terms of order $ \Omega^2 $ and $ \eps^2 \Omega^4 $  contained inside $ \tfe $ (see the explicit expression \eqref{TFe explicit} in the Appendix), so that the upper bound could be stated in that case $ \hgpe \leq \pi^{-1} \eps^{-2} + \OO(\eps^{-1}) $.
\end{rem}

\begin{proof}[Proof of Proposition \ref{hGPf minimization}]
	\mbox{}	\\
	The lower bound is trivial since it is sufficient to neglect the positive kinetic energy to get $ \hgpe \geq \tfe $.
	\newline
	The upper bound is obtained by evaluating $ \hgpf $ on a trial function of the form
	\begin{equation}
		\label{trial profile}
	 	\ftrial(r) = c \sqrt {\rho (r)} \xi_{D}(r)
	\end{equation}
	where $c$ is the normalization constant and $ 0 \leq \xi_{D}(r) \leq 1 $ a cut-off function equal to 1 everywhere except in the radial layer $ [1-\delta,1] $, $ \delta \ll (1 + \eps\Omega)^{-1} $, where it goes smoothly to 0, so that $ f $ satisfies Dirichlet boundary conditions. The density $\rho (r)$ coincides with $ \tfm(r) $ if $ \Omega $ is below the threshold $ 2 (\sqrt{\pi}\eps)^{-1} $ and is given by a regularization of $ \tfm $ above it, i.e., if $ \eps \Omega > 2/\sqrt{\pi} $, we set as in \cite[Eq. (4.9)]{CY}
	\begin{equation}
		\label{trial density}
		\rho (r) : = 
			\begin{cases} 
					0,									& \text{if } \: 0 \leq r \leq \rtf,	\\
					\Omega^2 \tfm(\rtf + \Omega^{-1}) (r - \rtf)^2,	&	\text{if } \: \rtf \leq r \leq \rtf + \Omega^{-1},		\\
					\tfm(r),	 							&	\text{otherwise}.
		\end{cases}
	\end{equation}
	Notice that $ \rho $ differs from $ \tfm $ only inside the interval $ [\rtf, \rtf + \Omega^{-1}] $ and
	\beq
		\label{trial point est}
		 \rho(r) = \tfm(r) + \OO(\eps^2 \Omega).
	\eeq
	In order to estimate the normalization constant we use the bound $ \tfm \leq C(1 + \eps\Omega)$, which implies
	\beq
		\label{norm const}
		c^{-2} = \int_{\ba} \diff \rv \: \rho \xi_{D}^2 \geq \int_{\ba}\diff \rv \: \tfm - 2\pi \int_{1-\delta}^{1} \diff r \: r  (1 - \xi_{D}^2) \rho - C \eps^2 \geq 1 - C[(\eps\Omega + 1)\delta + \eps^2].
	\eeq
	The kinetic energy of $ \ftrial $ is bounded as follows:
	\begin{equation}
		\label{kinetic trial}
		\int_{\ba} \diff \rv \: \lf| \nabla \ftrial \ri|^2 \leq 2 c^2 \int_{\ba} \diff \rv \: \lf\{ \lf| \nabla \sqrt{\rho} \ri|^2 + \rho \lf| \nabla \xi_{D} \ri|^2 \ri\} \leq C[ \varepsilon^2 \Omega^2 | \log \varepsilon| + (\eps\Omega + 1) \delta^{-1}],
	\end{equation}
	where we refer to \cite[Eqs. (4.14) and (4.15)]{CY} for the estimate of the kinetic energy of $ \rho $.
	\newline
	The interaction term can be easily estimated as
	\beq
		\label{interaction trial}
		\frac{1}{\eps^2} \int_{\ba} \diff \rv \: \ftrial^4 \leq \frac{1 + C(\eps\Omega + 1) \delta}{\varepsilon^2}  \lf\| \tfm \ri\|_2^2  \leq \eps^{-2}  \lf\| \tfm \ri\|_2^2 + C (\Omega + \eps^{-1})^2 \delta.
	\eeq
	To evaluate the centrifugal term we act as in \cite[Eqs. (4.44) -- (4.46)]{CY}: With analogous notation
	\bml{
		\label{centrifugal trial}
		- \Omega^2 \int_{\ba} \diff \rv \: r^2 \ftrial^2 = - 2 \pi \Omega^2 + 4 \pi \Omega^2 \int_0^1 \diff r \: r \int_0^r \diff r^{\prime} \: r^{\prime} \ftrial^2(r^{\prime}) \leq 	\\
		- 2 \pi \Omega^2 + 4 \pi \Omega^2 \int_0^1 \diff r \: r \int_0^r \diff r^{\prime} \: r^{\prime} \tfm(r^{\prime}) + C [\Omega + (\eps^2 + \Omega^{-2})^{-1} \delta] =	\\
		- \Omega^2 \int_{\ba} \diff \rv \: r^2 \tfm + C (\Omega + \eps^{-2} \delta + \Omega^2 \delta),
	}
	where we have integrated by parts twice and used \eqref{trial point est}, \eqref{norm const} and the normalization of $ \ftrial $. Hence one finally obtains
	\beq
		\hgpf\lf[\ftrial\ri] \leq \tfe + C[ \varepsilon^2 \Omega^2 | \log \varepsilon| + \Omega + (\eps\Omega + 1) \delta^{-1}  + (\Omega + \eps^{-1})^2 \delta].
	\eeq
	It only remains to optimize w.r.t. $ \delta $, which yields $ \delta = \eps $, if $ \Omega \lesssim \eps^{-1} $, and $ \delta = \eps^{1/2} \Omega^{-1/2} $ otherwise, and thus the result.
\end{proof}

A crucial property of the density $ \hgpm $ is stated in the following 

\begin{pro}[\textbf{Behavior of $ \hgpm $}]
	\label{unique maximum}
	\mbox{}	\\
	The density $ \hgpm $ admits a unique maximum at some point $ 0 < \rmax < 1 $.
\end{pro}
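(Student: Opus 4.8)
The plan is to analyze the radial ODE satisfied by the profile $g=\hgpm$ and to show that it is \emph{unimodal}: strictly increasing up to one interior point and strictly decreasing afterwards. By Proposition \ref{hGPf minimization} we may take $g$ radial and (by elliptic bootstrap on \eqref{hGPm variational eq}) smooth in the open disc, with $g>0$ on $[0,1)$ and $g(1)=0$; the Hopf lemma gives $g'(1)<0$, so $g'<0$ in a left neighbourhood of $r=1$. Writing \eqref{hGPm variational eq} in polar coordinates and multiplying by $r$ yields $(rg')'=r\,g\,w$, where I set $w(r):=2\eps^{-2}g(r)^2-\hchem-\Omega^2 r^2$. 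In particular, at any interior critical point $r_0$ one has $g''(r_0)=g(r_0)\,w(r_0)$, and at the origin (where $g'(0)=0$ and $\tfrac1r g'\to g''(0)$) one gets $g''(0)=\tfrac12 g(0)\,w(0)$; thus the sign of $w$ decides whether a critical point is a local maximum or a local minimum. Moreover $w'(r)=4\eps^{-2}g g'-2\Omega^2 r$, so $w$ is \emph{strictly decreasing} on every subinterval of $(0,1]$ on which $g'\le 0$.

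The heart of the argument is to rule out interior local minima of $g$. Suppose $r_*\in(0,1)$ were one; then $g''(r_*)\ge 0$ forces $w(r_*)\ge 0$. Let $(a,r_*)$ be the maximal interval on which $g'<0$. There $w$ is strictly decreasing, so $w(a^+)>w(r_*)\ge 0$. If $a>0$, then $a$ is an interior local maximum, whence $w(a)\le 0$, a contradiction; if $a=0$, then $g'<0$ on $(0,r_*)$, while $w(0^+)>0$ would give $g''(0)>0$ and hence $g$ increasing near the origin, again a contradiction. Either way no interior local minimum can occur. Consequently two interior local maxima would sandwich an interior local minimum, which is impossible, so $g$ has at most one interior local maximum and (the degenerate points $w=0$ being isolated) no other interior critical point. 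Together with $g'<0$ near $r=1$, this leaves exactly two scenarios: either $g$ is strictly decreasing on $(0,1)$ with its maximum at the origin, or $g$ is strictly increasing on $(0,\rmax)$ and strictly decreasing on $(\rmax,1)$ for a single $\rmax\in(0,1)$, which is then the unique maximum.

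It remains to exclude the monotone–decreasing scenario. I would do this by comparison with the Thomas--Fermi profile $\tfm$, which (see the Appendix) is \emph{strictly increasing} in $r$ on its support. From $\hgpf[g]=\int_\ba|\nabla g|^2+\tff[g^2]$ and the fact that $\tfm$ minimizes $\tff$ under the mass constraint, completing the square gives $\hgpf[g]-\tfe\ge \eps^{-2}\|g^2-\tfm\|_{L^2(\ba)}^2$; combined with the energy bound $\hgpe\le \tfe+\OO(\eps^{-1})+\OO(\eps^{1/2}\Omega^{3/2})$ of Proposition \ref{hGPf minimization}, this yields $\|g^2-\tfm\|_{L^2(\ba)}=o(\|\tfm\|_{L^2(\ba)})$ in the regime $\Omega\gg|\log\eps|$. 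But a monotone non-increasing function cannot be $L^2$-close to the strictly increasing $\tfm$, whose oscillation is comparable to its $L^2$ norm in this regime (indeed $\tfm(0)$ is either much smaller than $\max\tfm$ or vanishes once $\Omega>2(\sqrt\pi\eps)^{-1}$). Hence $g$ is not monotone decreasing, the first scenario is excluded, and the maximum is attained at a unique interior point $0<\rmax<1$.

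The main obstacle is the last step rather than the ODE sign analysis: at leading (TF) order one has $w(0)=o(1)$, so the sign of $w(0)$ — equivalently, whether the center is a dip or a peak of the density — is fixed only by subleading, gradient-induced corrections and cannot be read off directly from the equation. This is the concrete manifestation of the phenomenon stressed in the introduction, namely that under Dirichlet conditions the profile ceases to be monotone and the position of its maximum must be pinned down by quantitative comparison with the strictly increasing TF density.
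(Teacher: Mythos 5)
Your core argument for the uniqueness of the maximum is correct but follows a genuinely different route from the paper. The paper works variationally: after the change of variable $s=r^2$ it assumes two local maxima, flattens $\hgpm^2$ on two matched level sets (one just below the value of the leftmost maximum, one just above the value of the intervening minimum) so as to conserve mass, and checks that the kinetic, centrifugal and quartic terms all decrease, contradicting minimality. You instead exploit the ODE \eqref{hGPm variational eq}: writing $(rg')'=rgw$ with $w=2\eps^{-2}g^2-\hchem-\Omega^2r^2$ and observing that $w$ is strictly decreasing wherever $g'\le 0$, you rule out interior local minima by a clean sign-propagation argument. This is a valid alternative (it uses $g>0$ in the open disc and smoothness of the radial profile, both available), and it buys a sharper structural conclusion -- strict monotonicity on either side of the critical point -- without the level-set bookkeeping of the rearrangement; the paper's argument, on the other hand, needs no regularity beyond $H^1$ and no discussion of degenerate critical points.

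The weak point is your exclusion of the monotone-decreasing scenario, i.e. the claim $\rmax>0$. Your assertion that the oscillation of $\tfm$ is comparable to its $L^2$ norm is false for $\Omega\ll\eps^{-1}$: there $\tfm(r)=\pi^{-1}+\half\eps^2\Omega^2(r^2-\half)$, so $\tfm(0)$ and $\max\tfm$ differ only by $\OO(\eps^2\Omega^2)\ll 1$ while $\lf\|\tfm\ri\|_{L^2}\sim 1$. A quantitative version of your comparison shows that a non-increasing profile is incompatible with $\lf\| \hgpm^2-\tfm\ri\|_{L^2}\leq \OO(\eps^{1/2}+\eps^{5/4}\Omega^{3/4})$ only when $\eps^2\Omega^2$ dominates that error, i.e. for $\Omega\gg\eps^{-3/4}$; below that threshold your argument does not exclude a maximum at the origin, and the sign of $w(0)$ is indeed not accessible at TF order, as you note. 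To be fair, the paper's own proof only establishes that there is at most one local maximum and is silent on the strict positivity of $\rmax$; the location of the maximum is only pinned down a posteriori (Remark 2.2), and again only for $\Omega\gg\eps^{-3/4}$, which is the regime where it is actually used. So your last step should either be restricted to that regime or replaced by a different argument if strict interiority of $\rmax$ is really needed for all $\Omega\gg|\log\eps|$.
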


\begin{proof} 
	The method is very similar to what is used in \cite[Lemma 2.1]{CRY}, although in that case one considers the Neumann problem. After a variable transformation $r^2 \rightarrow s$ the functional $ \hgpf $ becomes
	\beq
		%\label{gpfuncvartr}
		\pi \int^1_0 \diff s \left \{ s| \nabla f|^2 - \Omega^2 s f^2 + \eps^{-2} f^4 \right \},
\end{equation}
and the normalization condition
	\beq
		%\label{buld}
 		\int^1_0 \diff s \: g^2 = \pi^{-1}.
	\eeq
	We first observe that the Dirichlet boundary condition implies that $g$ cannot be constant, otherwise we would have $ \hgpm = 0 $ everywhere, contradicting the mass constraint.
	\newline
	Suppose now that  $ \hgpm $ has more than one local maximum. Then it has a local minimum at some point $ s = s_2 $ with $ 0 < s_2 < 1 $, on the right side of a local maximum at the position $ s = s_1 $, i.e., $ s_1 < s_2 $. For $ 0 < \epsilon < \hgpm^2 (s_1) - \hgpm^2 (s_2)$, we consider the set $\mathcal I_\epsilon = \{ 0 \leq s < s_2: \: \hgpm^2(s_1) - \epsilon \leq \hgpm^2(s) \leq \hgpm^2 (s_1) \}$: Since $ \hgpm $ is continuous, the function 
	\beq
		\Phi(\epsilon) := \int_{\mathcal I_\epsilon} \diff s \: \hgpm^2
	\eeq
	is strictly positive and $ \Phi(\epsilon) \rightarrow 0 $ as $ \epsilon \rightarrow 0$. Likewise, for any $ \kappa > 0 $, we set $ \mathcal J_\kappa = \{ s_1 < s \leq 1: \: \hgpm^2(s_2) \leq \hgpm^2(s) \leq \hgpm^2 (s_2) + \kappa \}$, so that 
 	\beq
 		\Gamma(\kappa) : = \int_{\mathcal J_\kappa} \diff s \: \hgpm^2
 	\eeq
  	has the same properties as $ \Phi$. 
	\newline
	Hence, by the continuity of $\hgpm $, there always exist $\overline \epsilon, \overline \kappa > 0$, such that $\hgpm^2(s_2) + \overline \kappa < \hgpm^2(s_1) - \overline \epsilon$ and  $ \Phi(\overline \epsilon) = \Gamma(\overline \kappa)$. Note that this implies that $\mathcal I_{\overline \epsilon}$ and $\mathcal J_{\overline \kappa}$ are disjoint. 
	\newline
	We now define a new normalized function $ \tilde{g} $ by
	\beq
		%\label{mini2}
		\tilde{g}^2 (s) : = 
		\begin{cases} 
				\hgpm^2 (s_1)^2 - \overline \epsilon,		&	\text{if } s \in \mathcal I_{\overline \epsilon},	\\ 
				\hgpm^2(s_2)+ \overline \kappa, 			&	\text{if }s \in \mathcal J_{\overline \kappa}, \\
				\hgpm^2(s),						&	\text{otherwise}.
		\end{cases}
	\eeq
	The gradient of $ \tilde{g} $ vanishes in the intervals $\mathcal I_{\overline \epsilon}$ and $ \mathcal J_{\overline \kappa}$ and equals the gradient of $ \hgpm $ everywhere else, so that the kinetic energy of $ \tilde{g} $ is smaller or equal to the one of $ \hgpm $. The centrifugal term is lowered by $ \tilde{g} $, because $ -s $ is strictly decreasing and the value of $ \tilde{g}^2 $ on $\mathcal I_{\overline \epsilon}$ is larger than on $ \mathcal J_{\overline \kappa}$. Finally since mass is rearranged from $\mathcal I_{\overline \epsilon}$ to $J_{\overline \kappa}$, where the density is lower, $ \lf\| \tilde{g} \ri\|_4^4 < \lf\| \hgpm \ri\|_4^4 $.
	\newline
	Therefore the functional evaluated on $ \tilde{g} $ is strictly smaller than $ \hgpe $, which contradicts the assumption that $ \hgpm $ is a minimizer. Hence $ \hgpm $ has only one maximum.
\end{proof}

The energy asymptotics \eqref{hGPe asympt} implies that the density $ \hgpm^2 $ is close to the TF minimizer $ \tfm $:

	\begin{pro}[\textbf{Preliminary estimates of $ \hgpm $}]
		\label{preliminary est hgpm}
		\mbox{}	\\
		If $ \Omega \ll \eps^{-3} |\log\eps|^{-2} $ as $ \eps \to 0 $,
		\beq
			\label{hGPm l2 est}
			\lf\| \hgpm^2 - \tfm \ri\|_{L^2(\ba)} \leq \OO(\eps^{1/2} + \eps^{5/4} \Omega^{3/4}),	
		\eeq
		\beq
			\label{hGPm linfty est}
			\hgpm^2(\rmax) = \lf\| \hgpm \ri\|^2_{L^{\infty}(\ba)} \leq \lf\| \tfm \ri\|_{L^{\infty}(\ba)}  \lf(1 +\OO(\sqrt{\eps}) + \OO(\eps^{3/4} \Omega^{1/4}) \ri).
		\eeq
	\end{pro}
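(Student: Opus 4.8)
The plan is to derive both bounds from the energy expansion \eqref{hGPe asympt}, treating the $ L^2 $ estimate \eqref{hGPm l2 est} by a convexity argument for the TF functional and the $ L^\infty $ estimate \eqref{hGPm linfty est} by the maximum principle applied to \eqref{hGPm variational eq}.

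For \eqref{hGPm l2 est}, I would first note that, writing $ \rho = g^2 $, the density functional splits as $ \hgpf[g] = \int_{\ba} |\nabla g|^2 + \tff[g^2] $, so that $ \tff[g^2] \leq \hgpe $ because the kinetic term is nonnegative. Next I would exploit the strict convexity of $ \tff $ in $ \rho $: for any admissible density $ \rho $ (i.e. $ \rho \geq 0 $, $ \|\rho\|_1 = 1 $), completing the square gives
\[
\tff[\rho] - \tfe = \eps^{-2} \lf\| \rho - \tfm \ri\|_2^2 + \int_{\ba} (\rho - \tfm)\lf( 2\eps^{-2}\tfm - \Omega^2 r^2 \ri).
\]
The last integral is nonnegative: on $ \tfsupp $ the bracket equals the constant $ \tfchem $, while off the support it equals $ -\Omega^2 r^2 \geq \tfchem $ (the TF variational inequality, using the explicit minimizer from the Appendix), so that since $ \rho \geq 0 $ and $ \int (\rho - \tfm) = 0 $ the integral is bounded below by $ \tfchem \int (\rho - \tfm) = 0 $. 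Applying this with $ \rho = g^2 $ and combining $ \tff[g^2] \leq \hgpe $ with \eqref{hGPe asympt} yields $ \eps^{-2}\|g^2 - \tfm\|_2^2 \leq \hgpe - \tfe \leq \OO(\eps^{-1}) + \OO(\eps^{1/2}\Omega^{3/2}) $, and taking square roots gives \eqref{hGPm l2 est}.

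For \eqref{hGPm linfty est}, I would use that by Proposition \ref{unique maximum} the maximum of $ g $ is attained at an interior point $ 0 < \rmax < 1 $ where $ g > 0 $, so that $ \Delta g(\rmax) \leq 0 $. Evaluating \eqref{hGPm variational eq} at $ \rmax $ and dividing by $ g(\rmax) $ gives the pointwise bound
\[
2\eps^{-2} g^2(\rmax) \leq \hchem + \Omega^2 \rmax^2 \leq \hchem + \Omega^2.
\]
Since the explicit TF minimizer is increasing in $ r $ and satisfies $ \|\tfm\|_\infty = \tfm(1) = \tfrac{\eps^2}{2}(\tfchem + \Omega^2) $, while $ \rmax \leq 1 $, this reduces the claim to controlling $ \hchem - \tfchem $. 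Using $ \hchem = \hgpe + \eps^{-2}\|g\|_4^4 $ and $ \tfchem = \tfe + \eps^{-2}\|\tfm\|_2^2 $ I would write
\[
\tfrac{\eps^2}{2}(\hchem - \tfchem) = \tfrac{\eps^2}{2}(\hgpe - \tfe) + \tfrac12 \lf( \|g^2\|_2^2 - \|\tfm\|_2^2 \ri),
\]
bound the first term by \eqref{hGPe asympt} and the second by $ \tfrac12 \|g^2 - \tfm\|_2 (\|g^2\|_2 + \|\tfm\|_2) $ via Cauchy--Schwarz, and insert the just-proved \eqref{hGPm l2 est}.

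The main obstacle will be the final bookkeeping: to recognize the corrections as the stated relative errors $ \OO(\sqrt\eps) + \OO(\eps^{3/4}\Omega^{1/4}) $ times $ \|\tfm\|_\infty $, one must feed in the sizes of $ \|\tfm\|_\infty $ and $ \|\tfm\|_2 $ from the Appendix, which change across the threshold $ \Omega \sim \eps^{-1} $ (namely $ \|\tfm\|_\infty \sim 1 $, $ \|\tfm\|_2 \sim 1 $ for $ \Omega \lesssim \eps^{-1} $, and $ \|\tfm\|_\infty \sim \eps\Omega $, $ \|\tfm\|_2 \sim (\eps\Omega)^{1/2} $ for $ \Omega \gg \eps^{-1} $). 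In the first regime the dominant contribution $ \tfrac12 \|g^2 - \tfm\|_2 \|\tfm\|_2 \sim \eps^{1/2} $ produces the relative error $ \OO(\sqrt\eps) $; in the second the term $ \eps^{5/4}\Omega^{3/4}\|\tfm\|_2 \sim \eps^{7/4}\Omega^{5/4} $, divided by $ \|\tfm\|_\infty \sim \eps\Omega $, gives $ \OO(\eps^{3/4}\Omega^{1/4}) $, whereas the energy term $ \OO(\eps^{5/2}\Omega^{3/2}) $ is of lower order since $ \eps^{3/4}\Omega^{1/4} \ll 1 $ throughout the range $ \Omega \ll \eps^{-3}|\log\eps|^{-2} $. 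Checking that no other term in this expansion dominates is the only genuinely delicate point.
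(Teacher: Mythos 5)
Your proposal is correct and follows essentially the same route as the paper, which simply defers to \cite[Proposition 2.1]{CRY}: there the $L^2$ bound is obtained from the energy expansion via the quadratic expansion (convexity) of $\tff$ around $\tfm$, and the $L^\infty$ bound from the variational equation at the interior maximum combined with the chemical potential estimate \eqref{chemical diff}, exactly as you do. Your bookkeeping of the error terms across the threshold $\Omega \sim \eps^{-1}$ also matches the stated remainders.
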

	
	\begin{proof}
		See, e.g., \cite[Proposition 2.1]{CRY}. Note that \eqref{hGPm l2 est} implies the bound
		\beq
			\label{chemical diff}
			\lf| \hchem - \tfchem \ri| \leq C \lf( \eps\Omega + 1 \ri)^{1/2} \lf( \eps^{-3/2} + \eps^{-3/4} \Omega^{3/4} \ri),
		\eeq
		which yields \eqref{hGPm linfty est}.
	\end{proof}

Next proposition is going to be crucial in the proof of the main results since it allows to replace the density $ \hgpm^2 $ with the TF density $ \tfm $: On the one hand, using the fact that the latter is explicit, this result will be used to obtain a suitable lower bound on $ \hgpm^2 $ in some region far from the boundary and, on the other hand, it implies that the boundary layer where $ \hgpm $ goes to 0 is very small.

	\begin{pro}[\textbf{Pointwise estimate of $ \hgpm $}]
		\label{pointwise bound dens}
		\mbox{}	\\		
		If $ \Omega \lesssim \bar{\Omega} \eps^{-1} $ with $ \bar{\Omega} < 2/\sqrt{\pi} $ as $ \eps \to 0 $,
		\beq
			\label{pointwise bound 1}
			\lf| \hgpm^2(r) - \tfm(r) \ri| \leq \OO(\sqrt{\eps}),
		\eeq
		for any $ 0 \leq r \leq 1 - \OO(\eps|\log\eps|) $. 
		\newline
		On the other hand\footnote{This second estimate applies also if $ \Omega = 2(\sqrt{\pi}\eps)^{-1} (1 - o(1)) $, in which case $ \rtf $ has to be set equal to 0.} if $ 2(\sqrt{\pi}\eps)^{-1} \lesssim \Omega \lesssim \eps^{-2} $,
		\beq
			\label{pointwise bound 2}
			\lf| \hgpm^2(r) - \tfm(r) \ri| \leq \OO(\eps^{7/4} \Omega^{5/4}),
		\eeq
		for any $ \rtf + \OO(\eps^{-1} \Omega^{-1} |\log\eps|^{-2}) \leq r \leq 1 - \OO(\eps^{1/2}\Omega^{-1/2} |\log\eps|^{3/2}) $.
	\end{pro}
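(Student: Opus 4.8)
The plan is to upgrade the integral bound \eqref{hGPm l2 est} to a pointwise one by a maximum–principle analysis of the difference $\phi := \hgpm^2 - \tfm$, based on the variational equation \eqref{hGPm variational eq} and the explicit Thomas--Fermi relation $2\eps^{-2}\tfm = \tfchem + \Omega^2 r^2$ (valid where $\tfm>0$). Writing $u:=\hgpm$ (radial, positive in the interior by Proposition \ref{hGPf minimization}), using $\Delta u^2 = 2|\nabla u|^2 + 2u\Delta u$ together with \eqref{hGPm variational eq} and substituting $u^2=\tfm+\phi$, I would derive, inside the TF support, the identity
\[
-\Delta\phi + 4\eps^{-2}\tfm\,\phi = 2\tfm(\hchem-\tfchem) + 2\eps^2\Omega^2 - 4\eps^{-2}\phi^2 - 2(\tfchem-\hchem)\phi - 2|\nabla u|^2 .
\]
The crucial structural feature is the positive zeroth–order coefficient $4\eps^{-2}\tfm$. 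It is bounded below precisely in the regions singled out in the statement (away from the boundary layer at $r=1$, and, in the second regime, away from the inner edge $r=\rtf$ where $\tfm$ vanishes), which is the reason for the restrictions on $r$.

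For the upper bound I would evaluate this identity at an interior maximum $r^\ast$ of $\phi$: there $\Delta\phi\le 0$ and $\nabla\phi=0$, so the two manifestly negative terms $-4\eps^{-2}\phi^2$ and $-2|\nabla u|^2$ may be discarded, leaving
\[
\phi(r^\ast)\,\big[4\eps^{-2}\tfm - 2(\hchem-\tfchem)\big] \le 2\tfm(\hchem-\tfchem)+2\eps^2\Omega^2 .
\]
Since $4\eps^{-2}\tfm$ dominates $|\hchem-\tfchem|$ in the admissible region, this gives $\phi\lesssim \tfrac12\eps^2(\hchem-\tfchem)+\eps^4\Omega^2(2\tfm)^{-1}$. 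Inserting the chemical–potential estimate \eqref{chemical diff} together with the lower bounds on $\tfm$ valid in each region reproduces exactly $\OO(\sqrt\eps)$ in the first regime and $\OO(\eps^{7/4}\Omega^{5/4})$ in the second, the curvature term $\eps^4\Omega^2/\tfm$ being negligible in both. Hence $\sup\phi$ is controlled by the larger of this quantity and the value of $\phi$ at the endpoints of the admissible interval.

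The lower bound is the delicate point: at an interior minimum the gradient term equals $|\nabla u|^2=|\nabla\tfm|^2/(4u^2)$, which blows up wherever $u$ becomes small, so the naive maximum principle fails. Instead I would exploit that on the set where $u^2<\tfrac12\tfm$ the variational equation gives $\Delta u = u\big[2\eps^{-2}u^2-(\hchem+\Omega^2r^2)\big]<0$, i.e.\ $u$ is superharmonic there and can admit no interior minimum. Combined with \eqref{hGPm l2 est}, which guarantees the existence of ``good'' radii where $\phi$ is already small, this minimum principle propagates a lower bound on $u^2$ across the interior of the region and excludes any large interior dip of $\phi$.

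The genuinely hardest step — and the main obstacle — is the analysis in the two excluded layers, where one must show that at $r=1-\OO(\eps|\log\eps|)$ (and, in the second regime, at the inner and outer edges appearing in the statement) the profile has already recovered to $\sqrt{\tfm}$ up to the claimed accuracy, so that the interior bounds above can be matched to concrete endpoint values. Exploiting radial symmetry, I would do this by comparison with sub- and supersolutions of the radial equation that decay on the local healing length — of order $\eps$ near $r=1$ when $\Omega\lesssim\eps^{-1}$, and of order $\eps^{1/2}\Omega^{-1/2}$ when $\Omega\gtrsim\eps^{-1}$ — the logarithmic factors in the widths of the excluded layers being exactly what converts the exponential decay $e^{-c|\log\eps|}$ into a power of $\eps$ small enough to be absorbed into the stated error. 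Matching these endpoint values with the interior maximum- and minimum-principle bounds then closes both estimates \eqref{pointwise bound 1} and \eqref{pointwise bound 2}.
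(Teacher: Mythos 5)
Your core mechanism --- a maximum-principle comparison in which the final error is the chemical-potential discrepancy $\frac12\eps^2|\hchem-\tfchem|$ (which via \eqref{chemical diff} gives exactly $\OO(\sqrt\eps)$ and $\OO(\eps^{7/4}\Omega^{5/4})$), with the excluded layers handled by exponential damping over $|\log\eps|$ healing lengths --- is the same as the paper's, but your organization inverts the paper's and this leaves real gaps. The paper (following \cite[Proposition 2.6]{CRY}) works purely locally: it rewrites the variational equation as $-\Delta \hgpm = 2\eps^{-2}(\hgpd-\hgpm^2)\hgpm$ with $\hgpd(r):=\frac12(\eps^2\hchem+\eps^2\Omega^2r^2)$, notes that $|\hgpd-\tfm|$ is already of the claimed size and that $\hgpd$ is bounded below on each annulus $[r_0-\delta,r_0+\delta]$ with $\delta=\eps|\log\eps|$ resp.\ $\eps^{1/2}\Omega^{-1/2}|\log\eps|^{3/2}$ (this is precisely where the restrictions on $r$ enter), and then sandwiches $\hgpm^2$ between local sub- and supersolutions built around $\hgpd$ with crude $\OO(1+\eps\Omega)$ data at $r_0\pm\delta$, damped exponentially toward $r_0$. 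Comparing with $\hgpd$ rather than with $\tfm$ is what makes this clean: $\hgpd$ is the exact zero of the nonlinearity, so there is no source term, no global extremum to locate, and no degenerate-gradient issue; one local construction gives the estimate at every admissible $r_0$, endpoints included. What you call the ``hardest step'' is therefore not a patch for the endpoints of your interval --- it \emph{is} the entire proof, and once it is in place your global interior-extremum argument becomes superfluous.

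The concrete gap is in your lower bound. The superharmonicity of $\hgpm$ on $\{\hgpm^2<\tfm/2\}$ can at best exclude a dip below $\tfm/2$, i.e.\ an error of order $\tfm\sim 1$ resp.\ $\eps\Omega$, which misses the claimed precision by factors $\sqrt\eps$ resp.\ $\eps^{3/4}\Omega^{1/4}$; and even this crude step needs control of $\hgpm$ on the boundary of that set, which again falls back on the deferred layer analysis. To close the argument you would have to make a second pass through your interior-minimum identity, using $|\nabla\hgpm|^2=|\nabla\tfm|^2/(4\hgpm^2)\le|\nabla\tfm|^2/(2\tfm)\le\eps^4\Omega^4/(2\tfm)$ to show the gradient term contributes only $\OO(\eps^6\Omega^4/\tfm^2)$, which is indeed negligible in both regimes but is nowhere in your write-up; the quadratic term $-4\eps^{-2}\phi^2$ must likewise be absorbed by writing $4\eps^{-2}\phi(\tfm+\phi)=4\eps^{-2}\phi\,\hgpm^2$ and using the same crude lower bound. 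As it stands the proposal establishes the upper bound at an interior maximum and correctly identifies all the difficulties, but only gestures at their resolution; the paper's single local sub/supersolution construction disposes of all of them at once.
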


	\begin{rem}{\it (Position of the maximum of $ \hgpm $)}
		\label{maximum pos rem}
		\mbox{}	\\
		The pointwise estimates \eqref{pointwise bound 1} and \eqref{pointwise bound 2} give some information about the position of the maximum point of $ \hgpm $. Assuming that $ \Omega \lesssim \eps^{-1} $, one has the lower bound
		\bdm
			\hgpm^2(\rmax) \geq \tfm(1 - \eps |\log\eps|) - C \sqrt{\eps},
		\edm
		since \eqref{pointwise bound 1} holds true up to a distance $ \eps|\log\eps| $ from the boundary. Hence one immediately obtains 
		\beq
			\label{lb max pos 1}
			\rmax \geq 1 - \OO(\eps^{-3/2}\Omega^{-2}),
		\eeq
		which for $ \eps^{-3/4} \ll \Omega \lesssim \eps^{-1} $ implies that $ \rmax = 1 - o(1) $. For smaller angular velocities the above inequality becomes useless: Since $ \tfm $ is approximately constant in those regimes, i.e., $ \tfm(r) = \pi^{-1} + \OO(\eps^2\Omega^2) $, the pointwise estimate \eqref{pointwise bound 1} is too rough to extract information about  the maximum of $ \hgpm $.
		\newline
		On the opposite if  $ 2(\sqrt{\pi}\eps)^{-1} \lesssim \Omega \lesssim \eps^{-2} $, we get from \eqref{pointwise bound 2} the following: Either $ \rmax \geq 1 -  \eps^{1/2}\Omega^{-1/2} |\log\eps|^{3/2} $ or the pointwise estimate \eqref{pointwise bound 2} applies at $ \rmax $ yielding $ \hgpm^2(\rmax) \leq \tfm(\rmax) + C \eps^{7/4} \Omega^{5/4}  $ and
		\bdm
			\hgpm^2(\rmax) \geq \tfm(1 - \eps^{1/2} \Omega^{-1/2} |\log\eps|^{3/2}) - C \eps^{7/4} \Omega^{5/4},
		\edm
		so that, in any case,
		\beq
			\label{lb max pos 2}
			\rmax \geq 1 - \OO(\eps^{-1/4} \Omega^{-3/4}),
		\eeq
		since $ \eps^{1/2}\Omega^{-1/2} |\log\eps|^{3/2} \ll \eps^{-1/4} \Omega^{-3/4} $.
	\end{rem}

	\begin{rem}{\it (Improved pointwise estimates of $ \hgpm $)}
		\mbox{}	\\
		Thanks to the remark above, it is possible to refine the estimates \eqref{pointwise bound 1} and \eqref{pointwise bound 2} and extend them up to the maximum point of $ \hgpm $. More precisely one has the following: If $ \Omega \lesssim \bar{\Omega} \eps^{-1} $ with $ \bar{\Omega} < 2/\sqrt{\pi} $,
		\beq
			\label{ref pointwise bound 1}
			\lf| \hgpm(r) - \tfm(r) \ri| \leq \OO(\sqrt{\eps}),
		\eeq
		for any $ 0 \leq r \leq \max[\rmax, \: 1 - \eps|\log\eps|] $. If $ 2(\sqrt{\pi}\eps)^{-1} \lesssim \Omega \lesssim \eps^{-2} $,
		\beq
			\label{ref pointwise bound 2}
			\lf| \hgpm(r) - \tfm(r) \ri| \leq \OO(\eps^{7/4} \Omega^{5/4}),
		\eeq
		for any $ \rtf + \eps^{-1} \Omega^{-1} |\log\eps|^{-2} \leq r \leq \max[\rmax, \: 1 - \eps^{1/2} \Omega^{-1/2} |\log\eps|^{3/2}] $.
		\newline
		The extension can be easily done in the first case ($ \Omega \lesssim \bar{\Omega} \eps^{-1} $) by noticing that one can suppose $ \rmax \geq 1 - \eps|\log\eps| $ (otherwise the bound is given by the original result), so that \eqref{ref pointwise bound 1} follows from \eqref{hGPm linfty est} together with
		\bdm
			\lf\| \tfm \ri\|_{\infty} - \tfm(1 - \eps|\log\eps|) = \tfm(1) - \tfm(1 - \eps|\log\eps|) \leq C \eps^{3} \Omega^2 |\log\eps| \ll \OO(\sqrt{\eps}),
		\edm	
		and the fact that $ \hgpm $ is increasing in $ \ba(\rmax) $. 
		\newline
		In the other regime the key point is the estimate \eqref{lb max pos 2}, which implies 
		\bdm
			\max_{1 - \eps^{-1/4} \Omega^{-3/4} \leq r \leq 1} \lf| \tfm(1) - \tfm(r) \ri| \leq C \eps^{7/4} \Omega^{5/4}.
		\edm
	\end{rem}

	%\begin{rem}{\it (Upper estimate for $ \hgpm^2 $)}
		%\mbox{}	\\
		%We stress that the following upper bounds hold true:
		%\beq
			%\label{pointwise ub 1}
			%\hgpm^2(r) \leq \tfm(r) + \OO(\sqrt{\eps}),
		%\eeq
		%for $ \rv \in \ba $ and $ \Omega \lesssim \bar{\Omega} \eps^{-1} $ with $ \bar{\Omega} < 2/\sqrt{\pi} $, and
		%\beq
			%\label{pointwise ub 2}
			%\hgpm^2(r) \leq 
				%\begin{cases}
					%\eps\Omega |\log\eps|^{-1},				&	\mbox{if} \:\: \rv \in \ba \setminus \At,	\\
					%\tfm(r) + \OO(\eps^{7/4} \Omega^{5/4}),		&	\mbox{if} \:\: \rv \in \At,
				%\end{cases}
		%\eeq
		%for $ (2/\sqrt{\pi}) \eps^{-1} \lesssim \Omega \ll \eps^{-2} $.
		%\newline
		%Indeed in the first case it is sufficient to exploit the monotonicity of $ \hgpm $ inside $ 0 \leq r \leq \eps |\log\eps| $ and $ 1 - \eps|\log\eps| \leq r \leq 1 $ and obeserve that in both regions $ \tfm $ can vary of a quantity in any case much smaller than $ \eps^3 \Omega |\log\eps| \ll \sqrt{\eps} $.
		%\newline
		%For larger angular velocities the extension to the region $ \rmax \leq r \leq 1 $ is a trivial consequence of \eqref{lb max pos 2} and monotonicity of $ \hgpm $. Similarly $ \hgpm^2(r) \leq \hgpm^2(\rd) $ for any $ r \leq \rd $ and the latter can be bounded from above by \eqref{pointwise bound 2}, since $ \tfm(\rd) = \eps\Omega |\log\eps|^{-1} \gg \eps^{7/4} \Omega^{5/4} $.
	%\end{rem}

\begin{proof}[Proof of Proposition \ref{preliminary est hgpm}]
	\mbox{}	\\
	The proof is done exactly as in \cite[Proposition 2.6]{CRY}, so we highlight only the main differences.
	\newline
	The result is obtained by exhibiting suitable local sub- and super-solutions to the variational equation
	\beq
		-\Delta \hgpm = 2 \eps^{-2} \lf( \hgpd - \hgpm^2 \ri) \hgpm,
	\eeq
	where the function $ \hgpd $ is given by
	\beq
		 \label{hGPdd}
		\hgpd(r) : = \half \lf( \eps^2 \hchem + \eps^2 \Omega^2 r^2 \ri).
	\eeq
	\newline
	By \eqref{chemical diff}, if $ \Omega \leq 2(\sqrt{\pi}\eps)^{-1}$,
	\beq
		\label{diff densities 1}
		\lf\| \hgpd- \tfm \ri\|_{L^{\infty}(\ba)} \leq C \sqrt{\eps},
	\eeq
	whereas if $ \Omega > 2(\sqrt{\pi}\eps)^{-1} $,
	\beq
		\label{diff densities 2}
		\lf| \hgpd(r) - \tfm(r) \ri| \leq C \eps^{7/4} \Omega^{5/4},
	\eeq
	for any $ r \geq \rtf $.
	\newline
	In order to apply the maximum principle one needs a lower bound on the function $ \hgpd $ in the domain under consideration and it is provided by the above estimates: In the fist case, i.e., if $ \Omega \leq \bar{\Omega} \eps^{-1} $, $ \tfm(r) \geq C(\bar{\Omega}) > 0 $ and the pointwise estimate \eqref{diff densities 1} guarantees the positivity of $ \hgpd $ everywhere; otherwise, if $  2(\sqrt{\pi}\eps)^{-1} \lesssim \Omega  \lesssim \eps^{-2} $, the condition $ r \geq \rtf + \OO(\eps^{-1} \Omega^{-1} |\log\eps|^{-2}) $ yields
	\bdm
		\tfm(r) \geq C \eps \Omega |\log\eps|^{-2} \gg \OO( \eps^{7/4} \Omega^{5/4}) \geq \lf| \hgpd(r) - \tfm(r) \ri|,
	\edm
	so that $ \hgpd(r) > C \eps\Omega |\log\eps|^{-2} > 0 $ in the region considered.
	\newline
	The rest of the proof is done as in \cite[Proposition 2.6]{CRY} in a local annulus $ [r_0 - \delta, r_0 + \delta] $ with $ \delta  = \eps |\log\eps| $, if  $ \Omega \lesssim \bar{\Omega} \eps^{-1} $, and $ \delta = \eps^{1/2} \Omega^{-1/2} |\log\eps|^{3/2} $ otherwise. Note that the lack of monotonicity of the density profile $ \hgpm $ prevents a straightforward extension of the estimate to the whole support of $ \tfm $.
\end{proof}

For angular velocities larger than the threshold $ 2(\sqrt{\pi}\eps)^{-1} $ the TF density develops a hole centered at the origin of radius $ \rtf $ (see the Appendix) and in this case one can show that the density $ \hgpm $ is exponentially small there:

\begin{pro}[\textbf{Exponential smallness of $ \hgpm $ inside the hole}]
		\label{exponential smallness}
		\mbox{}	\\
		If $ \Omega \geq 2(\sqrt{\pi}\eps)^{-1} $ as $ \eps \to 0 $,
		\beq
			\label{exp small}
			\hgpm^2(r) \leq C \eps \Omega \: \exp \lf\{ - \frac{1-r^2}{1 - \rtf^2} \ri\}
		\eeq 
		 for any $ \rv \in \B $. Moreover, if $ \Omega \geq 2(\sqrt{\pi}\eps)^{-1} + \OO(1) $,  there exist a strictly positive constant $ c $ such that for any $ \rv $ such that $ r \leq \rtf - \OO(\eps^{7/6}) $,
		\beq
			\label{improved exp small}
			\hgpm^2(r)  \leq C  \eps \Omega  \: \exp \lf\{ - \frac{c}{\eps^{1/6}} \ri\}.
		\eeq
	\end{pro}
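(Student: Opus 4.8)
The plan is to prove both bounds by the maximum principle applied to the variational equation for $\hgpm$, exactly in the spirit of \cite[Proposition 2.6]{CRY}. Writing the equation as $-\Delta \hgpm = 2\eps^{-2}(\hgpd - \hgpm^2)\hgpm$ with $\hgpd$ as in \eqref{hGPdd}, and using $\hgpm\ge 0$ together with $\hgpm^2\ge 0$, one sees that inside the hole $\hgpm$ is a subsolution of the linear operator $-\Delta + W$, where I set
\[
	W(r):=-2\eps^{-2}\hgpd(r)=-\hchem-\Omega^2 r^2 .
\]
By \eqref{chemical diff} and the explicit TF data recalled in the Appendix (namely $\tfchem=-\Omega^2\rtf^2$ and $1-\rtf^2=2(\sqrt\pi\eps\Omega)^{-1}$) one has $\hchem=\tfchem(1+o(1))$, so that $W(r)=\Omega^2(\rtf^2-r^2)(1-o(1))\ge 0$ as soon as $\rtf-r$ is not too small. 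The two remaining ingredients are the a priori bound $\|\hgpm\|_\infty^2\le C\eps\Omega$ (from \eqref{hGPm linfty est} and $\|\tfm\|_\infty=\OO(\eps\Omega)$), used to prescribe boundary data, and the pointwise estimate \eqref{pointwise bound 2}, used to control $\hgpm$ near $r=\rtf$.

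For \eqref{exp small} I would compare $\hgpm$ on $\B(\rtf)$ with the radial supersolution $h(r):=\sqrt{C\eps\Omega}\,\exp\{-(1-r^2)/(2(1-\rtf^2))\}$, chosen so that $h^2$ reproduces the right-hand side of \eqref{exp small}. A direct computation gives $\Delta h/h=2(1-\rtf^2)^{-1}+r^2(1-\rtf^2)^{-2}$, and inserting $1-\rtf^2=2(\sqrt\pi\eps\Omega)^{-1}$ one checks that $W\ge \Delta h/h$, i.e. $-\Delta h+Wh\ge 0$, throughout $\B(r_*)$ with $r_*^2=\rtf^2-\OO(\eps\Omega^{-1})$. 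Since $h(r_*)\ge\|\hgpm\|_\infty\ge\hgpm(r_*)$ for $C$ large, the maximum principle gives $\hgpm\le h$ on $\B(r_*)$, which is \eqref{exp small} for $r\le r_*$. On the thin shell $r_*\le r\le\rtf$ and in the bulk $r\ge\rtf$ the inequality \eqref{exp small} is automatic, because its right-hand side is $\gtrsim\eps\Omega\,e^{-1}$ while $\hgpm^2\le C\eps\Omega$ there (and is in fact small near $r=\rtf$ by \eqref{pointwise bound 2}).

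The improved estimate \eqref{improved exp small} needs a sharper comparison function, since near the edge $W(r)\approx\Omega^2(\rtf^2-r^2)$ grows linearly in the depth $\rtf-r$, so the genuine decay rate $\sqrt W$ eventually far exceeds the constant rate used above. I would use a WKB-type supersolution $h=\sqrt{C\eps\Omega}\,e^{\psi}$ with $\psi(r)=-\min\{\int_r^{\rtf}\sqrt W,\,M\}$ and cap $M\sim\eps^{-1/6}$. Where $\psi$ follows the WKB profile, $\psi'=\sqrt W$ and one computes $-\Delta h+Wh=h\,(\Omega^2 r/\sqrt W-\sqrt W/r)$, which is $\ge 0$ precisely for $r\ge\rtf/\sqrt2$, a range containing the near-edge layer of interest; where $\psi\equiv -M$ one only needs $W\ge 0$, and the downward kink at the matching point contributes a favorable positive singular part, so $h$ is a global weak supersolution on $\B(\rtf)$ with boundary datum imposed only at $r=\rtf$. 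The cap is attained within depth $\OO(\eps^{7/6})$, since $\int_{\rtf-\eps^{7/6}}^{\rtf}\sqrt W\approx\tfrac{2\sqrt2}{3}\Omega\sqrt{\rtf}\,\eps^{7/4}\gtrsim\eps^{-1/6}$ for $\Omega$ large; hence $\hgpm^2(r)\le h^2(r)=C\eps\Omega\,e^{-2M}$ for all $r\le\rtf-\OO(\eps^{7/6})$, which is \eqref{improved exp small}.

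The hard part will be the verification of the differential (super-solution) inequalities under the available control of the chemical potential: one must guarantee that the correction $\hchem-\tfchem$ from \eqref{chemical diff} is negligible compared with $\Omega^2(\rtf^2-r^2)$ on the relevant ranges of $r$, so that $W$ genuinely dominates the curvature and gradient terms of $h$. For \eqref{improved exp small} the delicate additional points are the gluing of the WKB and capped pieces into a single weak supersolution on the disc with a boundary condition only at $r=\rtf$ (the convexity of the corner is what makes this legitimate) and the check that the accumulated decay reaches the cap $M\sim\eps^{-1/6}$ already at depth $\eps^{7/6}$; this balance is exactly what fixes the exponents appearing in \eqref{improved exp small} and, in effect, restricts its content to large angular velocities.
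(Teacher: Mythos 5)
Your overall strategy (sub/supersolution comparison via the variational equation for $\hgpm$ and the maximum principle) is the right one: the paper gives no self-contained argument here but simply cites \cite[Proposition 2.2]{CRY}, whose proof is a comparison of exactly the type you set up. Your treatment of the first bound \eqref{exp small} -- a supersolution of Gaussian type on $\B(r_*)$, boundary datum supplied by $\|\hgpm\|_\infty^2\leq C\eps\Omega$, and the trivial bound on the remaining shell where the right-hand side of \eqref{exp small} is still of order $\eps\Omega$ -- is essentially sound, provided the excluded shell is sized by the actual error $|\hchem-\tfchem|$ from \eqref{chemical diff} (which forces $\rtf^2-r_*^2\sim|\hchem-\tfchem|\,\Omega^{-2}$, larger than your $\OO(\eps\Omega^{-1})$, but still small compared with $1-\rtf^2$, so the patch survives).

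The proof of \eqref{improved exp small}, however, has a genuine gap. The glued function $h=\sqrt{C\eps\Omega}\,e^{\psi}$ with $\psi=-\min\{\int_r^{\rtf}\sqrt{W},M\}$ is \emph{not} a weak supersolution: it is the \emph{upper} envelope of the WKB exponential and the constant $\sqrt{C\eps\Omega}\,e^{-M}$, so at the matching radius the derivative $h'=h\psi'$ jumps upward from $0$ to $h\sqrt{W}>0$ as $r$ increases. The corner is therefore convex, $\Delta h$ acquires a \emph{positive} Dirac mass there and $-\Delta h+Wh$ a \emph{negative} one -- the opposite of what you assert (minima of supersolutions are supersolutions; maxima are not). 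Without the cap, your own computation shows the WKB profile is a supersolution only for $r\geq\rtf/\sqrt{2}$, and a comparison restricted to that annulus requires a boundary datum at the inner edge that you do not control. Independently of the gluing, the accumulated decay at depth $\eps^{7/6}$ is $\sim\Omega\eps^{7/4}$, which reaches $M\sim\eps^{-1/6}$ only for $\Omega\gtrsim\eps^{-23/12}$; the proposition claims \eqref{improved exp small} for every $\Omega\geq2(\sqrt{\pi}\eps)^{-1}+\OO(1)$, so your argument, as you yourself note, does not prove the statement as written for angular velocities near the threshold. Any complete proof must either obtain the smallness by a mechanism other than the WKB decay across the single layer of width $\eps^{7/6}$ (for instance by exploiting the nonlinear inequality $\Delta(\hgpm^2)\geq4\eps^{-2}\hgpm^4$ in the hole, or by feeding the global bound \eqref{exp small} back in as an intermediate input), or faithfully reproduce the argument of \cite[Proposition 2.2]{CRY} to which the paper defers.
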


	\begin{proof}
		See \cite[Proposition 2.2]{CRY}. Note that in the proof of the second estimate, the condition $  \Omega \geq 2(\sqrt{\pi}\eps)^{-1} + \OO(1) $ is needed in order to guarantee that $ \rtf \gg \OO(\eps^{7/6}) $.
	\end{proof}

	The pointwise estimates \eqref{ref pointwise bound 1} and \eqref{ref pointwise bound 2} and the exponential smallness stated in the proposition above have some important consequences as, e.g., an improved $L^2$ estimate on the density $ \hgpm $ close to the boundary of the trap:
	
	\begin{pro}[\textbf{Estimate of $ \rmax $ and $ L^2 $ estimate of $ \hgpm $}]
		\label{improved rmax est}
		\mbox{}	\\
		If $ \eps^{-1} \lesssim \Omega \ll \eps^{-2} $,
		\beq
			\label{improved rmax}
			\rmax \geq 1 - \OO(\eps^{-5/8}\Omega^{-7/8}),		\hspace{1,5cm}	\lf\| \hgpm \ri\|^2 _{L^2(\ba\setminus\ba(\rmax))} \leq \OO(\eps^{3/8}\Omega^{1/8}).
		\eeq
	\end{pro}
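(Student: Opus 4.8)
The plan is to read both estimates as two faces of a single fact: the Dirichlet boundary layer of $\hgpm$ near $r=1$ is thin. First I would pin the maximum $\rmax$ close to the wall, and then bound the $L^2$ mass over $\ba\setminus\ba(\rmax)$ by the crude product (pointwise supremum)$\,\times\,$(layer width), exploiting that $\hgpm$ is monotone decreasing on $[\rmax,1]$ by Proposition~\ref{unique maximum}. The only quantitative input needed is the uniform bound $\|\hgpm\|_\infty^2=\hgpm^2(\rmax)\leq\OO(\eps\Omega)$, which follows from \eqref{hGPm linfty est} together with the explicit value $\lf\|\tfm\ri\|_{L^\infty(\ba)}=\tfm(1)\sim\eps\Omega$, valid in the annular regime $\Omega\geq2(\sqrt\pi\,\eps)^{-1}$.

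For the location of the maximum I would reproduce the dichotomy of Remark~\ref{maximum pos rem}. Since $\rmax$ is an interior point, evaluating \eqref{hGPm variational eq} there — where $-\Delta\hgpm(\rmax)\geq0$ — gives $\hgpm^2(\rmax)\leq\hgpd(\rmax)$, hence by \eqref{diff densities 2} one gets $\hgpm^2(\rmax)\leq\tfm(\rmax)+\OO(\eps^{7/4}\Omega^{5/4})$. Conversely, the lower pointwise bound \eqref{pointwise bound 2}, used at the outer endpoint $r_+:=1-\OO(\eps^{1/2}\Omega^{-1/2}|\log\eps|^{3/2})$ of its range of validity, yields $\hgpm^2(\rmax)\geq\hgpm^2(r_+)\geq\tfm(r_+)-\OO(\eps^{7/4}\Omega^{5/4})\geq\tfm(1)-\OO(\eps^{7/4}\Omega^{5/4})$. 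Subtracting and using the explicit slope $\tfm(1)-\tfm(r)=\half\eps^2\Omega^2(1-r^2)\gtrsim\eps^2\Omega^2(1-\rmax)$ bounds $1-\rmax$ by the ratio of the pointwise error to the slope, that is $1-\rmax\lesssim\eps^{-1/4}\Omega^{-3/4}$; this is precisely \eqref{lb max pos 2}, and it is in particular $\leq\OO(\eps^{-5/8}\Omega^{-7/8})$, establishing the first claim. (The case $\rmax\geq r_+$, where the maximum already sits inside the layer, gives $1-\rmax\leq\eps^{1/2}\Omega^{-1/2}|\log\eps|^{3/2}$ directly.)

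With the layer width controlled the $L^2$ estimate is immediate: since $0\leq\hgpm^2\leq\|\hgpm\|_\infty^2\leq\OO(\eps\Omega)$ on the annulus $\ba\setminus\ba(\rmax)=\{\rmax\leq r\leq1\}$,
\[
	\lf\|\hgpm\ri\|^2_{L^2(\ba\setminus\ba(\rmax))}=\int_{\rmax}^1\hgpm^2\,2\pi r\,\diff r\leq\OO(\eps\Omega)\,\OO(1-\rmax)\leq\OO(\eps^{3/8}\Omega^{1/8}),
\]
where in the last step one may insert either the stated width $\eps^{-5/8}\Omega^{-7/8}$ or the sharper $\eps^{-1/4}\Omega^{-3/4}$ just derived.

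I expect the genuinely delicate step to be the location of the maximum, for two Dirichlet-specific reasons absent from the Neumann analysis of \cite{CRY}. First, $\hgpm$ is \emph{not} monotone — Proposition~\ref{unique maximum} yields only uniqueness of the maximum — so the pointwise agreement $\hgpm^2\approx\tfm$ cannot be propagated up to the wall, and one is forced to compare the maximal value against $\tfm$ at the interior endpoint $r_+$ rather than at $r=1$. Second, $\tfm$ does not vanish at the wall — it is in fact maximal there — so a true boundary layer exists, and its width must be read off by balancing the small slope $\eps^2\Omega^2$ of $\tfm$ against the pointwise error $\eps^{7/4}\Omega^{5/4}$; it is the admissibility of this comparison (equivalently, the validity of \eqref{pointwise bound 2}) that dictates the range $\eps^{-1}\lesssim\Omega\ll\eps^{-2}$.
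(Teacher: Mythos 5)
Your proof is correct, but it takes a genuinely different route from the paper's. The paper proves both estimates through a single mass-balance identity: since $\tfm$ and $\hgpm^2$ are both normalized, $\int_{\ba\setminus\ba(\rmax)}(\tfm-\hgpm^2)=\int_{\ba(\rmax)}(\hgpm^2-\tfm)$; the inner integral is bounded by $\OO(\eps^{3/4}\Omega^{1/4})$ using the exponential smallness \eqref{improved exp small} in the hole, the $L^2$ estimate \eqref{hGPm l2 est} on a thin intermediate annulus and the pointwise bound \eqref{ref pointwise bound 2} in the bulk, while the outer integral is bounded below by $\frac{\pi}{4}\eps^2\Omega^2(1-\rmax^2)^2-C\eps^{7/4}\Omega^{5/4}(1-\rmax^2)$, and solving the resulting quadratic inequality yields $1-\rmax^2\lesssim\eps^{-5/8}\Omega^{-7/8}$; the $L^2$ bound \eqref{l2 norm outside} then follows from the symmetric version \eqref{l1 est boundary} of the same identity. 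You instead take the dichotomy behind \eqref{lb max pos 2} (which you reproduce correctly) as the source of the $\rmax$ estimate and dispatch the $L^2$ bound by a $\sup\times$measure argument; since $\eps^{-1/4}\Omega^{-3/4}=\eps^{3/8}\Omega^{1/8}\cdot\eps^{-5/8}\Omega^{-7/8}\ll\eps^{-5/8}\Omega^{-7/8}$ throughout $\eps^{-1}\lesssim\Omega\ll\eps^{-2}$, this is legitimate and in fact delivers sharper exponents than the statement, namely $1-\rmax\lesssim\eps^{-1/4}\Omega^{-3/4}$ and $\lf\|\hgpm\ri\|^2_{L^2(\ba\setminus\ba(\rmax))}\lesssim\eps^{3/4}\Omega^{1/4}$. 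Two small points you should make explicit: (i) evaluating \eqref{pointwise bound 2} at $\rmax$ also requires $\rmax\geq\rtf+\OO(\eps^{-1}\Omega^{-1}|\log\eps|^{-2})$, which follows from the normalization of $\hgpm$ combined with the exponential smallness in the hole and the monotonicity on $\ba(\rmax)$ (the paper makes the same implicit assumption); (ii) the sub-threshold window $\eps^{-1}\lesssim\Omega<2(\sqrt{\pi}\eps)^{-1}$ should be handled with \eqref{pointwise bound 1} and \eqref{lb max pos 1} rather than \eqref{pointwise bound 2}, as the paper notes in passing. The paper's longer route buys the two-sided $L^1$ comparison between $\tfm$ and $\hgpm^2$ on the outer annulus without leaning on the endpoint of validity of the pointwise estimate, but your argument recovers everything the proposition asserts, with room to spare.
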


	\begin{proof}
		Without loss of generality we can assume $ \Omega > 2(\sqrt{\pi}\eps)^{-1} $, since the proof in the other case, i.e., without the hole, is even simpler. Because of the normalization of both $ \tfm $ and $ \hgpm $, we have
		\bdm
			\int_{\ba\setminus\ba(\rmax)} \diff \rv \lf( \tfm - \hgpm^2 \ri)= \int_{\ba(\rmax)} \diff \rv  \lf( \hgpm^2 - \tfm \ri).
		\edm
		The monotonicity of $ g $ in $ \ba(\rmax) $ and the bound \eqref{ref pointwise bound 2} yield $ \hgpm^2(\rtf) \leq  \hgpm^2(\rtf + \eps^{-1} \Omega^{-1} |\log\eps|^{-1}) \leq \eps\Omega |\log\eps|^{-1} $, so that, setting for convenience $ R_0 : = \rtf + \eps^{-1} \Omega^{-1} |\log\eps|^{-1} $ and using the exponential smallness \eqref{improved exp small}, one obtains
		\bmln{
			\int_{\ba(R_0)} \diff \rv  \lf| \tfm - \hgpm^2 \ri| \leq \int_{ \ba(\rtf - \eps^{7/6})} \diff \rv \: \hgpm^2 + C \eps^{1/6} \Omega^{-1} |\log\eps|^{-1} + \int_{\ba(R_0) \setminus \ba(\rtf) } \diff \rv  \lf| \tfm - \hgpm^2 \ri| \leq	\\
			C \lf\| \tfm - \hgpm^2 \ri\|_{L^2(\ba)} \eps^{-1/2} \Omega^{-1/2} |\log\eps|^{-1/2} +  C \eps^{1/6} \Omega^{-1} |\log\eps|^{-1} \leq C \eps^{3/4} \Omega^{1/4} |\log\eps|^{-1/2}.
		}
		For $ r \geq R_0 $ one can apply the pointwise estimate \eqref{ref pointwise bound 2}, which yields
		\bdm
			\int_{\ba(\rmax) \setminus \ba(R_0)} \diff \rv \:  \lf| \tfm - \hgpm^2 \ri| \leq C \eps^{3/4} \Omega^{1/4}.
		\edm
		Collecting the above estimates one therefore has
		\beq
			\label{l1 est boundary}
			\int_{\ba\setminus\ba(\rmax)} \diff \rv  \lf( \tfm - \hgpm^2 \ri) \leq C \eps^{3/4} \Omega^{1/4}.
		\eeq
		On the other hand by \eqref{ref pointwise bound 2}, $ \hgpm^2(\rmax) \leq \tfm(\rmax) + C \eps^{7/4} \Omega^{5/4}) $, so that
		\bmln{
 			\int_{\ba\setminus\ba(\rmax)} \diff \rv  \lf( \tfm - \hgpm^2 \ri) \geq \frac{\eps^2\Omega^2}{2} \int_{\ba\setminus\ba(\rmax)} \diff \rv \lf( r^2 - \rmax^2 \ri) - C \eps^{7/4} \Omega^{5/4} (1 - \rmax^2) \geq 	\\
			\mbox{$\frac{1}{4}$} \pi \eps^2\Omega^2 (1 - \rmax^2)^2 - C \eps^{7/4} \Omega^{5/4} (1 - \rmax^2),
		}
		which gives the estimate of $ \rmax $.
		\newline
		Since the argument leading to \eqref{l1 est boundary} is symmetric in $ \hgpm^2 $ and $ \tfm $, it is also true that
		\beq
		\label{l2 norm outside}
			\int_{\ba\setminus\ba(\rmax)} \diff \rv \: \hgpm^2 \leq \int_{\ba\setminus\ba(\rmax)} \diff \rv \: \tfm + C \eps^{3/4} \Omega^{1/4} \leq C \eps^{3/8} \Omega^{1/8},
		\eeq
		due to the lower bound on $ \rmax $ \eqref{improved rmax}.
	\end{proof}

\section{The Regime $ |\log\eps| \ll \Omega \ll \eps^{-2}|\log\eps|^{-1} $}
\label{sec unif distr proof}

This section contains the proof of the main results stated in the Introduction for the regime $ |\log\eps| \ll \Omega \ll \eps^{-2}|\log\eps|^{-1} $. We also prove some additional estimates, which are basically corollaries of the main results and will be used also in the analysis of the giant vortex regime.

\subsection{GP Energy Asymptotics}

The most important result proven in this section is the GP ground state energy asymptotics:

\begin{proof}[Proof of Theorem \ref{gs asympt}]
	\mbox{}	\\
	The result is proven by exhibiting upper and lower bounds for the GP ground state energy.

\indent \emph{Step 1.} For the upper bound we evaluate the GP functional on the trial function
	\beq
		\label{trial  function}
		\trial(\rv) : = c \: \hgpm(r) \xi(\rv) \Phi(\rv),
	\eeq
	where $ c $ is a normalization constant, $ \Phi(\rv) $ the phase factor introduced in \cite[Eq. (4.6)]{CY} and $ \xi $ a cut-off function: More precisely, using the complex notation $ \zeta = x + iy \in \mathbb{C} $ for points $ \rv = (x,y) \in \RR $, we can express $ \Phi $ as 
	\beq
		\Phi(\rv) : = \prod_{\zeta_i \in \latt} \frac{\zeta - \zeta_i}{|\zeta - \zeta_i|},
	\eeq
	where we denote by $ \latt $ a finite, regular lattice (triangular, rectangular or hexagonal) of points $ \rv_i \in \ba $ such that the corresponding cell $ \celli $ is contained in $ \ba $: Each lattice point  $\rv_i$ lies at  the center of a lattice cell $ \celli $ and the lattice constant $\ell$ is chosen so that the area of  the fundamental cell $ \cell $ is 
	\beq
		|\cell|= \pi \Omega^{-1}.
	\eeq
	Thus $ \ell = C \Omega^{-1/2} $ and the total number of lattice points in the unit disc is $  \Omega (1 - \OO(\Omega^{-1/2})) $. In order to get rid of the singularities of the phase factor $ \Phi $ at lattice points, we define the function
	\begin{equation}
		\xi(\rv) : = 
			\begin{cases} 
				1,				&	\text{if} \:\: |\zeta - \zeta_i| > t, \: \forall \zeta_i \in \latt,		\\
 				t^{-1} |\zeta - \zeta_i|, 	&	\text{if} \:\: |\zeta - \zeta_i| \leq t, \mbox{ for some } \zeta_i \in \latt
			\end{cases}
	\end{equation}
	where $t$ is a variational parameter satisfying the conditions $ \min [\eps, \eps^{1/2} \Omega^{-1/2} ] \leq t \ll \Omega^{-1/2}$.
	\newline
	The normalization constant takes into account the effect of the cut-off function $ \xi $ and it is not difficult to see that $ 1 \leq c^2 \leq 1 + C \Omega t^2 $.
	\newline
	We start by computing the kinetic energy of $ \trial $:
	\beq
		\label{kinetic energy trial}
 		\int_{\ba} \diff \rv \lf| \lf(\nabla - i  \magnp\ri) \trial \ri|^2 = c^2 \int_{\ba} \diff \rv \: \lf| \nabla \lf(\hgpm \xi\ri)\ri|^2 + c^2 \int_{\ba} \diff \rv \: \xi^2 \hgpm^2 \lf| \nabla \Phi - \magnp \ri|^2.
	\eeq
	The first term in the expression above can be estimated as follows:
	\bml{
 		\label{radial kinetic}
		c^2 \int_{\ba} \diff \rv \: \lf| \nabla \lf(\hgpm \xi\ri)\ri|^2 - \int_{\ba} \diff \rv \: \lf| \nabla \hgpm \ri|^2 \leq \half c^2 \int_{\ba} \diff \rv \: \nabla \hgpm^2 \cdot \nabla \xi^2 + C \lf( \Omega + \Omega \eps^{-1} t^2 + \eps^{1/2} \Omega^{5/2} t^2 \ri) \leq	\\
		\half c^2 \sum_{\rvi \in \latt} \int_{\partial \ba(\rv_i,t)} \diff \sigma \: g^2 \partial_n \xi^2 - \half c^2 \int_{\ba} \diff \rv \: g^2 \Delta \xi^2 + C \lf( \Omega + \Omega \eps^{-1} t^2 + \eps^{1/2} \Omega^{5/2} t^2 \ri) \leq	\\
		C \lf( \Omega + \Omega \eps^{-1} t^2 + \eps^{1/2} \Omega^{5/2} t^2 \ri)
	}
where we have used the bounds $ | \nabla \xi |\leq t^{-1} $, $ |\Delta \xi^2| \leq C t^{-2} $ and $ \lf\| \nabla \hgpm \ri\|^2_2 \leq C (\eps^{-1} + \eps^{1/2} \Omega^{3/2}) $ (see \eqref{hGPe asympt}). We have also used the fact that
	\bdm
		\sum_{\rvi \in  \latt} \sup_{\rv \in \ba(\rvi,t)} g^2(r) \leq C \Omega,
	\edm
	which can be seen as a consequence of the upper bound $ g^2 \leq C(\eps\Omega + 1) $ in addition to the exponential smallness \eqref{improved exp small}, which allows to estimate the above quantity as the number of cells contained in $ \tfsupp $ times $ \eps\Omega + 1 $, i.e., $ \OO(\Omega) $.
	\newline
	In order to estimate the last term in \eqref{kinetic energy trial}, we act exactly as in \cite[Proposition 4.1]{CY}. The estimate (4.37) in \cite{CY}, that is obtained by making use of an analogy with an electrostatic problem, reads in our case
	\beq
		\int_{\ba} \diff \rv \: \xi^2 \hgpm^2 \lf| \nabla \Phi - \magnp \ri|^2 \leq (1 + C t \Omega^{1/2}) \sum_{\rvi \in \latt} \sup_{\rv \in \celli} \hgpm^2(r) \lf( \pi |\log(t^2\Omega)| + \OO(1) \ri).
	\eeq
	It remains then to use the Riemann sum approximation and the normalization of $ \hgpm^2 $ to estimate the sum in the above expression: If $ \Omega \leq \bar{\Omega} \eps^{-1} $ for some $ \bar{\Omega} < 2/\sqrt{\pi} $, we can simply use \eqref{ref pointwise bound 1} to replace $ \hgpm^2 $ with $ \tfm $ and proceed as in the proof of Proposition 4.1 in \cite{CY}, obtaining
	\beq
		\label{ub kinetic energy 1} 
		\int_{\ba} \diff \rv \: \xi^2 \hgpm^2 \lf| \nabla \Phi - \magnp \ri|^2 \leq \lf[ 1 + \OO( t \Omega^{1/2}) +  \OO(\sqrt{\eps}) \ri] |\cell|^{-1} \lf( \pi |\log(t^2\Omega)| + \OO(1) \ri).
	\eeq
	Note that inside each cell $ \sup \tfm - \inf \tfm \leq C \eps^2 \Omega^{3/2} \ll \sqrt{\eps} $, so this error term can be absorbed in the $\OO (\sqrt{\eps})$ in the equation above.
	\newline
	In the opposite case, if $ \Omega \geq 2(\sqrt{\pi}\eps)^{-1} $, we set
	\beq
	\label{domain}
		\D : = \lf\{ \rv \in \ba : \: r \geq \rd \ri\},
	\eeq
	with
	\beq
		\label{rd}
		\rd := \rtf + \eps^{-1} \Omega^{-1} |\log(\eps^2 \Omega |\log\eps|)|^{-1},
	\eeq
	so that $ \rd - \rtf \ll \eps^{-1} \Omega^{-1} $ and 
	\beq
		\label{TFd lower bound}
		\tfm(r) \geq \half \eps\Omega |\log\eps|^{-1} (1 - o(1)), \qquad \forall \rv \in \D,
	\eeq
since $ |\log(\eps^2 \Omega |\log\eps|)| \leq |\log\eps|(1 + o(1)) $. 
	\newline
	Now we can replace $ \hgpm^2 $ with $ \tfm $ inside $ \D $ by means of \eqref{ref pointwise bound 2}. Moreover in the region $ r \leq \rd $ we can use the exponential smallness \eqref{improved exp small}, if $ r \leq \rtf - \eps^{7/6} $, and the pointwise bound $ g^2(r) \leq g^2 (\rd) \leq C \eps \Omega |\log\eps|^{-1} $, if $ \rtf - \eps^{7/6} \leq r \leq \rd $, which follows from \eqref{ref pointwise bound 2} and the monotonicity of $ g^2(r) $ in $ \ba(\rmax) $. The result is the upper estimate
	\bml{
 		\label{ub kinetic energy 2} 
 		\int_{\ba} \diff \rv \: \xi^2 \hgpm^2 \lf| \nabla \Phi - \magnp \ri|^2 \leq \lf[1 + C( t \Omega^{1/2} + \eps^{3/4} \Omega^{1/4} |\log\eps|) \ri] \sum_{\rvi \in \latt \cap \D} \sup_{\rv \in \celli} \tfm(r) \lf( \pi |\log(t^2\Omega)| + \OO(1) \ri) +	\\
		C \eps \Omega |\log\eps|^{-1}  \lf| \{ \rv : \: \rtf - \eps^{7/6} \leq r \leq \rd \} \ri| \lf| \cell \ri|^{-1} |\log\eps| + C \eps \Omega |\log\eps| \exp\lf\{ - c \eps^{-1/6} \ri\} \leq	\\
		 \lf[1 + C( t \Omega^{1/2} + \eps^{3/4} \Omega^{1/4} |\log\eps|) \ri] \sum_{\rvi \in \latt} \sup_{\rv \in \celli} \tfm(r) \lf( \pi |\log(t^2\Omega)| + \OO(1) \ri) + C \Omega |\log\eps|^{-1}	\leq	\\
		 \lf[1 + C( t \Omega^{1/2} + \eps^{3/4} \Omega^{1/4} |\log\eps| + \eps \Omega^{1/2}) \ri] \lf| \cell\ri|^{-1} \lf( \pi |\log(t^2\Omega)| + \OO(1) \ri) + C \Omega |\log\eps|^{-1},
	}
	where we have used the estimate $ \sup \tfm - \inf \tfm \leq C \eps^2 \Omega^{3/2} $ inside any cell $ \celli \subset \D $.
	\newline
	The estimate of $ \tff[\left| \trial \right| ^2] $ can be obtained as in \cite[Eqs. (4.42) and (4.48)]{CY} (see also \eqref{centrifugal trial}):
	\bml{
 		\label{diff TF energies}
 		\tff \lf[ \lf| \trial \ri|^2 \ri] \leq (1 + C \Omega t^2) \eps^{-2} \int_{\ba} \diff \rv \: \hgpm^4 - \Omega^2 \int_{\ba} \diff \rv \: r^2 \hgpm^2 + C \eps^{-1} \Omega^{2} t^2 \leq	\\
		\tff\lf[ \hgpm^2 \ri] + C \lf[ \eps^{-2} \Omega t^2 +  \eps^{-1} \Omega^2 t^2 \ri].
	}
	To conclude the proof of the upper bound it only remains to choose the variational parameter $ t $: In the regime $ \Omega \leq \bar{\Omega} \eps^{-1} $, $ \bar\Omega < 2/\sqrt{\pi} $, we take $ t = \eps $ so that the remainder occurring in the above estimate becomes $ \OO(\Omega) $ as in \eqref{radial kinetic} and \eqref{ub kinetic energy 1}, whereas, if $ \Omega \geq 2 (\sqrt{\pi}\eps)^{-1} $, the remainder in \eqref{diff TF energies} leads to $ t^2  = \eps \Omega^{-1} $ in order to recover the same error term $ \OO(\Omega) $ as in \eqref{radial kinetic}. In \eqref{ub kinetic energy 1} there is an additional remainder of order $ \OO(\eps\Omega^{3/2} |\log\eps|) $ which might become larger than $ \Omega $ for very large angular velocities and is due to the Riemann sum approximation.
	\newline
	The final result is therefore
	\beq
		\label{GP energy ub 1}
		\gpf\lf[\trial\ri] \leq \hgpe + \Omega |\log(\eps^2\Omega)| + \OO(\Omega),
	\eeq
	if $ 1 \ll \Omega \lesssim \eps^{-1} $, and
	\beq
		\label{GP energy ub 2}
		\gpf\lf[\trial\ri] \leq \hgpe + \Omega |\log\eps| + \OO(\Omega) + \OO(\eps\Omega^{3/2} |\log\eps|),
	\eeq
	if $ \Omega \ll \eps^{-2} $.

\indent \emph{Step 2.} The starting point of the lower bound proof is a decoupling of the energy which can be obtained by defining a function $ u(\rv)  $ as
	\beq
		\label{function u}
		u(\rv) : = \hgpm^{-1}(r) \gpm(\rv).
	\eeq
	Note that, thanks to the positivity of $ g $, the function $ u $ is well defined in the open ball $ \{ \rv : \: r < 1 \} $.
	\newline\
	By means of this definition and the variational equation \eqref{hGPm variational eq}, one can use a method originating in \cite{LM} to decouple the energy (see, e.g., \cite[Proposition 3.1]{CRY} or \cite[Lemma 2.2]{Se} where a Dirichlet boundary condition also appears) to obtain, using the $L^2$ normalization of both $ \gpm $ and $ \hgpm $, 
	\[
	\gpf[\gpm] = \hgpe + \int_{\ba} \diff \rv \: \hgpm^2 \lf\{ \lf| (\nabla - i \magnp) u \ri|^2 + \eps^{-2} \hgpm^2 \lf( 1 - |u|^2 \ri)^2 \ri\}.
	\]
	We deduce the lower bound
	\beq
		\label{energy decoupling}
		\gpe = \gpf[\gpm] \geq \hgpe + \int_{\Dt} \diff \rv \: \hgpm^2 \lf\{ \lf| (\nabla - i \magnp) u \ri|^2 + \eps^{-2} \hgpm^2 \lf( 1 - |u|^2 \ri)^2 \ri\}
	\eeq
	 by restricting the last integral to $ \Dt $, with 
	\beq
		\Dt : =
		\begin{cases}
		\lf\{ \rv \in \ba : \: r \leq 1 - \eps|\log\eps| \ri\},	&	\mbox{if} \:\: \Omega \leq \bar{\Omega} \eps^{-1}, \: \mbox{with} \:\: \bar{\Omega} < 2/\sqrt{\pi},	\\
		\ba(\rmax) \cap \D					&	\mbox{if} \:\: \Omega \gtrsim 2(\sqrt{\pi}\eps)^{-1}.
		\end{cases}
	\eeq
	The pointwise estimates \eqref{ref pointwise bound 1} and \eqref{ref pointwise bound 2} allow the replacement of $ \hgpm^2 $ with $ \tfm $:
	\beq
		\label{energy decoupling 2}
		\gpe \geq \hgpe + \lf[ 1 - C \lf( \sqrt{\eps} + \eps^{3/4} \Omega^{1/4} \ri) \ri] \int_{\Dt} \diff \rv \: \tfm \lf\{ \lf| (\nabla - i \magnp) u \ri|^2 + \eps^{-2} \tfm \lf( 1 - |u|^2 \ri)^2 \ri\}.
	\eeq
 Moreover as in \cite[Section 5]{CY} we define another regular (square) lattice
	\beq
		\hlatt : = \lf\{ \rv_i  = (m \hspac, n \hspac), m,n \in \Z : \: \hcelli \subset \Dt \ri\},
	\eeq
	where $ \hcelli $ is the cell centered at $ \rv_i $ and the lattice spacing satisfies the same conditions as in \cite[Eq. (5.16)]{CY}, i.e.,
	\beq
		\label{lattice spacing cond}
		|\log\eps|^{1/2} \Omega^{-1/2} \ll \hspac \ll \min \lf[1, (\eps\Omega)^{-1} |\log(\eps^2\Omega |\log\eps|)|^{-1} \ri],
	\eeq
	so that
	\bdm
		\sup_{\rv \in \celli} \lf| \tfm(r) - \tfm(r_i) \ri| \leq C \eps \Omega \hspac |\log\eps| \tfm(r_i).
	\edm
	Hence \eqref{energy decoupling 2} yields the lower bound
	\bml{
 		\label{GPe lower bound}
		\gpe - \hgpe \geq \lf[ 1 - \OO(\sqrt{\eps}) - \OO(\eps^{3/4} \Omega^{1/4} |\log\eps|) - \OO(\eps \Omega \hspac |\log\eps|) \ri] \sum_{\rvi \in \hlatt} \tfm(r_i) \E^{(i)}[u] \geq	\\
		(1 - o(1)) \sum_{\rvi \in \hlatt} \tfm(r_i) \E^{(i)}[u],
	}
	where $ \E^{(i)} $ is defined as in \cite[Eq. (5.18)]{CY}, i.e.,
	\beq
		\E^{(i)}[u]:= \int_{\hcelli} \diff \rv \lf\{ \lf| (\nabla - i\magnp) u \ri|^2 + \eps^{-2} \tfm(r_i) \lf(1 - |u|^2\ri)^2 \ri\}.
	\eeq
After a suitable scaling the energy above can be seen as a Ginzburg-Landau energy with a fixed external field $h_{\rm ex}$ in the
range $|\log \epsilon| \ll h_{\rm ex} \ll \epsilon^{-2} $ where $\epsilon$ is a new small parameter. We can thus use the lower bound for the Ginzburg-Landau energy (see \cite{SS1,SS2}) as in \cite[Proposition 5.1]{CY}. The result is
	\beq
		\label{Ei lower bound}
		\E^{(i)}[u] \geq \Omega \hspac^2 |\log(\min[\eps,\eps^2\Omega])| (1 - o(1)),
	\eeq
	for any $ |\log\eps| \ll \Omega \ll \eps^{-2} |\log\eps|^{-1} $.
	\newline
	To complete the proof if suffices then to use, for any $ \Omega \ll \eps^{-1} $,  the estimate $ \tfm(r) \geq \pi^{-1} (1 - o(1)) $,  which yields
	\beq
		\label{Riemann sum 1}
		\sum_{\rvi \in \hlatt} \tfm(r_i) \geq (1 - o(1)) \pi^{-1} (1 - \OO(\eps|\log\eps|)) |\ba| |\hcell|^{-1} \geq (1 - o(1)) \hspac^{-2},
	\eeq 
	and thus the result. On the other hand, if $ \Omega \gtrsim \eps^{-1} $, a simple computation (see, e.g., \eqref{l2 norm outside}) using the estimates \eqref{lb max pos 1} and \eqref{improved rmax} gives
	\beq
		\lf\| \tfm \ri\|_{L^1(\ba\setminus\ba(\rmax))} \leq o(1),
	\eeq
	which implies
	\beq
		\label{Riemann sum 2}
		\sum_{\rvi \in \hlatt} \tfm(r_i) \geq (1 - o(1)) \hspac^{-2} \int_{\ba(\rmax)\cap \D} \diff \rv \: \tfm(r) \geq (1 - o(1))  \hspac^{-2},
	\eeq
	thanks to the normalization of $ \tfm $.
	\newline
	By putting together \eqref{GPe lower bound}, \eqref{Ei lower bound}, \eqref{Riemann sum 1} and \eqref{Riemann sum 2}, one obtains the lower bound matching \eqref{GP energy ub 1} and \eqref{GP energy ub 2}.
\end{proof}

\subsection{Estimates for GP Minimizers}

The GP energy asymptotics has many important consequences on the asymptotic behavior of GP minimizers: For instance the upper bounds \eqref{GP energy ub 1} and \eqref{GP energy ub 2} immediately imply the $L^2$ convergence of any minimizing density $ |\gpm|^2 $ to the TF density $ \tfm $:

	\begin{pro}[\textbf{$L^2$ convegence of $ |\gpm|^2 $}]
		\label{GPmin estimates}
		\mbox{}	\\
		As $ \eps \to 0 $, if $ |\log\eps| \ll \Omega \lesssim \eps^{-1} $,
		\beq
			\label{GPm l2 estimate 1}
			\lf\| |\gpm|^2 - \tfm \ri\|_{L^2(\ba)} \leq \OO(\eps^{1/2}) + \OO(\eps \Omega^{1/2} |\log(\eps^2\Omega)|^{1/2}),
		\eeq
		whereas, if $ \eps^{-1} \ll \Omega \ll \eps^{-2} $,
		\beq
			\label{GPm l2 estimate 2}
			\lf\| |\gpm|^2 - \tfm \ri\|_{L^2(\ba)} \leq \OO(\eps \Omega^{1/2} |\log\eps|^{1/2}) + \OO(\eps^{5/2} \Omega^{3/2}).
		\eeq
	\end{pro}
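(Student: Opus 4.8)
The plan is to read the estimate off directly from the energy decoupling already established in the proof of Theorem \ref{gs asympt}, combined with the preliminary bound \eqref{hGPm l2 est}. First I would recall the exact identity obtained there: writing $u = \hgpm^{-1}\gpm$ as in \eqref{function u}, one has the pointwise-exact decoupling
\[
\gpf[\gpm] = \hgpe + \int_{\ba}\diff\rv\:\hgpm^2\lf\{\lf|(\nabla - i\magnp)u\ri|^2 + \eps^{-2}\hgpm^2\lf(1-|u|^2\ri)^2\ri\}.
\]
The crucial algebraic observation is that $\hgpm^2|u|^2 = |\gpm|^2$, so that $\hgpm^2(1-|u|^2) = \hgpm^2 - |\gpm|^2$ and hence $\hgpm^4(1-|u|^2)^2 = (\hgpm^2 - |\gpm|^2)^2$ (the identity extends to $\partial\ba$, where both densities vanish). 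Thus the quartic term in the decoupling is exactly $\eps^{-2}\lf\||\gpm|^2 - \hgpm^2\ri\|_{L^2(\ba)}^2$.

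Next I would discard the non-negative kinetic term and use $\gpf[\gpm] = \gpe$ to obtain
\[
\eps^{-2}\lf\||\gpm|^2 - \hgpm^2\ri\|_{L^2(\ba)}^2 \leq \gpe - \hgpe.
\]
The essential point is that the right-hand side must be bounded \emph{from above}, so I would feed in not the two-sided asymptotics of Theorem \ref{gs asympt} but the explicit upper bounds \eqref{GP energy ub 1} and \eqref{GP energy ub 2} proved in its course. These give $\gpe - \hgpe \leq C\,\Omega|\log(\eps^2\Omega)|$ for $|\log\eps|\ll\Omega\lesssim\eps^{-1}$ and $\gpe - \hgpe \leq C\,\Omega|\log\eps|$ for $\eps^{-1}\lesssim\Omega\ll\eps^{-2}|\log\eps|^{-1}$, where in the second range the subleading $\OO(\eps\Omega^{3/2}|\log\eps|)$ is absorbed because $\eps\Omega^{1/2}\ll 1$ there. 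Taking square roots then yields
\[
\lf\||\gpm|^2 - \hgpm^2\ri\|_{L^2(\ba)} \leq \OO\!\lf(\eps\Omega^{1/2}|\log(\eps^2\Omega)|^{1/2}\ri)
\]
in the first regime, and $\OO(\eps\Omega^{1/2}|\log\eps|^{1/2})$ in the second.

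Finally I would close with the triangle inequality $\lf\||\gpm|^2 - \tfm\ri\|_{L^2} \leq \lf\||\gpm|^2 - \hgpm^2\ri\|_{L^2} + \lf\|\hgpm^2 - \tfm\ri\|_{L^2}$ and insert the preliminary estimate \eqref{hGPm l2 est}, observing that its second contribution $\eps^{5/4}\Omega^{3/4}$ is dominated by $\eps^{1/2}$ precisely when $\Omega\lesssim\eps^{-1}$; this makes $\lf\|\hgpm^2 - \tfm\ri\|_{L^2}$ equal to $\OO(\eps^{1/2})$ in the first regime and governed by the $\Omega$-dependent term of \eqref{hGPm l2 est} in the second, reproducing \eqref{GPm l2 estimate 1} and \eqref{GPm l2 estimate 2}. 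There is no genuine obstacle here: the statement is a corollary of the energy asymptotics, and the only care required is to invoke the one-sided (upper) energy bound so the inequality points the right way, and to verify in each regime which of the two summands—the energy contribution $\eps\Omega^{1/2}|\log\cdots|^{1/2}$ or the profile contribution from \eqref{hGPm l2 est}—dominates.
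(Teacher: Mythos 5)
Your argument is correct, and it reaches the stated bounds by a route that is close to, but not identical with, the one the paper actually invokes: the paper's proof is a pointer to [CRY, Proposition~2.1], where one compares $|\gpm|^2$ to $\tfm$ \emph{directly}, via $\gpe=\gpf[\gpm]\geq\tff\bigl[|\gpm|^2\bigr]\geq\tfe+\eps^{-2}\bigl\||\gpm|^2-\tfm\bigr\|_{L^2}^2$ (the last inequality being the exact quadratic expansion of the TF functional around its minimizer, valid also in the presence of the hole), and then inserts the full upper bound $\gpe\leq\tfe+\OO(\eps^{-1})+\OO(\eps^{1/2}\Omega^{3/2})+\OO(\Omega|\log(\eps^2\Omega)|)$. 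You instead decouple around the profile $\hgpm$, read off $\eps^{-2}\bigl\||\gpm|^2-\hgpm^2\bigr\|_{L^2}^2\leq\gpe-\hgpe$, and finish with the triangle inequality and \eqref{hGPm l2 est}; this buys you the reuse of machinery already set up in the proof of Theorem~\ref{gs asympt} and cleanly separates the vortex-energy contribution from the profile contribution, at the cost of one extra appeal to Proposition~\ref{preliminary est hgpm}. The two routes give the same three error terms, $\eps^{1/2}$, $\eps^{5/4}\Omega^{3/4}$ and $\eps\Omega^{1/2}|\log(\cdot)|^{1/2}$. Note that in the regime $\eps^{-1}\ll\Omega\ll\eps^{-2}$ your derivation produces $\OO(\eps^{5/4}\Omega^{3/4})$, i.e.\ $\eps\cdot(\eps^{1/2}\Omega^{3/2})^{1/2}$, where \eqref{GPm l2 estimate 2} displays $\OO(\eps^{5/2}\Omega^{3/2})$; the latter is the un-square-rooted quantity (compare with the $\OO(\eps^{1/2})=\eps\cdot(\eps^{-1})^{1/2}$ term in \eqref{GPm l2 estimate 1}) and appears to be a typo rather than a defect of your argument. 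Since both versions are $o\bigl(\|\tfm\|_{L^2}\bigr)=o((\eps\Omega)^{1/2})$ throughout the range considered, nothing downstream (in particular \eqref{density estimate} and \eqref{bulk estimate 2}) is affected.
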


	\begin{proof}
		See \cite[Proposition 2.1]{CRY}.
	\end{proof}

Acting as in the derivation of \cite[Eq. (2.8)]{CRY}, one can show that the above $L^2$ estimates imply a bound on the chemical potential $ \chem $ occurring in the variational equation \eqref{GP variational} solved by $ \gpm $:
\beq
	\label{chemical difference 1}
	\lf| \chem - \tfchem \ri| \leq \OO(\eps^{-3/2}) + \OO(\eps^{-1/2} \Omega^{1/2} |\log(\eps^2\Omega)|^{1/2}),
\eeq
if $ |\log\eps| \ll \Omega \lesssim \eps^{-1} $, while, for $ \eps^{-1} \ll \Omega \ll \eps^{-2} $,
\beq
	\label{chemical difference 2}
	\lf| \chem - \tfchem \ri| \leq \OO(\eps \Omega^2) + \OO(\eps^{-1/2} \Omega |\log\eps|^{1/2}).
\eeq

Such estimates can in turn be used to prove a pointwise upper bound for $ |\gpm|^2 $ (see \cite[Proposition 2.1]{CRY}), i.e.,
\beq
	\label{GPm linfty estimate}
	\lf\| \gpm \ri\|_{L^{\infty}(\ba)}^2 \leq \tfm(1) \cdot
	\begin{cases}
		1 + \OO(\eps^{1/2}) + \OO(\eps^{3/2} \Omega^{1/2} |\log(\eps^2\Omega)|^{1/2}),	&	\mbox{if} \:\: |\log\eps| \ll \Omega \lesssim \eps^{-1},	\\
		1 + \OO(\eps^2 \Omega) + \OO(\eps^{1/2} |\log\eps|^{1/2}),					&	\mbox{if} \:\: \eps^{-1} \ll \Omega \ll \eps^{-2}.
	\end{cases}
\eeq

We finally state another very useful pointwise estimate of $ \gpm $ analogous to \cite[Proposition 2.2]{CRY} and Proposition \ref{exponential smallness}. As is the case for the density profile $ \hgpm $, if the angular velocity is above the threshold $ 2 (\sqrt{\pi} \eps)^{-1} $, any GP minimizer is exponentially small inside the hole $ \ba(\rtf) $.
	
	\begin{pro}[\textbf{Exponential smallness of $ \gpm $ inside the hole}]
		\label{GPm exponential smallness}
		\mbox{}	\\
		If $ \Omega \geq (2/\sqrt{\pi}) \eps^{-1} + \OO(1) $, as $ \eps \to 0 $, there exists a strictly positive constant $ c $ such that for any $ \rv $ such that $ r \leq \rtf - \OO(\eps^{7/6}) $,
		\beq
			\label{GPm exp small}
			\lf| \gpm(\rv) \ri|^2  \leq C \eps \Omega  \: \exp \lf\{ - \frac{c}{\eps^{1/6}} \ri\}.
		\eeq
	\end{pro}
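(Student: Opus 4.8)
The plan is to treat this as the exact analogue, for the complex minimizer $\gpm$, of the density-profile estimate of Proposition \ref{exponential smallness}; the only genuinely new ingredient is the diamagnetic inequality, which lets us discard the phase of $\gpm$. Starting from the magnetic form of the GP equation (the Euler--Lagrange equation associated with \eqref{GPf magn}),
\[
	-(\nabla - i\magnp)^2 \gpm = \left( \chem + \Omega^2 r^2 - 2\eps^{-2} |\gpm|^2 \right) \gpm,
\]
I would multiply by $\bar{\gpm}/|\gpm|$ and take real parts. Kato's inequality, $-\Delta|\gpm| \leq \mathrm{Re}\left[ (\bar{\gpm}/|\gpm|)\,(-(\nabla - i\magnp)^2\gpm) \right]$, holds in the distributional sense on all of $\ba$ and the contribution of the zero set of $\gpm$ carries the favourable sign, so that, without having to excise the vortices,
\[
	-\Delta|\gpm| + \left( 2\eps^{-2}|\gpm|^2 - \Omega^2 r^2 - \chem \right)|\gpm| \leq 0 .
\]
This is exactly the equation solved by the real profile $\hgpm$ (cf. \eqref{hGPm variational eq}) with $\hchem$ replaced by $\chem$, so the comparison argument of \cite[Proposition 2.2]{CRY} applies essentially verbatim.

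The next step is to check that the potential $V(r) := 2\eps^{-2}|\gpm|^2 - \Omega^2 r^2 - \chem$ is positive inside the hole, so that $|\gpm|$ is a subsolution of a Schr\"odinger-type operator with a confining potential there. Dropping the nonnegative first term, it suffices to control $V_0(r) := -\Omega^2 r^2 - \chem$. The chemical-potential estimates \eqref{chemical difference 1}--\eqref{chemical difference 2} give $\chem = \tfchem(1+o(1))$, and since $\tfchem = -\Omega^2 \rtf^2$ (so that $\tfm$ vanishes at $\rtf$; see the Appendix) one obtains $V_0(r) = \Omega^2(\rtf^2 - r^2)(1+o(1)) > 0$ for $r < \rtf$. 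Because $V \geq V_0$ and a supersolution is nonnegative, any radial supersolution $w$ of $-\Delta w + V_0 w \geq 0$ is also a supersolution for $V$; fixing its value at the edge of the hole by the global pointwise bound $\|\gpm\|_{L^\infty(\ba)}^2 \leq \const\,\eps\Omega$ that follows from \eqref{GPm linfty estimate}, the maximum principle yields $|\gpm| \leq w$ throughout the region where $V_0 > 0$.

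It remains to choose $w$. The global exponential supersolution $w(r) = \const\,\eps\Omega\,\exp\{-(1-r^2)/(1-\rtf^2)\}$ reproduces the analogue of \eqref{exp small}, while the sharper bound \eqref{GPm exp small} requires a local analysis in the turning-point layer around $r=\rtf$, where $V_0$ vanishes. On the Airy scale $(\Omega^2\rtf)^{-1/3}$ a layer of width $\OO(\eps^{7/6})$ accommodates $\OO(\eps^{-1/6})$ decay lengths in the regime where the estimate is ultimately used (the giant-vortex scaling $\Omega\sim\eps^{-2}|\log\eps|^{-1}$), producing the factor $\exp\{-c\,\eps^{-1/6}\}$; the hypothesis $\Omega \geq (2/\sqrt{\pi})\eps^{-1} + \OO(1)$ is needed precisely to guarantee $\rtf \gg \eps^{7/6}$, so that this layer fits strictly inside the hole $\ba(\rtf)$. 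This scale-matching is the only quantitative and hence the most delicate part, and I would carry it out exactly as in the proof of Proposition \ref{exponential smallness} and \cite[Proposition 2.2]{CRY}. The diamagnetic reduction of the first paragraph is what guarantees that the vortices of $\gpm$ introduce no difficulty beyond the one already present for the real profile $\hgpm$.
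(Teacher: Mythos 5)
Your overall strategy coincides with the paper's, whose proof of this proposition is literally the citation to \cite[Proposition 2.2]{CRY}: obtain a scalar differential inequality for the modulus of $\gpm$, observe that the effective potential is confining inside the hole, and conclude by comparison with explicit supersolutions, the vortices playing no role. The one structural difference is your passage through Kato's (diamagnetic) inequality to control $|\gpm|$. The argument of \cite{CRY} (and of Proposition \ref{exponential smallness} here for the real profile) instead works with $U:=|\gpm|^2$, for which the magnetic Bochner identity and the variational equation give
\[
-\frac{1}{2}\Delta U \;=\; -\left|\left(\nabla - i\magnp\right)\gpm\right|^2 + \left(\chem+\Omega^2r^2-2\eps^{-2}U\right)U \;\leq\; 2\eps^{-2}\left(\gpd-U\right)U,
\qquad \gpd(r):=\frac{\eps^2}{2}\left(\chem+\Omega^2r^2\right);
\]
this is a pointwise computation for a smooth function, needs no distributional care at the zero set, and is the more elementary route. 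Your Kato version yields an equivalent inequality, so this is a cosmetic rather than substantive deviation.

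The step you should not present as settled is the positivity of the effective potential near the inner radius. The multiplicative estimate $\chem=\tfchem(1+o(1))$ does \emph{not} imply $-\Omega^2r^2-\chem=\Omega^2(\rtf^2-r^2)(1+o(1))>0$ down to distance $\OO(\eps^{7/6})$ from $\rtf$: what enters is the additive error $\chem-\tfchem$, which displaces the zero of $\gpd$ from $\rtf$ by $|\chem-\tfchem|/\Omega^2$ in the variable $r^2$, and the bounds \eqref{chemical difference 1}--\eqref{chemical difference 2} only control this displacement by $\OO(\eps)$, which is much larger than $\eps^{7/6}$. Locating the point where $\gpd$ genuinely turns negative and verifying that the decay accumulated from there down to $\rtf-\OO(\eps^{7/6})$ is still at least $c\,\eps^{-1/6}$ is the whole quantitative content of the proposition; it is what calibrates the exponent $7/6$, in addition to the role of the hypothesis $\Omega\geq 2(\sqrt{\pi}\eps)^{-1}+\OO(1)$ in ensuring $\rtf\gg\eps^{7/6}$, which you do identify correctly. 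You fold this into ``carry it out exactly as in \cite{CRY}'', which is defensible insofar as the paper itself supplies nothing more than that citation, but it means the only delicate estimate of the proof is asserted rather than derived, and the heuristic you do give for it is not a valid deduction as written.
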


	\begin{proof}
		See \cite[Proposition 2.2]{CRY}.
	\end{proof}

\subsection{Distribution of Vorticity}
\label{sec distribution}

We are now able to prove the uniform distribution of vorticity:

\begin{proof}[Proof of Theorem \ref{uniform distribution}]
	\mbox{}	\\
	The proof follows very closely the proof of \cite[Theorem 3.3]{CY} and relies essentially on \cite[Proposition 5.1]{SS1}. 
	
	The argument has to be slightly adapted depending on the value of the angular velocity: For any $ \Omega \leq \bar{\Omega} \eps^{-1} $, with $ \bar{\Omega} < 2/\sqrt{\pi} $, the proof of \cite[Theorem 3.3]{CY} applies with only one minor modification, since the cells in the lattice $ \hlatt $ occurring in the lower bound proof do not cover the whole of $ \ba $. However, since the region covered by cells tends to $ \tfd $ as $ \eps \to 0 $ and the area of the excluded set close to the boundary is of order $ \OO(\eps|\log\eps|) $, i.e., much smaller than the cell area, such a difference in the lattice choice has no consequences for the final result. 

 	We now discuss the modifications in the regime $  \eps^{-1} \ll \Omega \ll \eps^{-2} |\log\eps|^{-1} $ which was not taken into account in \cite[Theorem 3.3]{CY}. The starting point is the localization of the energy bounds \eqref{GP energy ub 2}, \eqref{GPe lower bound} and \eqref{Ei lower bound}, which can be rewritten as
	\beq
		\label{bound localization}
		\sum_{\rv_i \in \hat{\latt}} \tfm(r_i) \lf| \gpfi[u_{}] - \Omega \hat{\spac}^2 |\log\eps| \ri| \leq \eta \: \Omega \hat{\spac}^2 |\log\eps|  \sum_{\rv_i \in \hat{\latt}} \tfm(r_i),
	\eeq
	for some 
	\[
	 \eta = \eta(\eps,\Omega) \ll 1 \mbox{ as } \eps \to 0 .
	 \] 
	In order to obtain a similar estimate inside one lattice cell, one first needs a suitable lower bound on the density $ \tfm $ and this can be obtained by restricting the analysis to the bulk of the condensate, i.e.,
	\bdm 
		\At : = \left\{ \vec{r} \in \ba : \: \rtilde \leq r \leq \rmax \right \}
	\edm
	where, if $ \Omega \gg \eps^{-1} $, $ \rtilde $ is given by
	\beq
		\label{rtilde}
		\rtilde : = \rtf + \gamma \eps^{-1} \Omega^{-1},	\hspace{1,5cm} \gamma : = |\log\eta|^{-1}.
	\eeq
	We then have, for some $ C > 0 $,
	\beq
		\label{density lb inside bulk}
		\tfm(r) \geq C \eps \Omega |\log\eta|^{-1} \mbox{  on  } \At.
	\eeq
	Moreover, the localization of the energy estimate requires that a certain number of bad cells be rejected: As in \cite[Theorem 3.3]{CY} we first introduce a new small parameter
	\beq
		\epsilon : = \sqrt{\frac{2\eps |\log\eps|}{\Omega}} \ll 1,
	\eeq
	so that $ |\log(\epsilon^2\Omega)| = |\log\eps|(1 + o(1)) $ and \eqref{bound localization} yields
	\beq
		\label{GPfi ub}
		\sum_{\rv_i \in \hat{\latt}} \tfm(r_i) \gpfi_{\epsilon}[u_{}] \leq (1 + \eta) \sum_{\rv_i \in \hat{\latt}} \tfm(r_i)  \Omega \hat{\spac}^2 |\log(\epsilon^2\Omega)|,
	\eeq 
	where
	\beq
		\gpfi_{\epsilon}[u] := \int_{\hcelli} \diff \rv \lf\{ \lf| (\nabla - i\magnp) u \ri|^2 + \epsilon^{-2} \lf(1 - |u|^2\ri)^2 \ri\},
	\eeq
	with $ \eta(\epsilon,\Omega) \to 0 $ as $ \epsilon \to 0 $. 
	\newline
	We then say that a cell $ \hcelli \subset \At $ is a good cell if
	\beq
		\label{good cells unif vort}
		\gpfi_{\epsilon}[u] \leq (1 + \sqrt{\eta} ) \Omega \hat{\spac}^2 |\log(\epsilon^2\Omega)|,
	\eeq
	whereas the cell is bad if the inequality is reversed. 
	\newline
	Now given any set $ \set \subset \At $ such that $ |\set| \gg |\hat\cell| $, the upper bound \eqref{GPfi ub}, the definition of bad cells, \eqref{density lb inside bulk} and the upper bound $ \tfm \leq \OO(\eps \Omega) $ imply that 
	\beq
		N_{B} \leq \sqrt{\eta} \: \gamma^{-1} N = \sqrt{\eta} |\log\eta| N \ll N,
	\eeq 
	where $ N_{B} $ and $ N $ stand for the number of bad cells and the total number of cells contained inside $ \set $ respectively.
	\newline
	On the other hand by the definition \eqref{good cells unif vort} good cells satisfy the assumptions of \cite[Proposition 5.1]{SS1} and therefore one can construct a finite collection of disjoint balls $ \{ \ba_i \} : = 
\{ \ba(\rvi,\varrho_i) \} $ such that $ |u| > 1/2 $ on the boundary of each ball and $ \varrho_i \leq \OO(\Omega^{-1/2}) $. Hence one can define the winding number $ d_{i,\eps} $ of $ u $ on $ \partial \ba_i $, which coincides with the winding number of $ \gpm $ and, using \cite[Proposition 5.1]{SS1}
	\beq
		2 \pi \sum d_{i,\eps} = \Omega \hspac^2 (1 + o(1)),	\hspace{1,5cm}	2 \pi \sum \lf|d_{i,\eps}\ri| = \Omega \hspac^2 (1 + o(1)).
	\eeq
	The rest of the statement of Theorem \ref{uniform distribution} easily follows by noticing that one can always take $ \hspac = \Omega^{-1/2} |\log(\eps^2 \Omega |\log\eps|)| $, which satisfies \eqref{lattice spacing cond}, obtaining the lower condition on the area of the set $ \set $.
\end{proof}

\section{The Giant Vortex Regime $ \Omega \sim \eps^{-2}|\log\eps|^{-1} $}
\label{sec gvortex}

As a preparation for the proof of the main results contained in Theorems \ref{no vortices} and \ref{giant vortex teo} we formulate and prove in Section \ref{giant vortex densities sec} some important propositions about the properties of the giant vortex density profiles.
\newline
The proof of the absence of vortices in the bulk will follow the analysis of the ground state energy asymptotics, which is achieved in several steps. The main ingredients are the energy decoupling (Section \ref{energy decoupling sec}), the vortex ball construction and the jacobian estimate (Section \ref{vortex ball construction sec}). Each individual step is analogous to the corresponding one contained in \cite{CRY} and we will often omit some details, only stressing the major differences with the analysis of \cite{CRY} and referring to that paper for further details.

\subsection{Giant Vortex Density Profiles}
\label{giant vortex densities sec}

In this section we investigate the properties of the giant vortex profiles and the associated energy functional defined in \eqref{gvf}. Actually for technical reasons which will be clearer later we consider a functional identical to \eqref{gvf} but on a different integration domain, i.e.,
\beq
	\label{annulus at}
	\ann : = \lf\{ \rv \in \ba : \: r \geq \rt \ri\},
\eeq
where $ \rt < \rtf $ is suitably chosen in order to apply some estimates: All the conditions on $ \rt $ occurring in the subsequent proofs are satisfied if we take
\beq
	\label{rt}
	\rt : = \rtf - \eps^{8/7}.
\eeq
More precisely we define
\begin{equation}\label{domain tilde}
\gpdomt := \left\{  f \in H^1 (\ann) : \: f= f ^*, \: \lf\| f \ri\|_{L^2 (\ann)} = 1, \: f= 0 \mbox{ on } \dd \ba \right\}
\end{equation}
and set, for any $ f \in \gpdomt $,
\bml{
	\label{hgvf}
	\hgvf[f] : = \int_{\ann} \diff \rv \lf\{ \lf| \nabla f \ri|^2 + ( [\Omega] - \omega)^2 r^{-2} f^2 - 2 ( [\Omega] - \omega) \Omega f^2 + \eps^{-2} f^4 \ri\} = 	\\
	\int_{\ann} \diff \rv \lf\{ \lf| \nabla f \ri|^2 + B_{\omega}^2 f^2 - \Omega^2 r^2 f^2 + \eps^{-2} f^4 \ri\}.
}
We recall that 
\[
B_{\omega} =\Omega r - \left( [ \Omega] -\omega \right) r ^{-1}.
\]
The associated ground state energy is
\beq
	\label{hgve}
	\hgve := \inf_{f \in \gpdomt} \hgvf[f]
\eeq
and we denote $ \hgvm $ any associated minimizer.
\newline
The TF-like functional obtained from \eqref{hgvf} by dropping the kinetic term is denoted by $ \htff_{\omega} $ (see \eqref{hTFf}) and its minimization discussed in the Appendix.

	\begin{pro}[\textbf{Minimization of $ \hgvf $}]
		\label{hgvf minimization}
		\mbox{}	\\
		If $ \Omega \propto \eps^{-2} |\log\eps|^{-1} $ and $ |\omega| \leq \OO(\eps^{-5/4}|\log\eps|^{-3/4}) $ as $ \eps \to 0 $, then
		\beq
			\label{hgved asympt}
			\tfe \leq \htfe_{\omega} \leq \hgve \leq \htfe_{\omega} + \OO(\eps^{-5/2} |\log\eps|^{-3/2}) \leq \tfe + \OO(\eps^{-5/2} |\log\eps|^{-3/2}).
		\eeq	
		There exists a minimizer $ \hgvm $ that is unique up to a sign, radial and can be chosen to be positive away from the boundary $\partial\ba$. It solves inside $ \ann $ the variational equation
		\beq
			\label{hgvm variational eq}
			- \Delta \hgvm + B_{\omega}^2 \hgvm - \Omega^2 r^2 \hgvm + 2 \eps^{-2} \hgvm^3 = \hgvchem \hgvm,
		\eeq
		with boundary conditions $ \hgvm(1) = 0 $ and $ \hgvm^{\prime}(\rt) = 0 $ and $ \hgvchem = \hgve + \eps^{-2} \lf\| \hgvm \ri\|_4^4 $. 
		\newline
		Moreover $ \hgvm $ has a unique global maximum at $  \rmaxgv $ with $ \rt < \rmaxgv < 1 $.
	\end{pro}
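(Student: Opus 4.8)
The plan is to mirror the proofs of Proposition~\ref{hGPf minimization} and Proposition~\ref{unique maximum}, the only genuinely new features being the annular domain $\ann$ with a \emph{free} inner boundary at $r=\rt$, the additional nonnegative potential $B_{\omega}^2$, and the giant-vortex scaling $\Omega\propto\eps^{-2}|\log\eps|^{-1}$. For the energy chain, the two outermost inequalities $\tfe\leq\htfe_{\omega}\leq\tfe+\OO(\eps^{-5/2}|\log\eps|^{-3/2})$ are the TF-level estimates of the Appendix: the lower one holds because $B_{\omega}^2\geq0$ and $\supp\tfm\subset\ann$ (so that dropping $B_{\omega}^2$ and extending by zero produces an admissible density for $\tff$), and the upper one by testing $\htff_{\omega}$ on $\tfm$, whereby the extra cost is $\int_{\ann}B_{\omega}^2\tfm\lesssim\omega^2\lesssim\eps^{-5/2}|\log\eps|^{-3/2}$ thanks to the hypothesis $|\omega|\leq\OO(\eps^{-5/4}|\log\eps|^{-3/4})$. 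The middle lower bound $\hgve\geq\htfe_{\omega}$ is immediate, since one merely discards the nonnegative kinetic term. For the middle upper bound I would test $\hgvf$ on $\ftrial=c\sqrt{\htfm_{\omega}}\,\xi_D$, with $\htfm_{\omega}$ the minimizer of $\htff_{\omega}$ and $\xi_D$ a cut-off equal to $1$ away from a layer $[1-\delta,1]$ where it drops to $0$ to enforce the Dirichlet condition; no cut-off is needed at the free boundary $r=\rt$. As in Proposition~\ref{hGPf minimization}, the only new energy is the radial kinetic energy generated by the cut-off, which upon optimizing over $\delta\sim\eps^{1/2}\Omega^{-1/2}=\eps^{3/2}|\log\eps|^{1/2}$ is of order $\eps^{1/2}\Omega^{3/2}=\OO(\eps^{-5/2}|\log\eps|^{-3/2})$.

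Existence, uniqueness, radiality, positivity and the variational equation follow the scheme of Proposition~\ref{hGPf minimization}. Coercivity comes from completing the square in $-\Omega^2r^2f^2+\eps^{-2}f^4$ and using $B_{\omega}^2\geq0$, which gives $\hgvf[f]\geq\|\nabla f\|_{L^2(\ann)}^2-C$; together with the compact embedding $H^1(\ann)\hookrightarrow L^2(\ann)$, which preserves the normalization along a minimizing sequence, the direct method yields a minimizer. Replacing $f$ by $|f|$ does not raise the energy, so the minimizer may be taken nonnegative; expressed through $\rho=f^2$ the functional is strictly convex (kinetic term $\tfrac14\int|\nabla\rho|^2/\rho$ jointly convex, potential term linear, quartic $\eps^{-2}\int\rho^2$ strictly convex), which forces $\rho=\hgvm^2$ to be unique and hence $\hgvm$ unique up to sign; rotational invariance then makes $\hgvm$ radial. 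Stationarity gives \eqref{hgvm variational eq} with multiplier $\hgvchem$, the Dirichlet condition $\hgvm(1)=0$ being imposed in $\gpdomt$ and the natural Neumann condition $\hgvm'(\rt)=0$ arising at the free inner boundary; testing \eqref{hgvm variational eq} against $\hgvm$ yields $\hgvchem=\hgve+\eps^{-2}\|\hgvm\|_4^4$, and the strong maximum principle applied to \eqref{hgvm variational eq} gives $\hgvm>0$ in the interior of $\ann$.

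The uniqueness of the maximum is the analogue of Proposition~\ref{unique maximum}. After the substitution $s=r^2$ the potential becomes $B_{\omega}^2-\Omega^2r^2=-2\Omega([\Omega]-\omega)+([\Omega]-\omega)^2s^{-1}$, whose $s$-dependent part $([\Omega]-\omega)^2s^{-1}$ is strictly decreasing (here $[\Omega]-\omega>0$ since $\Omega\gg|\omega|$), the additive constant only multiplying the conserved mass $\int f^2$. This is exactly the monotonicity exploited in Proposition~\ref{unique maximum}, so the rearrangement argument applies verbatim: were there a local minimum between two local maxima, flattening $\hgvm^2$ on two equal-mass level sets (lowering it on the higher, inner set and raising it on the lower, outer set) would leave the kinetic energy unchanged, strictly lower the potential energy (mass is shifted towards larger $s$, where the potential is smaller) and strictly lower $\|\hgvm\|_4^4$ by convexity of $\rho\mapsto\rho^2$, contradicting minimality. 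Hence $\hgvm$ has a single maximum $\rmaxgv$; it satisfies $\rmaxgv<1$ because $\hgvm(1)=0$, and $\rmaxgv>\rt$ because evaluating \eqref{hgvm variational eq} at the free boundary (where $\hgvm'(\rt)=0$), together with the fact that $\rt<\rtf$ lies in the classically forbidden region of the TF problem, makes the zeroth-order coefficient have the sign forcing $\hgvm''(\rt)>0$, so the free boundary is a local minimum rather than a maximum.

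Given Propositions~\ref{hGPf minimization} and~\ref{unique maximum}, most of the work is routine verification that the adaptations go through. The two points I expect to require genuine care are the sharp boundary-layer estimate in the trial function, needed to reach the order $\OO(\eps^{-5/2}|\log\eps|^{-3/2})$ (this Dirichlet contribution is absent in the Neumann treatment of \cite{CRY}), and the behavior at the free inner boundary $r=\rt$: the natural Neumann condition must be derived there, the extra term $B_{\omega}^2$ must be checked not to destroy the monotonicity underpinning the rearrangement (it does not, as shown above), and one must rule out that the global maximum migrates to $\rt$, since the rearrangement argument alone only excludes \emph{interior} local minima and must be supplemented by the sign analysis of \eqref{hgvm variational eq} at the free boundary.
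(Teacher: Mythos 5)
Your proposal is correct and follows essentially the same route as the paper, which simply notes that $ |B_{\omega}(r)| \leq C(\eps^{-1}+|\omega|) $ on $ \ann $ --- so that the $ B_{\omega}^2 $ term is absorbed into the $ \OO(\eps^{-5/2}|\log\eps|^{-3/2}) $ remainder of Proposition \ref{hGPf minimization} --- and that the Neumann condition at $ r=\rt $ is the natural boundary condition for $ f \in H^1(\ann) $, the unique maximum then following by the rearrangement argument of Proposition \ref{unique maximum} exactly as you describe. The only small imprecision is that the extra TF cost is $ \int_{\ann} B_{\omega}^2 \tfm \lesssim \eps^{-2}+\omega^2 $ rather than $ \lesssim \omega^2 $ (the term $ \Omega(r^{-1}-r) = \OO(\eps^{-1}) $ on $ \ann $ contributes even when $ \omega=0 $), but both quantities are $ \OO(\eps^{-5/2}|\log\eps|^{-3/2}) $ under the stated hypothesis on $ \omega $, so the conclusion is unaffected.
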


	\begin{rem}\textit{(Composition of the energy)}
		\mbox{}	\\
		Unlike the flat Neumann case, the remainder in the r.h.s. of \eqref{hgved asympt} is of the same order even if the refined TF energy $ \htfe_{\omega} $ is extracted. The reason is that such a remainder is actually due to the radial kinetic energy of the giant vortex density profile and in particular to Dirichlet boundary conditions.
		\newline
		In order to give some heuristics to explain such a difference with the flat Neumann case, it is indeed sufficient to note that, by the pointwise estimate \eqref{pointwise bound 2}, the density $ \hgvm $ goes from its maximum value $ \sim \eps^{1/2} \Omega^{1/2} \sim \eps^{-1/2} |\log\eps|^{-1/2} $ to $ 0 $ in a region of width at most $ \OO( \eps^{1/2} \Omega^{-1/2} |\log\eps|^{3/2} ) = \OO(\eps^{3/2} |\log\eps|^2) $. This yields an estimate for the kinetic energy of $ \hgvm $ in that region as $ \OO(\eps^{-5/2} |\log\eps|^{-3}) $, i.e., approximately the same remainder as in \eqref{hgved asympt}, which is in any case much larger than the difference between the TF energies $ \htfe_{\omega} - \tfe $ (see \eqref{hTFe}).
	\end{rem}

	\begin{proof}[Proof of Proposition \ref{hgvf minimization}]
		\mbox{}	\\
		The proof of Proposition \ref{hGPf minimization} applies to the functional $ \hgvf $ as well by noticing that
		\beq
			\lf| B_{\omega}(r) \ri| \leq \Omega \sup_{\rv \in \ann} \lf( r^{-1} - r \ri) + C \lf| \omega \ri| \leq C \lf( \eps^{-1} + \lf| \omega \ri| \ri),
		\eeq
		which implies that the $ B_{\omega}^2$ term in the functional (see the second expression in \eqref{hgvf}) is always smaller than the remainder in \eqref{hGPe asympt}, provided $ |\omega| \leq \OO(\eps^{-5/4}|\log\eps|^{-3/4}) $. The Neumann condition at the inner boundary of $ \ann $ is a direct consequence of the assumption $ f \in H^1(\A) $.
	\end{proof}

Since the asymptotic behavior of the energy $ \hgve $ is the same as that of $ \hgpe $ (see \eqref{hGPe asympt}) for any $ |\omega| \leq \OO(\eps^{-5/4}|\log\eps|^{-3/4}) $, most of the estimates proven for the density profile $ \hgpm $ hold true for $ \hgvm $ as well, provided the phase $ \omega $ satisfies the estimate required in Proposition \ref{hgvf minimization}. We sum up such estimates in the following

	\begin{pro}[\textbf{Estimates for $ \hgvm $}]
		\label{estimates gv dens}
		\mbox{}	\\	
		If $ \Omega \sim \eps^{-2} |\log\eps|^{-1} $ and $ |\omega| \leq \OO(\eps^{-5/4}|\log\eps|^{-3/4}) $ as $ \eps \to 0 $,
		\beq
			\label{preliminary est hgvm 1}
			\lf\| \hgvm^2 - \tfm \ri\|_{L^2(\ann)} \leq \OO( \eps^{-1/4}|\log\eps|^{-3/4}),					\eeq
		\beq
			\label{preliminary est hgvm 2}
			\hgvm^2(\rmaxgv) = \lf\| \hgvm \ri\|^2_{L^{\infty}(\ann)} \leq \lf\| \tfm \ri\|_{L^{\infty}(\ba)}  \lf(1 + \OO(\eps^{1/4} |\log\eps|^{-1/4}) \ri).
		\eeq	
		Moreover for any $ \rv \in \ann $ such that $ \rtf + \OO(\eps|\log\eps|^{-1}) \leq r \leq \max[\rmaxgv, 1 - \eps^{3/2} |\log\eps|^2] $
		\beq
			\label{pointwise bound hgvm}
			\lf| \hgvm^2(r) - \tfm(r) \ri| \leq \OO(\eps^{-3/4}|\log\eps|^{-5/4}) \leq \OO(\eps^{1/4} |\log\eps|^{7/4}) \tfm(r) \ll \tfm(r),
		\eeq
		and the maximum position $ \rmaxgv(\omega) $ of $ \hgvm $ satisfies the bounds
		\beq
			\label{improved rmaxgv}
			\rmaxgv(\omega) \geq 1 - \OO(\eps^{9/8}|\log\eps|^{7/8}),		\hspace{1,5cm}	\lf\| \hgvm \ri\|^2 _{L^2(\ba\setminus\ba(\rmaxgv))} \leq \OO(\eps^{1/8}|\log\eps|^{1/8}).
		\eeq
		Finally for any $ \rv $ such that $ r \leq \rtf - \OO(\eps^{7/6}) $,
		\beq
			\label{exp small hgvm}
			\hgvm^2(r)  \leq C  \eps^{-1} |\log\eps|^{-1} \: \exp \lf\{ - \frac{c}{\eps^{1/6}} \ri\}.
		\eeq
	\end{pro}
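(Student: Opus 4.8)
The guiding principle is that, under the standing hypotheses $\Omega \sim \eps^{-2}|\log\eps|^{-1}$ and $|\omega| \leq \OO(\eps^{-5/4}|\log\eps|^{-3/4})$, each of the five estimates is the exact giant-vortex counterpart of an estimate already established for the profile $\hgpm$ in Section \ref{sec preliminary}, so the plan is simply to re-run those arguments for $\hgvf$ and $\hgvm$. The only genuinely new features are the extra centrifugal term $B_\omega^2 f^2$ in \eqref{hgvf} and the annular domain $\ann$ with its Neumann condition at $\rt$. The crucial bound making the adaptation work is the one already used in Proposition \ref{hgvf minimization}, namely $|B_\omega(r)| \leq C(\eps^{-1} + |\omega|) = \OO(\eps^{-5/4}|\log\eps|^{-3/4})$ on $\ann$, whence $\eps^2 B_\omega^2 = \OO(\eps^{-1/2}|\log\eps|^{-3/2})$; since this is $\ll \eps^{-3/4}|\log\eps|^{-5/4}$, the new term is subdominant to the remainders carried by the corresponding $\hgpm$ estimates in this velocity regime.

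First I would prove \eqref{preliminary est hgvm 1} and \eqref{preliminary est hgvm 2} as in Proposition \ref{preliminary est hgpm}. Dropping the (nonnegative) kinetic and $B_\omega^2$ terms gives $\tff[\hgvm^2] \leq \hgve$, so the coercivity argument of \cite[Proposition 2.1]{CRY} yields $\|\hgvm^2 - \tfm\|_{L^2(\ann)}^2 \leq \eps^2(\hgve - \tfe)$; inserting the energy asymptotics \eqref{hgved asympt}, i.e. $\hgve - \tfe = \OO(\eps^{-5/2}|\log\eps|^{-3/2})$, produces \eqref{preliminary est hgvm 1}. The same $L^2$ bound converts, exactly as in the passage through \eqref{chemical diff}, into a bound on $|\hgvchem - \tfchem|$, from which \eqref{preliminary est hgvm 2} follows precisely as \eqref{hGPm linfty est} did.

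Next, for the pointwise estimate \eqref{pointwise bound hgvm} I would reproduce the local sub-/super-solution construction of Proposition \ref{pointwise bound dens}. Writing \eqref{hgvm variational eq} as $-\Delta \hgvm = 2\eps^{-2}(\hgvd - \hgvm^2)\hgvm$ with effective density $\hgvd := \half \eps^2(\hgvchem - B_\omega^2 + \Omega^2 r^2)$, the chemical-potential bound above together with $\eps^2 B_\omega^2 = \OO(\eps^{-1/2}|\log\eps|^{-3/2})$ shows $|\hgvd - \tfm| = \OO(\eps^{-3/4}|\log\eps|^{-5/4})$, so that $\hgvd$ remains bounded below by a positive multiple of $\tfm$ throughout the stated range. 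One then applies the maximum principle on local annuli of width $\OO(\eps^{1/2}\Omega^{-1/2}|\log\eps|^{3/2}) = \OO(\eps^{3/2}|\log\eps|^2)$, just as before. Estimate \eqref{improved rmaxgv} then follows from \eqref{pointwise bound hgvm} via the $L^1$ mass-balance argument of Proposition \ref{improved rmax est} (equality of the masses of $\hgvm^2$ and $\tfm$, the pointwise control inside $\ba(\rmaxgv)$, and the quadratic lower bound on $r^2 - \rmaxgv^2$), while the exponential smallness \eqref{exp small hgvm} is obtained from the comparison argument of Proposition \ref{exponential smallness}; here one uses that $\rt = \rtf - \eps^{8/7}$ lies strictly inside the exponentially small region, since $\eps^{8/7} > \eps^{7/6}$ implies $\rt < \rtf - \eps^{7/6}$.

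The main obstacle I expect is the bookkeeping attached to the $B_\omega^2$ term rather than any conceptually new difficulty. One must check at each step that $B_\omega^2 \geq 0$ (so that discarding it preserves the energy lower bounds) and that $\eps^2 B_\omega^2$ stays below the ambient remainder; the delicate point is that in the pointwise step the effective density $\hgvd$ is obtained by \emph{subtracting} $\eps^2 B_\omega^2$, so one must verify that this subtraction does not spoil the positivity of $\hgvd$ needed for the maximum principle, which is exactly what the bound $\eps^2 B_\omega^2 \ll \tfm$ guarantees on the relevant range. The annular geometry and the Neumann condition at $\rt$ cause only minor complications, since every estimate is carried out on local annuli away from $\rt$ and the complementary region $\rt \leq r \leq \rtf - \OO(\eps^{7/6})$ is absorbed into the exponential-smallness bound.
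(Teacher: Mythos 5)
Your proposal is correct and follows essentially the same route as the paper, whose proof of Proposition \ref{estimates gv dens} consists precisely of the remark that the results are obtained exactly as in Propositions \ref{preliminary est hgpm}, \ref{pointwise bound dens}, \ref{exponential smallness} and \ref{improved rmax est}. Your explicit verification that $\eps^2 B_{\omega}^2 = \OO(\eps^{-1/2}|\log\eps|^{-3/2})$ stays below the ambient remainders (and below $\tfm$ on the relevant range) is exactly the point that makes the adaptation legitimate under the hypothesis $|\omega| \leq \OO(\eps^{-5/4}|\log\eps|^{-3/4})$.
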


	\begin{proof}
		The results are proven exactly as the analogous statements contained in Propositions \ref{preliminary est hgpm}, \ref{pointwise bound dens}, \ref{exponential smallness} and \ref{improved rmax est}.
	\end{proof}

\subsection{Energy Decoupling and Optimal Phases}
\label{energy decoupling sec}

The first step in the proof of the absence of vortices is a restriction of the GP energy to a subdomain of $ \ba $ and its splitting in a suitable energy functional plus the giant vortex profile energy. More precisely we consider the annulus $ \ann $ defined in \eqref{annulus at} with an inner radius $ \rt = \rtf - \eps^{8/7} $ suitably chosen in such a way that outside $ \ann $ the estimates \eqref{GPm exp small} and \eqref{exp small hgvm} yield the exponential smallness in $ \eps $ of both $ \gpm $ and the density profile $ \hgvm $.

We also recall the functional $ \hgvf $ introduced in \eqref{hgvf}, which is going to give the energy of the giant vortex profile, and the reduced energy
\beq
	\label{ef}
	\E_{\omega}[v] : = \int_{\ann} \diff \rv \: \hgvm^2 \lf\{|\nabla v|^2 - 2 \rmagnp \cdot (iv,\nabla v) + \eps^{-2} \hgvm^2 (1-|v|^2)^2 \ri\},
\eeq
where
\beq
	\label{current}
	 (iv,\nabla v) : = \half i \lf(  v \nabla v^* - v^* \nabla v \ri).
\eeq

	\begin{pro}[\textbf{Reduction to an annulus}]
		\label{reduction}
		\mbox{}	\\
		For any $ \omega \in \Z $ such that $ |\omega|\leq \OO(\eps^{-5/4}|\log\eps|^{-3/4}) $ and for $ \eps $ sufficiently small
		\beq
			\label{reduction est} 
			\hgve + \E_{\omega}[u_{\omega}] - \OO(\eps^{\infty}) \leq \gpe \leq \hgve + \OO(\eps^{\infty}),
		\eeq
		where the function $ u_{\omega} $ is defined in $\ann$ by the decomposition
		\beq
			\label{gvfunction u}
			\gpm(\rv) = : \hgvm(r) u_{\omega}(\rv) \exp\lf\{ i([\Omega] - \omega) \vartheta\ri\}.
		\eeq
	\end{pro}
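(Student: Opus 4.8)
The plan is to prove the two estimates in \eqref{reduction est} separately: the upper bound $\gpe\le\hgve+\OO(\eps^\infty)$ by a giant-vortex trial state, and the lower bound $\gpe\ge\hgve+\E_\omega[u_\omega]-\OO(\eps^\infty)$ by an energy decoupling of Lassoued--Mironescu type \cite{LM}, along the lines of \cite[Proposition 3.1]{CRY}. For the upper bound I would test the functional $\gpf$ on the state $\hgvm(r)\exp\{i([\Omega]-\omega)\vartheta\}$, defined on $\ann$ and extended by $0$ into the hole $\ba\setminus\ann$. Since $\hgvm(1)=0$ and, by the choice \eqref{rt} of $\rt$ together with \eqref{exp small hgvm}, $\hgvm$ and $\hgvm'$ are exponentially small at $r=\rt$, smoothing the junction across $r=\rt$ and renormalising the resulting function of $\gpdom$ cost only $\OO(\eps^\infty)$. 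By the very definition \eqref{hgvf} of $\hgvf$, the energy of this state equals $\hgvf[\hgvm]=\hgve$, so that $\gpe\le\hgve+\OO(\eps^\infty)$.

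For the lower bound I would first discard the hole: on $\ba\setminus\ann$ the only negative term in $\gpf[\gpm]$ is $-\Omega^2r^2|\gpm|^2$, which by the exponential smallness of $\gpm$ (Proposition \ref{GPm exponential smallness}) and the merely polynomial size of $\Omega^2$ is $\ge-\OO(\eps^\infty)$; hence $\gpe\ge\gpf[\gpm]|_{\ann}-\OO(\eps^\infty)$. On $\ann$ I insert the decomposition \eqref{gvfunction u} into the magnetic form \eqref{GPf magn}. The phase gradient $([\Omega]-\omega)r^{-1}\vec{e}_{\vartheta}$ combines with $\magnp$ to produce the effective field $\rmagnp$, and expanding the kinetic energy yields the cross term $\tfrac12\int_{\ann}\nabla\hgvm^2\cdot\nabla|u_\omega|^2$. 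Integrating this by parts and using the variational equation \eqref{hgvm variational eq} for $\hgvm$ removes the profile kinetic energy; the boundary terms drop out because $\hgvm'(\rt)=0$ at the inner edge and because $\partial_n\hgvm^2\,|u_\omega|^2=2\hgvm'|\gpm|^2/\hgvm\to0$ at $r=1$, both $\gpm$ and $\hgvm$ satisfying the Dirichlet condition there.

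After this substitution the two contributions $\pm\int_{\ann}B_\omega^2\hgvm^2|u_\omega|^2$ cancel, the quartic terms recombine into $\eps^{-2}\hgvm^4(1-|u_\omega|^2)^2$, and the remainder is precisely $\E_\omega[u_\omega]$ of \eqref{ef} together with the diagonal piece $\hgvchem\int_{\ann}|\gpm|^2-\eps^{-2}\int_{\ann}\hgvm^4$. Inserting $\hgvchem=\hgve+\eps^{-2}\|\hgvm\|_4^4$ and $\int_{\ann}|\gpm|^2=1-\OO(\eps^\infty)$ (again by Proposition \ref{GPm exponential smallness}) collapses this diagonal piece to $\hgve-\OO(\eps^\infty)$, which gives the claimed lower bound.

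The algebraic identity itself is routine and is essentially carried out in \cite{CRY}; the genuinely delicate point, which I expect to be the main obstacle, is the justification of the integration by parts at the Dirichlet boundary $r=1$, where $u_\omega=\gpm/(\hgvm e^{i([\Omega]-\omega)\vartheta})$ is singular. I would make it rigorous by performing the decoupling on the slightly smaller annulus $\{\rt\le r\le1-\delta\}$ and then letting $\delta\to0$, controlling the boundary integral through the vanishing of $|\gpm|^2/\hgvm$ at $r=1$. A secondary check is that the exponential-smallness bounds of Proposition \ref{GPm exponential smallness} and \eqref{exp small hgvm} indeed cover the whole hole, i.e.\ that $\rt=\rtf-\eps^{8/7}\le\rtf-\OO(\eps^{7/6})$, which holds since $\eps^{8/7}\gg\eps^{7/6}$ as $\eps\to0$.
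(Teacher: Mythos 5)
Your proposal is correct and follows essentially the same route as the paper: the upper bound via a regularized giant-vortex trial function using the exponential smallness \eqref{exp small hgvm}, and the lower bound via the Lassoued--Mironescu decoupling on $\ann$ using the variational equation \eqref{hgvm variational eq} together with Proposition \ref{GPm exponential smallness}, exactly the two ingredients the paper cites from \cite[Proposition 5.4]{CRY}. Your treatment of the delicate point at $r=1$ (decoupling on $\{\rt\le r\le 1-\delta\}$ and controlling the boundary term through the vanishing of $|\gpm|^2/\hgvm$) is precisely how the cited references (\cite[Lemma 2.2]{Se}) handle the Dirichlet condition, and your check that $\rt\le\rtf-\OO(\eps^{7/6})$ is the right consistency verification.
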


	\begin{proof}
		As in \cite[Proposition 5.4]{CRY} the only ingredients for the proof of the above result are the exponential smallness \eqref{GPm exp small} of $ \gpm $ outside $ \ann $ and the variational equation solved by $ \hgvm $. Note that the function $ u_{\omega} $ is well defined away from the boundary $ \partial \ba $ where both $ \gpm $ and $ \hgvm $ vanish.
	\end{proof}

The idea behind the decomposition \eqref{gvfunction u} is that, if the phase factor $ \omega $ is chosen in a suitable way, the function $ u_{\omega} $ obtained by the extraction from $ \gpm $ of a density $ \hgpm $ and the giant vortex phase, i.e., the phase factor $ \exp \{ i([\Omega] - \omega) \vartheta \}$, contains basically no more vorticity and $ |u_{\omega}| \sim 1 $ in some region close to the boundary of the trap. The optimal giant vortex phase is determined by inspecting the dependence on $ \omega $ of the energy $ \hgve $, i.e., one needs to identify the $\om_0$ minimizing $\hgve$.

	\begin{pro}[\textbf{Properties of the optimal phase $ \omega_0 $ and density $ \hgvmo $}]
		\label{optimal phase pro}
		\mbox{}	\\
		For every $ \eps > 0 $ there exists an $ \omega_0 \in \Z $ minimizing $ \hgve $. Moreover one has  
		\beq
			\label{est omega_0}
			\omega_0 = \frac{2}{3 \sqrt{\pi} \eps} (1 + \OO(|\log\eps|^{-4}),	\hspace{1,5cm}	\int_{\ann} \diff \rv \: \hgvmo^2 \lf( \Omega - \frac{[\Omega] - \omega_0}{r^2} \ri) =  \OO(1).
		\eeq
	\end{pro}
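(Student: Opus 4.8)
The plan is to regard $\hgve$ as a function of the integer phase $\omega$, to prove that a minimizer $\omega_0$ exists by coercivity, and then to pin down its value from the elementary fact that neither $\omega_0+1$ nor $\omega_0-1$ can lower the energy. Throughout I write $m_0 := [\Omega]-\omega_0$ and $a_0 := \int_{\ann} r^{-2}\hgvmo^2\,\diff\rv$, and use the normalization $\lf\| \hgvmo \ri\|_{L^2(\ann)} = 1$. For existence I use the second form of the functional in \eqref{hgvf}, namely $\int_{\ann}(|\nabla f|^2 + \eps^{-2}f^4) + ([\Omega]-\omega)^2\int_{\ann} r^{-2} f^2 - 2\Omega([\Omega]-\omega)\int_{\ann} f^2$. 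Since $r\leq 1$ on $\ann$ we have $r^{-2}\geq 1$, so dropping the nonnegative kinetic and quartic terms and using $\int_\ann f^2 = 1$ gives $\hgvf[f]\geq ([\Omega]-\omega)^2 - 2\Omega([\Omega]-\omega)$ for every admissible $f$; hence $\hgve\to+\infty$ as $|\omega|\to\infty$. As $\hgve\geq\tfe$ is bounded below (Proposition \ref{hgvf minimization}), the infimum over $\omega\in\Z$ is attained at some $\omega_0$.

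The heart of the argument is to compare $\hgveo$ with the energies at $\omega_0\pm1$, tested with the fixed minimizer $\hgvmo\in\gpdomt$ (which is admissible for every phase). Minimality of $\omega_0$ among integers, together with the fact that $\hgvmo$ minimizes at phase $\omega_0$, gives $\tilde{\mathcal{E}}^{\mathrm{gv}}_{\omega_0\pm1}[\hgvmo]\geq \tilde{E}^{\mathrm{gv}}_{\omega_0\pm1}\geq\hgveo = \tilde{\mathcal{E}}^{\mathrm{gv}}_{\omega_0}[\hgvmo]$. Only the last two terms of the functional depend on the phase, and replacing $m_0$ by $m_0\mp1$ changes the value by an explicit amount:
\beq
\tilde{\mathcal{E}}^{\mathrm{gv}}_{\omega_0\pm1}[\hgvmo] - \hgveo = a_0(1\mp 2m_0)\pm 2\Omega\geq 0.
\eeq
The two resulting inequalities combine into $|a_0 m_0 - \Omega|\leq a_0/2$. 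Since $\int_{\ann}\hgvmo^2(\Omega - ([\Omega]-\omega_0)r^{-2})\,\diff\rv = \Omega - a_0 m_0$, this is already the second identity in \eqref{est omega_0} as soon as we know $a_0 = \OO(1)$.

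It remains to evaluate $a_0$, first crudely and then precisely. A purely geometric bound needs nothing beyond the normalization: because $\ann = \{r\geq\rt\}$ with $\rt = \rtf - \eps^{8/7}$ (see \eqref{rt}) and $1-\rtf = \OO(\eps|\log\eps|)$, one has $1\leq a_0\leq\rt^{-2} = 1 + \OO(\eps|\log\eps|)$. Inserting this into $|a_0 m_0 - \Omega|\leq a_0/2$ yields $m_0 = \Omega/a_0 + \OO(1)$ and hence $\omega_0 = [\Omega]-m_0 = \Omega(a_0-1)/a_0 + \OO(1) = \OO(\eps^{-1})$; in particular $|\omega_0|\ll\eps^{-5/4}|\log\eps|^{-3/4}$, so Proposition \ref{estimates gv dens} applies to $\hgvmo$ and lets me replace $\hgvmo^2$ by $\tfm$. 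Writing $a_0-1 = \int_{\ann}(r^{-2}-1)\hgvmo^2$, inserting the explicit profile $\tfm(r) = \half\eps^2\Omega^2(r^2-\rtf^2)$ on $[\rtf,1]$ together with $1-\rtf = (\sqrt{\pi}\,\eps\Omega)^{-1}(1+o(1))$ from the Appendix, and computing the elementary radial integral gives $a_0-1 = \frac{2}{3\sqrt{\pi}\,\eps\Omega}(1+o(1))$, whence $\omega_0 = \Omega(a_0-1)(1+o(1)) = \frac{2}{3\sqrt{\pi}\,\eps}(1+o(1))$.

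The main obstacle is the precision required in this last step: to reach the relative accuracy $\OO(|\log\eps|^{-4})$ stated in \eqref{est omega_0} one must control $\int_{\ann}(r^{-2}-1)(\hgvmo^2-\tfm)$ against the small leading term $a_0-1\sim\eps|\log\eps|$. The favorable feature is that the weight $r^{-2}-1 = \OO(\eps|\log\eps|)$ is itself small and the support is thin (radial width $\OO(\eps|\log\eps|)$), so Cauchy--Schwarz with the bound $\lf\|\hgvmo^2-\tfm\ri\|_{L^2(\ann)} = \OO(\eps^{-1/4}|\log\eps|^{-3/4})$ of Proposition \ref{estimates gv dens} contributes only a relative error $\OO(\eps^{1/4}|\log\eps|^{-1/4})$. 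Collecting this with the quadratic correction $\Omega(a_0-1)^2\sim|\log\eps|$, the integer rounding $\OO(1)$ and the replacement $[\Omega] = \Omega + \OO(1)$ — each of which is $o(\eps^{-1}|\log\eps|^{-4})$ relative to $\omega_0$ — together with the accuracy of the explicit TF expansion from the Appendix, comfortably yields the stated $\OO(|\log\eps|^{-4})$ and completes the proof.
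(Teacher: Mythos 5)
Your proof is correct and follows essentially the same route as the paper: existence and the second estimate in \eqref{est omega_0} come from comparing the energies at $\omega_0\pm1$ tested with the fixed minimizer $\hgvmo$ (the argument the paper imports from [CRY, Proposition 3.2]), the a priori bound $\omega_0=\OO(\eps^{-1})$ is extracted first so that Proposition \ref{estimates gv dens} applies without circularity, and the value of $\omega_0$ is then read off from $\Omega\int_{\ann}(r^{-2}-1)\hgvmo^2$ after replacing $\hgvmo^2$ by $\tfm$. The only (harmless) deviation is in that replacement step: you use Cauchy--Schwarz with the global $L^2$ bound \eqref{preliminary est hgvm 1}, whereas the paper splits $\ann$ into $\At$ and $\ann\setminus\At$ and combines the pointwise bound \eqref{pointwise bound hgvm} with a sup bound on the complement; both yield relative errors well within the stated $\OO(|\log\eps|^{-4})$.
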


	\begin{proof}
		The existence of a minimizing $ \omega_0 \in \Z $ can be deduced as in \cite[Proposition 3.2]{CRY} as well as the second estimate in \eqref{est omega_0}.
		
		The estimate of $ \omega_0 $ is a straightforward consequence of the estimates\footnote{Note that the second estimate in \eqref{est omega_0} allows to extract the simple bound $ |\omega_0| \leq \OO(\eps^{-1}) $ which guarantees that all the estimates proven in Section \ref{giant vortex densities sec} apply to $ \hgvmo $.} on $ \hgvmo $ contained in Proposition \ref{estimates gv dens}, since one has (recall the definition of the annulus $ \At $ in \eqref{annulus})
		\bml{
 			\label{est phase 1}
 			\Omega \int_{\ann} \diff \rv \lf( r^{-2} - 1 \ri) \hgvmo^2 \leq \Omega \lf(1 + \OO(\eps^{1/4} |\log\eps|^{7/4}) \ri) \int_{\At} \diff \rv \lf( r^{-2} - 1 \ri) \tfm + \OO(\eps^{-1}) \int_{\ann \setminus \At} \diff \rv \hgvmo^2 \leq	\\
			\Omega \lf(1 + \OO(\eps^{1/4} |\log\eps|^{7/4}) \ri) \int_{\tfd} \diff \rv \lf( r^{-2} - 1 \ri) \tfm + \OO(\eps^{-1}|\log\eps|^{-4}),
		}
		where we have used the fact that $ |\A\setminus\At| \leq \OO(\eps|\log\eps|^{-1}) $ and the estimates \eqref{preliminary est hgvm 2} and \eqref{pointwise bound hgvm}, which also imply that
		\bdm
			\sup_{\rv \in \A \setminus \At} \hgvmo^2(r) \leq \OO(\eps^{-1}|\log\eps|^{-3}).
		\edm
		On the other hand since
		\beq
			\label{est phase 2}
			\Omega \int_{\tfd} \diff \rv \lf( r^{-2} - 1 \ri) \tfm = \frac{\pi\eps^2\Omega^3}{4} \lf[ 1 - \rtf^4 + 2 \rtf^2 \log \rtf^{-2} \ri] = \frac{2}{3\sqrt{\pi}\eps} (1 + \OO(\eps|\log\eps|)),
		\eeq
		and
		\bdm
			 \int_{\ann} \diff \rv \: r^{-2} \hgvmo^2 \geq \rtf^{-2} (1 - \OO(\eps^{8/7})) \geq 1 - \OO(\eps|\log\eps|),
		\edm
		the result easily follows.
	\end{proof}

The analogue in the whole ball $ \B $ is discussed in the following

	\begin{pro}[\textbf{Optimal phase $ \oopt $}]
		\label{optimal phase ball pro}
		\mbox{}	\\
		For every $ \eps > 0 $ there exists an $ \oopt \in \N $ fulfilling
		\beq
			\label{est oopt}
			\oopt = \frac{2}{3 \sqrt{\pi} \eps} (1 + \OO(|\log\eps|^{-4})
		\eeq
		which minimizes $ \gve $, i.e.,
		\beq
			\giante = \gveopt.
		\eeq
	\end{pro}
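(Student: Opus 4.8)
The plan is to follow the proof of Proposition \ref{optimal phase pro} essentially line by line, with the annulus $\ann$ replaced by the full disc $\ba$ and the annular profile $\hgvmo$ replaced by the minimizer $\gvmopt$ of $\gvf$ over real $f\in\gpdom$ at a fixed integer phase. First I would record that these full-ball giant vortex profiles satisfy the analogues of the estimates in Proposition \ref{estimates gv dens}: for $|\omega|\leq\OO(\eps^{-5/4}|\log\eps|^{-3/4})$ the minimizer of $\gvf$ is radial, close in $L^2$ to $\sqrt{\tfm}$, pointwise close to $\tfm$ in the bulk, and exponentially small inside the hole $\ba(\rtf)$. These follow by the very same sub-/super-solution and mass-rearrangement arguments used in Propositions \ref{preliminary est hgpm}--\ref{improved rmax est}, since the only difference between $\ba$ and $\ann$ is the region $r<\rt$, where every quantity is exponentially small.

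Existence of an integer minimizer comes from coercivity in $\omega$. Using the second form of $\gvf$ in \eqref{gvf}, the $\omega$-dependence of $\gvf[f]$ for fixed $f$ enters only through $\int_\ba B_\omega^2 f^2$, which is a parabola in $\omega$ with strictly positive leading coefficient $\int_\ba r^{-2}f^2$; hence $\gve\to+\infty$ as $|\omega|\to\infty$ and the infimum in \eqref{giante} is attained at some $\oopt\in\Z$. Since the leading value of the optimal phase will turn out to be $2(3\sqrt\pi\eps)^{-1}>0$, this minimizer is positive for $\eps$ small, so in fact $\oopt\in\N$.

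For the estimate \eqref{est oopt} I would reproduce the computation of Proposition \ref{optimal phase pro}. Minimality over integers, $\gve[\oopt]\leq\gve[\oopt\pm1]$, together with the convexity in $\omega$ of $\gvf[\gvmopt]$, yields the discrete optimality (``current'') estimate $\int_\ba\gvmopt^2\lf(\Omega-([\Omega]-\oopt)r^{-2}\ri)=\OO(1)$, the exact analogue of the second identity in \eqref{est omega_0}. Solving for $[\Omega]-\oopt$ gives $([\Omega]-\oopt)\int_\ba\gvmopt^2 r^{-2}=\Omega+\OO(1)$; replacing $\gvmopt^2$ by $\tfm$ via the density estimates, using $\int_\ba\tfm=1$ and the explicit integral \eqref{est phase 2}, namely $\Omega\int_{\tfd}(r^{-2}-1)\tfm=2(3\sqrt\pi\eps)^{-1}(1+\OO(\eps|\log\eps|))$, one finds $[\Omega]-\oopt=\Omega-2(3\sqrt\pi\eps)^{-1}(1+\OO(|\log\eps|^{-4}))$, whence $\oopt=[\Omega]-\Omega+2(3\sqrt\pi\eps)^{-1}(1+\OO(|\log\eps|^{-4}))=2(3\sqrt\pi\eps)^{-1}(1+\OO(|\log\eps|^{-4}))$, the last step absorbing the bounded quantity $[\Omega]-\Omega$ into the error because $\eps^{-1}|\log\eps|^{-4}\gg1$. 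Equivalently, one may compare $\gve$ with the annular energy $\hgve$: arguing as in Proposition \ref{reduction}, extending/restricting test functions across $r=\rt$ costs only $\OO(\eps^\infty)$, so $\gve=\hgve+\OO(\eps^\infty)$ uniformly in the admissible range of $\omega$; since this perturbation is far smaller than the $\OO(1)$ discrete curvature of the parabola, the two integer minimizers coincide, $\oopt=\omega_0$, and \eqref{est oopt} follows directly from \eqref{est omega_0}.

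The main obstacle is the control of $\int_\ba\gvmopt^2 r^{-2}$, where the weight $r^{-2}$ diverges at the origin: one must show that replacing $\gvmopt^2$ by $\tfm$ produces an error negligible against the leading term $2(3\sqrt\pi\eps)^{-1}$, which forces one to use the exponential smallness of $\gvmopt$ inside the hole to dominate the $r^{-2}$ singularity there, and to keep the relative error coming from the bulk within $\OO(|\log\eps|^{-4})$. This is precisely the point where the refined estimates of Proposition \ref{estimates gv dens} (or, in the comparison route, the $\OO(\eps^\infty)$ bound of Proposition \ref{reduction}) are indispensable.
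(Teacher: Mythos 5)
Your overall strategy matches the paper's: establish the analogues of Proposition \ref{estimates gv dens} for $\gvmopt$ via the energy comparison $\gveopt\leq\hgveo+\OO(\eps^{\infty})$, derive the discrete optimality condition \eqref{compatibility}, restrict it to $\ann$, and conclude as in Proposition \ref{optimal phase pro}. You also correctly single out the $r^{-2}$ singularity at the origin as the main obstacle. The gap is in how you propose to overcome it: you claim that ``the exponential smallness of $\gvmopt$ inside the hole'' dominates the singularity, but exponential smallness is an $L^{\infty}$ bound, and no uniform bound, however small, controls $\int_{\ba(\delta)} \diff\rv\, r^{-2}\gvmopt^2 = 2\pi\int_0^{\delta}\diff r\, r^{-1}\gvmopt^2$, since $\int_0^{\delta}\diff r\, r^{-1}$ diverges. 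What is needed --- and what the paper supplies in Lemma \ref{pointwise origin} --- is a \emph{rate} of vanishing at the origin: a supersolution/maximum-principle argument gives $\gvmopt(r)\leq\lf\|\gvmopt\ri\|_{\infty}(2r)^{[\Omega/2]}$ for $r\leq 1/2$, which makes $r^{-2}\gvmopt^2$ integrable near $0$ and, combined with the exponential smallness away from the origin, renders $([\Omega]-\oopt)\int_{\ba(\rt)}\diff\rv\, r^{-2}\gvmopt^2$ negligible. Without this lemma the passage from \eqref{compatibility} to its restriction on $\ann$ does not go through.

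Your fallback ``equivalently'' route via $\gve=\hgve+\OO(\eps^{\infty})$ also does not close the gap as stated. The assertion that the two integer minimizers coincide because the perturbation is smaller than the discrete curvature is not justified: $\gve$ is an infimum of parabolas in $\omega$, not itself a parabola, and the only quantitative coercivity available --- the lower bound $\hgve\geq\htfe_{\omega}=\tfe+[\omega-\optphtf]^2+\dots$ against the upper bound $\hgveo\leq\htfe_{\omega_0}+\OO(\eps^{-5/2}|\log\eps|^{-3/2})$ --- locates an approximate minimizer only to within $\OO(\eps^{-5/4}|\log\eps|^{-3/4})$ of $\optphtf$, a window far larger than the $\OO(\eps^{-1}|\log\eps|^{-4})$ precision claimed in \eqref{est oopt}. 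The comparison route could be repaired by transferring the discrete first-order condition $\gveopt\leq E^{\mathrm{gv}}_{\oopt\pm1}$ to the annular energies up to $\OO(\eps^{\infty})$ and rerunning the computation of Proposition \ref{optimal phase pro} there, but that is an additional argument you would need to spell out; the paper instead works directly with $\gvmopt$ on the full disc and relies on Lemma \ref{pointwise origin}.
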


	\begin{proof}
		The existence of $ \oopt $ can be proven as in Proposition \ref{optimal phase pro} above. Moreover, as in \cite[Proposition 3.2]{CRY}, it is not difficult to show that the following estimate
		\beq
			\label{compatibility}
			\int_{\B} \diff \rv \: \gvmopt^2 \lf( \Omega - \frac{[\Omega] - \omega_0}{r^2} \ri) =  \OO(1),
		\eeq
		holds true, where $ \gvmopt $ is the minimizing density associated with $ \oopt $. 
		\newline
		In order to extract the same information as in the proof of Proposition \ref{optimal phase pro} one needs however to restrict the above integration to a domain comparable to $ \ann $ and this requires some further analysis of the properties of $ \gvmopt $.
		\newline
		Using a regularization of $ \hgvmo $ as a trial function for $ \gvfopt $ and exploiting the exponential smallness \eqref{exp small hgvm} one can easily show that 
		\beq
			\label{giante est}
			\giante = \gveopt \leq \hgveo + \OO(\eps^{\infty}),
		\eeq
		 which guarantees that all the estimates proven in Proposition \ref{estimates gv dens} apply also to $ \gvmopt $. Hence one can use the exponential smallness of $ \gvmopt $ (see \eqref{exp small hgvm}) to estimate the integral inside $ \ba \setminus \ann $, but this is not completely sufficient because the potential $ \vec{B}_{\oopt} $ contains a singular term at the origin $ \sim r^{-2} $ and one needs an additional estimate showing that $ \gvmopt $ vanishes as $ r \to 0 $. This is proven in Lemma \ref{pointwise origin} below.
		\newline
		By using \eqref{point est origin} and the analogue of \eqref{exp small hgvm}, one thus obtains from \eqref{compatibility}
		\beq
			\int_{\ann} \diff \rv \: \gvmopt^2 \lf( \Omega - \frac{[\Omega] - \omega_0}{r^2} \ri) =  \OO(1),
		\eeq
		which implies the result exactly as\footnote{Note that the other estimates of $ \gvmopt $ (analogous to those stated in Proposition \ref{estimates gv dens}) which are  needed to complete the proof can be derived from the energy bound \eqref{giante est}.} in the proof of Proposition \ref{optimal phase pro}.
	\end{proof}

	\begin{lem}[\textbf{Pointwise estimate of $ \gvmopt $ close to the origin}]
		\label{pointwise origin}
		\mbox{}	\\
		The density $ \gvmopt $ minimizing the functional $ \gvfopt $ defined in \eqref{gvf} satisfies the pointwise estimate
		\beq
			\label{point est origin}
			\gvmopt(r) \leq  \lf\| \gvmopt \ri\|_{L^{\infty}(\B)} (2r)^{[\Omega/2]}.
		\eeq
		for any $ 0 \leq r \leq 1/2 $.
	\end{lem}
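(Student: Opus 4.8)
The plan is to prove the bound by a maximum principle (comparison) argument that exploits the strong centrifugal barrier carried by the large phase $n := [\Omega] - \oopt$. Since $\gvmopt$ is radial and minimizes $\gvfopt$, elliptic regularity makes it smooth in the interior, and finiteness of the energy together with $n\neq 0$ forces $\gvmopt(0)=0$. It solves the Euler--Lagrange equation
\beq
-\Delta \gvmopt + \lf( \frac{n^2}{r^2} - 2 n \Omega \ri) \gvmopt + 2\eps^{-2} \gvmopt^3 = \mu\, \gvmopt,
\eeq
where $\mu := \gveopt + \eps^{-2}\lf\| \gvmopt \ri\|_4^4$ is the associated chemical potential. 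I would rewrite this as $-\Delta \gvmopt + Q(r)\,\gvmopt = 0$ with $Q(r) := n^2 r^{-2} - 2n\Omega - \mu + 2\eps^{-2}\gvmopt^2(r)$, and compare $\gvmopt$ with the explicit candidate supersolution $G(r) := \lf\| \gvmopt \ri\|_{L^\infty(\B)}\,(2r)^{m}$, $m := [\Omega/2]$, which is normalized so that $G(1/2) = \lf\| \gvmopt \ri\|_{L^\infty(\B)}$.

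First I would check that $G$ is a supersolution of the same operator on $(0,1/2)$. A direct computation gives $-\Delta G = - m^2 r^{-2} G$, hence
\beq
-\Delta G + Q(r) G = \lf( Q(r) - \frac{m^2}{r^2} \ri) G,
\eeq
so $G$ is a supersolution exactly when $Q(r)\geq m^2 r^{-2}$ on $(0,1/2)$. Discarding the nonnegative term $2\eps^{-2}\gvmopt^2$, this reduces to $(n^2-m^2)r^{-2}\geq 2n\Omega + \mu$, which, using $r\leq 1/2$, follows from $4(n^2-m^2)\geq 2n\Omega + \mu$. Since $\oopt=\OO(\eps^{-1})\ll\Omega$ by \eqref{est oopt}, one has $n=[\Omega]-\oopt=\Omega(1-o(1))$ with $n\leq\Omega$, while $m=[\Omega/2]\leq \Omega/2$; therefore $n^2-m^2\geq(3/4-o(1))\Omega^2$ and $2n\Omega\leq 2\Omega^2$, so the inequality holds for small $\eps$ as soon as $\mu\leq \Omega^2(1-o(1))$.

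The only genuinely nontrivial input, and the main obstacle, is this upper bound on the chemical potential $\mu$. I would extract it from the energy asymptotics: by \eqref{giante est} all the estimates of Proposition \ref{estimates gv dens} apply to $\gvmopt$, so $\gvmopt^2$ is $L^2$-close to $\tfm$ and $\gveopt=\giante=\tfe(1+o(1))$, whence $\mu = \gveopt + \eps^{-2}\lf\| \gvmopt \ri\|_4^4 = \tfchem(1+o(1))$. In the regime $\Omega\sim\eps^{-2}|\log\eps|^{-1}$ one has $\tfchem=-\Omega^2\rtf^2<0$, so in fact $\mu<0\ll\Omega^2$ for $\eps$ small, and the supersolution inequality holds with room to spare; note that only the crude bound $\mu\lesssim\Omega^2$ is actually needed, which makes the argument robust.

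It then remains to invoke the maximum principle. On $(0,1/2)$ we have established $Q(r)\geq m^2 r^{-2}>0$, that $\gvmopt$ solves $-\Delta\gvmopt+Q\gvmopt=0$, and that $G$ satisfies $-\Delta G+QG\geq 0$; moreover $\gvmopt(0)=G(0)=0$ (as $m\geq 1$) and $\gvmopt(1/2)\leq \lf\| \gvmopt \ri\|_{L^\infty(\B)}=G(1/2)$. Setting $\phi:=\gvmopt-G$, so that $-\Delta\phi+Q\phi\leq 0$ with $Q>0$, the function $\phi$ can have no positive interior maximum (at such a point $-\Delta\phi\geq 0$ and $Q\phi>0$ would contradict the inequality); combined with $\phi(1/2)\leq 0$ and $\phi(r)\to 0$ as $r\to 0^+$, this forces $\phi\leq 0$ throughout $(0,1/2)$. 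Hence $\gvmopt(r)\leq \lf\| \gvmopt \ri\|_{L^\infty(\B)}(2r)^{m}$ for all $0\leq r\leq 1/2$, which is the claim.
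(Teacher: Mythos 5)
Your proposal is correct and follows essentially the same route as the paper: the paper's proof also exhibits $W(r)=\lf\|\gvmopt\ri\|_{\infty}(2r)^{[\Omega/2]}$ as a supersolution of the variational equation on $[0,1/2]$, using $\mu_{\mathrm{opt}}=-\Omega^2(1-o(1))$ and $r\leq 1/2$ to verify the differential inequality, and concludes by the maximum principle with the comparison $\gvmopt(1/2)\leq\lf\|\gvmopt\ri\|_{\infty}=W(1/2)$. Your additional care about $\gvmopt(0)=0$ and the explicit justification of the chemical potential bound are sound refinements of the same argument.
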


	\begin{proof}
		The function $ W(r) : = \lf\| \gvmopt \ri\|_{\infty} (2r)^{[\Omega/2]} $ is a supersolution in $ [0,1/2] $ for the variational equation solved by $ \gvmopt $, i.e.,
		\bdm
			- \Delta \gvmopt + ([\Omega] - \oopt)^2 r^{-2} \gvmopt - 2 \Omega ([\Omega] - \oopt) \gvmopt + 2 \eps^{-2} \gvmopt^3 = \mu_{\mathrm{opt}} \gvmopt,
		\edm
		since
		\bmln{
 			- \Delta W + ([\Omega] - \oopt)^2 r^{-2} W - 2 \Omega ([\Omega] - \oopt) W + 2 \eps^{-2} W^3 - \mu_{\mathrm{opt}} W \geq	\\
			\lf\{ \lf[  ([\Omega] - \oopt)^2 - [\Omega/2]^2 - C \Omega \ri]  r^{-2} - 2 \Omega ([\Omega] - \oopt) -  \mu_{\mathrm{opt}} \ri\} W(r) \geq C \Omega W(r) \geq 0,
		}
		where we have used the estimate $  \mu_{\mathrm{opt}}  = -\Omega^2 (1 - o(1)) $ and the fact that we are in the interval $ r \in [0,1/2]$.
		\newline
		Since at the boundary $ \partial \ba_{1/2} $ one has $ \gvmopt(1/2) \leq  \lf\| \gvmopt \ri\|_{\infty} = W(1/2) $, the maximum principle (see, e.g., \cite{Evans}) guarantees that $ \gvmopt(r) \leq W(r) $ and therefore the result.
	\end{proof}

\subsection{Estimates of the Reduced Energies}

The next crucial step in the proof of the absence of vortices is the lower bound for the reduced energy functional $ \E_{\omega_0} $ and in the rest of this section we will focus on such a problem. Since the optimal phase $ \omega_0 $ as well as the associated density $ \hgvmo $ can be fixed throughout the rest of the proof, we simplify the notation for the sake of clarity and set
\beq
	\label{redefinitions}
	\E_{\omega_0}[v] =: \E[v],	\qquad 	\hgvmo = : g, 	\qquad	\rmaxgv(\omega_0) = : \rmaxgv, 	\qquad	\vec{B}_{\omega_0}(r) = : \vec{B}(r) = \lf[  \Omega r - \lf( [\Omega] - \omega_0 \ri) r^{-1} \ri] \vec{e}_{\vartheta},
\eeq
and
\beq
	\label{F functional}	
	\F [v] := \int_{\ann} \diff \rv \: g ^{2}  \lf\{ \left| \nabla v \right|^2 + \eps^{-2} g^2 \left(1-|v|^2 \right)^2 \ri\},
\eeq
where $ \rt : = \rtf - \eps^{8/7} $ and (see \eqref{annulus at})
\bdm
	\ann : =  \lf\{ \rv \in \ba : \:  r \geq \rt \ri\}.
\edm
We also recall that $ u :=u_{\om_0} $ is defined inside $ \ann $ by
\bdm
	\gpm(\rv) = : g(r) u(\rv) \exp\lf\{ i([\Omega] - \omega_0) \vartheta\ri\},
\edm
and the annulus $ \At $ is (see \eqref{annulus})
\bdm
 	\At :=  \lf\{ \rv \in \ba : \: \rb \leq r \leq 1 - \eps^{3/2} |\log\eps|^2 \ri\}
\edm
with $ \rb := \rtf + \ep |\log \ep|^{-1} $ (see \eqref{rd gv}). Note that thanks to the pointwise estimate \eqref{pointwise bound hgvm}, we have the lower bound
\begin{equation}
	\label{g lower bound}
	 g^2(r) \geq \frac{C}{\ep |\log \ep|^{3}} \mbox{ on } \At.
\end{equation}

We can now state the main result in this section, which is going to be the crucial ingredient in the proof of the absence of vortices:

	\begin{pro}[\textbf{Bounds on the reduced energies}]
		\label{reduced en bounds pro}
		\mbox{}	\\
		If $ \Omega = \Omega_0 \eps^{-2} |\log\eps|^{-1} $ with $\Om_{0}> 2(3 \pi)^{-1} $, then for $\eps$ small enough
		\beq
			\label{reduced en bounds}
			\F[u] \leq \OO\lf( \frac{|\log\eps|^2}{\log|\log\eps|^2}\ri),	\hspace{1,5cm}	\E [u] \geq - \OO\lf( \frac{|\log\eps|^2}{\log|\log\eps|^2}\ri).
		\eeq
	\end{pro}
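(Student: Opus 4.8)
The plan is to deduce both inequalities from a single upper bound on the momentum term
\[
\mathcal{C} := \int_{\ann}\diff\rv\; g^2\,\vec B\cdot(iu,\nabla u).
\]
Expanding the definition \eqref{ef} of $\E=\E_{\omega_0}$ and comparing with \eqref{F functional} yields the algebraic identity $\E[u]=\F[u]-2\mathcal{C}$. Applying the reduction estimate \eqref{reduction est} with $\omega=\omega_0$ gives $\E[u]\le\gpe-\hgveo+\OO(\eps^\infty)=\OO(\eps^\infty)$, whence $\F[u]=\E[u]+2\mathcal{C}\le 2\mathcal{C}+\OO(\eps^\infty)$ and, since $\F\ge 0$, also $\E[u]=\F[u]-2\mathcal{C}\ge -2\mathcal{C}$. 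It therefore suffices to prove the one-sided bound $\mathcal{C}\le\OO(|\log\eps|^2/\log|\log\eps|^2)$.

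To expose the cancellation carried by $\vec B$ I would introduce the radial potential $F$ defined by $F'(r)=g^2(r)B(r)$ and $F(\rt)=0$, so that $g^2\vec B=\nabla^\perp F$, and integrate by parts:
\[
\mathcal{C}=\int_{\ann}\diff\rv\;\nabla^\perp F\cdot(iu,\nabla u)=\int_{\ann}\diff\rv\;F\,\mu(u)+(\text{boundary terms}),
\]
where $\mu(u):=\curl(iu,\nabla u)$ is the vorticity of $u$. The contribution of the inner circle vanishes because $F(\rt)=0$, while the outer one is carried by $F(1)=\int_{\rt}^1 g^2B=\OO(1)$, the latter being exactly the content of the optimal-phase identity \eqref{est omega_0}; the treatment of this outer term near $\partial\ba$ is the delicate point, discussed below.

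The heart of the matter is a sharp estimate of $F$, and this is where the method departs from \cite{CRY}. Using $g^2=\tfm(1+o(1))$ on the bulk (\eqref{pointwise bound hgvm}), the exponential smallness \eqref{exp small hgvm} below $\rtf$, and the explicit form of $\tfm$, I would compare $F$ with its explicit Thomas-Fermi counterpart $\costtf(r)=\int_{\rt}^r\tfm(s)B(s)\diff s$. Since $B<0$ on almost all of $\ann$, this shows that $F\le 0$ throughout $\At$ up to a thin layer adjacent to $\partial\ba$ and, quantitatively, that on $\At$ the size $|F(r)|$ stays below the local vortex self-energy, of order $g^2(r)|\log\eps|$, by a fixed fraction. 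It is precisely this strict inequality that consumes the hypothesis $\Omega_0>2(3\pi)^{-1}$, through the value \eqref{est omega_0} of $\omega_0$ and the computation \eqref{est phase 2}; it is the quantitative substitute for the boundary estimate on $\gpm$ used in \cite{CRY}.

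Finally I would convert $\int_{\ann}F\,\mu(u)$ into a sum over vortices by means of the jacobian estimate and the vortex-ball construction (Section \ref{vortex ball construction sec}), a preliminary finiteness of $\F[u]$ serving to bootstrap the construction, together with the Ginzburg-Landau lower bounds of \cite{SS1,SS2}. Because $F\,\mu(u)\le 0$ wherever $F\le 0$ and the circulation has a fixed sign, for the upper bound on $\mathcal{C}$ only the balls lying in the thin layer next to $\partial\ba$, where $F$ may be positive, and those carrying the wrong degree can contribute; weighing these against the vortex self-energy $\sim g^2|\log\eps|$ already accounted for in $\F[u]$, and incorporating the relative precision of the ball construction in this regime (of order $1/\log|\log\eps|$) together with the residual boundary-layer energy, yields the remainder $\OO(|\log\eps|^2/\log|\log\eps|^2)$. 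The main obstacle is exactly the control of $\mathcal{C}$ in this outer boundary layer: there $g$ degenerates, $u$ is a priori uncontrolled, and the boundary term produced by the integration by parts at $\partial\ba$ must be shown negligible without any a priori estimate on $\gpm$ on the boundary, which is what the new potential-function estimate is designed to achieve.
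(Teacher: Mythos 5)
Your skeleton is essentially the paper's: the identity $\E[u]=\F[u]-2\mathcal{C}$ for the momentum term $\mathcal{C}=\int_{\ann}g^2\vec B\cdot(iu,\nabla u)$, the a priori bound $\E[u]\leq\OO(\eps^{\infty})$ from Proposition \ref{reduction}, the integration by parts against the potential $F$ of \eqref{F}, the vortex-ball and jacobian estimates of Section \ref{vortex ball construction sec}, the strict inequality $\frac12 g^2|\log\eps|>|F|$ on $\At$ (Proposition \ref{ang vel 3 pro}) as the step that consumes $\Omega_0>2(3\pi)^{-1}$, and the final exponents coming from the self-improving inequality $\frac{\log|\log\eps|}{|\log\eps|}\,\F[u]\lesssim\F[u]^{1/2}$. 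But there is a genuine gap at exactly the point you flag as ``the delicate point'' and then leave unresolved: the boundary term at $\partial\ba$ produced by the integration by parts. The term $F(1)\int_{\partial\ba}(iu,\partial_{\tau}u)$ is not merely hard to estimate --- it is not even well defined, because $u=\gpm\,g^{-1}e^{-i([\Omega]-\omega_0)\vartheta}$ is a $0/0$ expression on $\partial\ba$ under Dirichlet conditions, so the bound $F(1)=\OO(1)$ is of no direct use. The paper's resolution, which is the one genuinely new ingredient relative to \cite{CRY} and is absent from your proposal, is the splitting $F=\fin+\fout$ of \eqref{Fout}--\eqref{Fin}: $\fout$ is constructed so that $\fout(1)=F(1)$ and $\nabla(g^{-2}\nabla\fout)=0$, whence $|\nabla\fout|\leq Cg^2$; only the $\fin$ part is integrated by parts (no boundary term since $\fin(1)=0$), while the $\fout$ part is kept in the form $-\int_{\ann}\nabla^{\perp}\fout\cdot(iu,\nabla u)$ and bounded by $C\F[u]^{1/2}$ via Cauchy--Schwarz, see \eqref{fout bound}. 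Invoking ``the new potential-function estimate'' by name without supplying this decomposition leaves the argument open.

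Two further points. First, your sign bookkeeping is off: with the conventions of \eqref{integration by parts 1} one has $2\mathcal{C}=-\int_{\ann}F\,\curl(iu,\nabla u)+(\mathrm{bdry})$, and since $F\leq0$ on the bulk it is precisely the vortices with the \emph{expected} positive degree that contribute positively to $\mathcal{C}$ and must be beaten by the self-energy $\pi g^2(a_j)|d_j||\log\eps|$; your claim that only wrong-degree and boundary-layer balls contribute is backwards, and if it were true the hypothesis on $\Omega_0$ would be superfluous. Second, the preliminary bound $\F[u]\leq\OO(\eps^{-2})$ of Lemma \ref{initial bounds lem} is not by itself sufficient to run the ball construction: it only confines the possible zero set of $u$ to a region of area $\OO(\eps^2|\log\eps|^2)$, which cannot be covered by balls of total radius much smaller than the width $\OO(\eps|\log\eps|)$ of $\ann$. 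One needs the good/bad cell decomposition \eqref{def good cells} and the partitions of unity $\chin,\chout,\xiin,\xiout$ to localize the construction and to discard the $\OO(\eps^{-1+\alpha}|\log\eps|^{-1})$ bad cells; this localization is also what generates the error terms that ultimately fix the size of the remainder in \eqref{reduced en bounds}.
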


The proof of the above results is quite involved and before the discussion of its details, which is postponed to Section \ref{gv main proofs}, we are going to give a quick sketch of it together with the statement of several preliminary results. 

The main trick in the estimate of the reduced energy is an integration by parts of the second term in \eqref{ef}, which is made possible by the introduction of a potential function $ F(r) $ already considered in \cite{CRY}. Such a function satisfies the key properties
\begin{equation}
	\label{properties F}
	\nabla ^{\perp} F  = 2 g ^2 \vec{B}, 	\hspace{1,5cm}	 F(R_{<}) = 0,
\end{equation}  
and it is explicitly given by
\begin{equation}
	\label{F}
	F(r):= 2 \int_{R_{<}}^r \diff s \: g^2  (s) \left(\Om  s- \left([\Om] - \om_0 \right)\frac{1}{s}\right) = 2 \int_{R_{<}}^r \diff s \: g ^2(s) \vec{B}(s) \cdot \vec{e}_{\vartheta}.
\end{equation}
Other important properties of $ F $ are formulated in the next lemma and are basically straightforward consequences of \eqref{est omega_0} and the bound
\beq
	\label{mangp inside}
	|B(r)| \leq \OO(\eps^{-1}) \:\: \mbox{on } \ann,
\eeq
which follows from the definition of $ \ann $.
	
	\begin{lem}[\textbf{Useful properties of  $F$}]
		\label{properties F lem}
		\mbox{}\\
		Let $F$ be defined in \eqref{F}. The following bounds hold true:
		\beq
			\label{F bounds 1}
			\lf\| F \ri\|_{L^{\infty} (\ann)} \leq \OO(\ep ^{-1}),	\hspace{1,5cm}	\lf\| \nabla F \ri\|_{L^{\infty} (\ann)}  \leq \OO(\ep ^{-2} |\log \ep|^{-1}).					
		\eeq
		Moreover one has the pointwise estimates
		\begin{equation}
			\label{F bounds 2}
			\lf| F(1) \ri| \leq \OO(1),		\hspace{1,5cm}	|F(r)| \leq C 
			\begin{cases}
				\eps^{-1} |r - \rt| g^2 (r),	&	\mbox{if } r \in [\rt,\rmaxgv],	\\
				1 + \ep^{-1} |1 - r| g^2(r) ,	&	\mbox{if } r \in [\rmaxgv,1].
			\end{cases}
		\end{equation}
	\end{lem}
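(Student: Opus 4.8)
The plan is to read off all four bounds directly from the explicit formula \eqref{F}, combining the uniform estimate $|B(r)| \leq \OO(\eps^{-1})$ on $\ann$ from \eqref{mangp inside}, the sup-norm and normalization of the profile $g$ collected in Proposition \ref{estimates gv dens}, and the optimal-phase identity \eqref{est omega_0}. Throughout I would use $g^2 \leq \lf\| g \ri\|^2_{L^{\infty}(\ann)} \leq C\eps\Omega = \OO(\eps^{-1}|\log\eps|^{-1})$ from \eqref{preliminary est hgvm 2}, together with the mass constraint $\lf\| g \ri\|_{L^2(\ann)} = 1$.

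First I would dispatch the two global estimates \eqref{F bounds 1}. Since $g$ is radial, so is $F$, and \eqref{properties F} gives $F'(r) = 2g^2(r)B(r)$ directly; hence $\lf\| \nabla F \ri\|_{L^{\infty}(\ann)} = \sup_r |F'| \leq 2\lf\| g \ri\|^2_{L^{\infty}}\lf\| B \ri\|_{L^{\infty}(\ann)} = \OO(\eps^{-2}|\log\eps|^{-1})$. For $\lf\| F \ri\|_{L^{\infty}(\ann)}$ I would not integrate this crude gradient bound but instead exploit that $\rt = \rtf - \eps^{8/7}$ stays bounded away from $0$: the normalization $2\pi\int_{\rt}^1 g^2(s)\,s\,\diff s = 1$ gives $\int_{\rt}^1 g^2 \leq (2\pi\rt)^{-1} = \OO(1)$, whence $|F(r)| \leq 2\lf\| B \ri\|_{L^{\infty}(\ann)}\int_{\rt}^1 g^2 = \OO(\eps^{-1})$.

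The value $|F(1)| \leq \OO(1)$ is the only estimate genuinely using the optimality of $\omega_0$. Writing $\Omega - ([\Omega] - \omega_0)r^{-2} = B(r)/r$, the second identity in \eqref{est omega_0} reads $2\pi\int_{\rt}^1 g^2(s)B(s)\,\diff s = \OO(1)$; since $F(1) = 2\int_{\rt}^1 g^2 B$, this is precisely $F(1) = \pi^{-1}\OO(1) = \OO(1)$.

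For the refined pointwise bounds \eqref{F bounds 2} I would split at the maximum $\rmaxgv$ and use that $g$ is monotone on each side. On $[\rt,\rmaxgv]$, starting from $F(\rt) = 0$ and bounding $g^2(s) \leq g^2(r)$ for $s \in [\rt,r]$ yields $|F(r)| \leq 2\lf\| B \ri\|_{L^{\infty}(\ann)}\,g^2(r)\,|r - \rt| = \OO(\eps^{-1})|r-\rt|g^2(r)$; on $[\rmaxgv,1]$ I would instead integrate inward, $F(r) = F(1) - 2\int_r^1 g^2 B$, and bound $g^2(s) \leq g^2(r)$ for $s \in [r,1]$ to get $|F(r)| \leq |F(1)| + \OO(\eps^{-1})|1-r|g^2(r) \leq \OO(1) + \OO(\eps^{-1})|1-r|g^2(r)$. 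The one point I would take care to justify rather than assert — and the only mild obstacle — is the monotonicity of $g$ on either side of $\rmaxgv$ that produces the characteristic factor $g^2(r)$: this is the content of the unique-maximum statement in Proposition \ref{hgvf minimization}, proved by the rearrangement argument of Proposition \ref{unique maximum}, which rules out any interior extremum other than $\rmaxgv$ and hence forces $g$ to increase on $[\rt,\rmaxgv]$ and decrease on $[\rmaxgv,1]$. Everything else is a one-line application of \eqref{F} and the already-established profile bounds.
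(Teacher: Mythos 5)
Your proposal is correct and follows essentially the same route as the paper: the global bounds and the $[\rt,\rmaxgv]$ pointwise bound are the direct estimates from the explicit formula \eqref{F} that the paper delegates to \cite[Lemma 4.1]{CRY}, while the two genuinely new ingredients you identify --- reading $|F(1)|\leq\OO(1)$ off the second identity in \eqref{est omega_0}, and combining it with $F(r)=F(1)-2\int_r^1 g^2\vec{B}\cdot\vec{e}_{\vartheta}$ and the monotonicity of $g$ on $[\rmaxgv,1]$ --- are exactly the paper's argument. No gaps.
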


	\begin{proof}	
		Most of the proof follows from \cite[Lemma 4.1]{CRY}. The estimate \eqref{est omega_0} yields $ |F(1)| \leq \OO(1) $. The last inequality in \eqref{F bounds 2} for $ r \in [\rmaxgv,1] $ is a consequence of this bound together with the identity
		\bdm
			F(r) = F(1) - 2 \int_{r}^1 \diff s \: g ^2(s) \vec{B}(s) \cdot \vec{e}_{\vartheta},
		\edm
		and the fact that $ g(r) $ is decreasing for $ r \in [\rmaxgv,1] $.
	\end{proof}

Due to the lack of control of the behavior of the function $ u $ at the boundary $ \partial \ba $, we need to use a suitable decomposition of $ F$: An integration by parts (Stokes theorem) of the second term in \eqref{ef} would indeed give
\begin{equation}
	\label{integration by parts 1}
	- 2 \int_{\ann} \diff \rv \: g ^2 \vec{B} \cdot (iu,\nabla u) = \int_{\ann} \diff \rv \: F(r) \curl (iu,\nabla u)  - \int_{\partial \B} \diff \sigma \: F(1) (iu , \partial_{\tau}u),
\end{equation} 
and the last term in the expression above clearly depends on $ u $ at the boundary. While Neumann boundary conditions allow to extract some information about $ u $ on $ \partial \ba $ and in particular an upper estimate for that term, on the opposite, if Dirichlet conditions are imposed, $ u $ is not even well posed on $ \partial \ba $, since both $ \gpm $ and $ g$ vanish there. A way out to avoid such a problem is the decomposition of $ F $ into a function vanishing on $ \partial \ba $ and another one whose gradient can be explicitly controlled: More precisely we set
\beq
	\label{Fout}
	\fout(r) : = F(1) \bigg[ \int_{\rt}^1 \diff s \: s^{-1} g^2(s) \bigg]^{-1} \int_{\rt}^r \diff s \: s^{-1} g^2(s),
\eeq
so that 
\beq
	\label{Fout properties}
	\nabla \lf( g^{-2} \nabla \fout \ri) = 0,	\hspace{1,5cm}	\fout(1) = F(1).
\eeq
If we now define
\beq
	\label{Fin}
	\fin(r) : = F(r) - \fout(r),
\eeq
one can easily verify that 
\beq
	\label{Fin properties}
	\nabla \lf( g^{-2} \nabla \fin \ri) = 2 \nabla \cdot B(r) \vec{e}_r,	\hspace{1,5cm}	\fin(1) = 0,
\eeq
and, integrating by parts only the term involving $ \fin $ in \eqref{integration by parts 1} we obtain
\begin{equation}
	\label{integration by parts 2}
	- 2 \int_{\ann} \diff \rv \: g ^2 \vec{B} \cdot (iu,\nabla u) = - \int_{\ann} \diff \rv \: \nabla^{\perp} \fout \cdot (iu,\nabla u) + \int_{\ann} \diff \rv \: \fin(r) \curl (iu,\nabla u).
\end{equation} 
The energy $ \E[u] $ can thus be rewritten as
\begin{equation}
	\label{integ by parts energy}
	\E [u] = \int_{\ann} \diff \rv \lf\{ g ^{2} \left| \nabla u \right|^2 + \fin(r) \curl (iu,\nabla u) +  \eps^{-2} g^4 \left( 1-|u|^2 \right)^2 \ri\} - \int_{\ann} \diff \rv \: \nabla^{\perp} \fout \cdot (iu,\nabla u).
\end{equation}
The first three terms above are the most important ones and their estimate is the key result in the proof of the absence of vortices. The last term on the other hand can be estimate separately and one can show that it yields only a smaller order correction.
\newline
More precisely the first two terms can be estimated in terms of the vorticity of $u$: As in \cite{CRY}, if we suppose that $|u|\sim 1$ except in some balls $ \lf\{ \B(\avj,t) \ri\}_{j\in J}$, $ J \subset \N $, whose radius $ t $ is much smaller than the width of $\At$, and we denote by $d_j$ the degree of $u$ around $\avj$, 
\begin{equation}
	\label{vort heur 1}
	\int_{\ann} \diff \rv \: \fin(r) \curl (iu,\nabla u)\simeq \sum_{j \in J} 2 \pi \fin(a_j) d_j.    
\end{equation}
and, optimizing w.r.t. the radius $ t $,
\begin{equation}
	\label{vort heur 2}
	\int_{\ann} \diff \rv \: g ^{2} \left| \nabla u \right|^2 \gtrapprox \sum_{j \in J} 2\pi g^2 (a_j) |d_j| \log \left(\frac{\ep |\log \ep|}{t}\right)  \gtrapprox \sum_{j \in J} \pi g^2 (a_j) |d_j| |\log \ep |. 
\end{equation}
Hence
\begin{equation}
	\label{vort heur 3}
 	\E [u] \gtrapprox \sum_{j \in J} 2 \pi  |d_j| \left( \half g^2 (a_j) |\log \ep | +  \fin(a_j) \right),
\end{equation}
and, if $\Om_0> 2(3\pi) ^{-1}$, the sum between parenthesis is positive for any $\avj$ in the bulk (see Section \ref{ang vel 3 sec}), which means that vortices become energetically unfavorable. Note that there is an important difference with the analysis contained in \cite{CRY} since $ F $ is replaced in the expression above by $ \fin $. This is basically the main effect of Dirichlet boundary conditions.

The starting point of the reduced energy estimate is given by the following preliminary upper bounds:

	\begin{lem}[\textbf{Preliminary energy bounds}]
		\label{initial bounds lem}
		\mbox{}	\\
		If $ \Omega \sim \eps^{-2} |\log\eps|^{-1} $ as $ \eps \to 0  $,
		\beq
			\label{F and E first bounds}
			\F [u] \leq \OO(\ep ^{-2}), 	\hspace{1,5cm}	\E [u] \geq - \OO(\ep ^{-2}).
		\eeq
	\end{lem}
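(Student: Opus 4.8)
The plan is to derive both estimates from a single algebraic identity relating the two functionals, combined with the reduction already established in Proposition \ref{reduction}. Recalling the definitions \eqref{ef} and \eqref{F functional} (with $\omega=\omega_0$, $g=\hgvmo$, $\vec{B}=\vec{B}_{\omega_0}$ as in \eqref{redefinitions}), one has the identity
\[
	\E[u] = \F[u] - 2\int_{\ann}\diff\rv\; g^2\,\vec{B}\cdot(iu,\nabla u).
\]
First I would record that the reduction inequalities of Proposition \ref{reduction}, applied with $\omega=\omega_0$, already bound $\E[u]=\E_{\omega_0}[u_{\omega_0}]$ from above: chaining $\hgveo+\E[u]-\OO(\eps^\infty)\leq\gpe\leq\hgveo+\OO(\eps^\infty)$ and cancelling $\hgveo$ gives $\E[u]\leq\OO(\eps^\infty)$. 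It then remains only to control the magnetic cross term, which will simultaneously yield the upper bound on $\F[u]$ and the lower bound on $\E[u]$.

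The key estimate is a Young inequality on the cross term. Since $|(iu,\nabla u)|\leq |u|\,|\nabla u|$ and $|\vec{B}|=|B(r)|\leq\OO(\eps^{-1})$ on $\ann$ by \eqref{mangp inside}, for any $\alpha\in(0,1)$
\[
	2\left|\int_{\ann}\diff\rv\; g^2\,\vec{B}\cdot(iu,\nabla u)\right|
	\leq \alpha\int_{\ann}\diff\rv\; g^2|\nabla u|^2
	+\alpha^{-1}\int_{\ann}\diff\rv\; g^2|B|^2|u|^2
	\leq \alpha\,\F[u]+\alpha^{-1}\OO(\eps^{-2}),
\]
where the crucial second input is the mass normalization: because $|\gpm|^2=g^2|u|^2$ on $\ann$, one has $\int_{\ann}g^2|u|^2=\int_{\ann}|\gpm|^2\leq\|\gpm\|_2^2=1$, so $\int_{\ann}g^2|B|^2|u|^2\leq\OO(\eps^{-2})$. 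Feeding this into the identity with $\alpha=1/2$ and using $\F[u]\geq0$ gives at once $\E[u]\geq(1-\alpha)\F[u]-\alpha^{-1}\OO(\eps^{-2})\geq-\OO(\eps^{-2})$, which is the second claim. For the first, I would instead write $\F[u]=\E[u]+2\int_{\ann} g^2\vec{B}\cdot(iu,\nabla u)\leq\OO(\eps^\infty)+\alpha\F[u]+\alpha^{-1}\OO(\eps^{-2})$ and absorb $\alpha\F[u]$ into the left-hand side, obtaining $\F[u]\leq\OO(\eps^{-2})$.

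The only point requiring care -- and the mild obstacle in the argument -- is to justify that $\F[u]$ is \emph{a priori} finite, so that the absorption of $\alpha\F[u]$ is legitimate. This follows because $\gpm\in H^1_0(\ba)$ makes the GP current $(i\gpm,\nabla\gpm)$ an $L^1$ field and, via the Lassoued--Mironescu decoupling underlying Proposition \ref{reduction}, renders $\E[u]$ finite. Writing $g^2(iu,\nabla u)=(i\gpm,\nabla\gpm)-|\gpm|^2([\Omega]-\omega_0)r^{-1}\vec{e}_{\vartheta}$ and noting that $r^{-1}\leq\OO(\eps^{-1})$ and $|\gpm|^2\in L^1$ on $\ann$ (since $r\geq\rt$), the cross term is finite, hence so is $\F[u]=\E[u]+2\int_{\ann} g^2\vec{B}\cdot(iu,\nabla u)$. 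With finiteness in hand the two absorption steps above are rigorous and close the proof.
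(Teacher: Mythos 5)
Your argument is correct and is essentially the argument of the proof the paper points to (\cite[Lemma 4.2]{CRY}): one extracts $\E[u]\leq\OO(\eps^\infty)$ from the reduction of Proposition \ref{reduction}, writes $\E[u]=\F[u]-2\int_{\ann}g^2\vec{B}\cdot(iu,\nabla u)$, and controls the cross term by Cauchy--Schwarz/Young using $|\vec B|\leq\OO(\eps^{-1})$ on $\ann$ together with $\int_{\ann}g^2|u|^2=\int_{\ann}|\gpm|^2\leq 1$, so that half of $\F[u]$ can be absorbed. Your extra paragraph justifying the a priori finiteness of $\F[u]$ before absorbing $\alpha\F[u]$ is a sound and welcome precaution.
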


	\begin{proof}
		See \cite[Lemma 4.2]{CRY}.
	\end{proof}

\subsection{Vortex Ball Construction and Jacobian Estimate}
\label{vortex ball construction sec}

In order to construct families of balls containing all the vortices of $ u $, we need to exploit some local energy bound on $ \F[u] $. However the bounds \eqref{F and E first bounds} are not sufficient for our purposes, since they imply that the area of the set where $u$ can possibly vanish is of order $\ep^2 |\log \ep|^2$, whereas the vortex balls method requires to cover it by balls whose radii are much smaller than the width of $\ann$, which is $\OO(\ep |\log \ep|)$. 
\newline
As in \cite{CRY} there is a way out to this obstruction in the localization of the energy bound \eqref{F and E first bounds}, given by the decomposition of the domain into suitable good and bad cells:

	\begin{defi}[\textbf{Good and bad cells}]
		\mbox{}	\\
		We decompose $ \ann $ into almost rectangular cells $ \A_n $, $ n \in \N $, of side length $ \OO( \ep |\log \ep| )$, given by 
		\beq
			\A_n : = \lf\{ \rv \in \ann : \: \vartheta \in [n \theta, (n+1) \theta [ \ri\},
		\eeq
		where $ \theta : = 2\pi / N $ and $ N \sim \ep^{-1} |\log \ep|^{-1} $ is the total number of cells. Let $ 0 \leq \alpha < \frac{1}{2}$ be a parameter to be fixed later on. 
		\newline
		We say that $ \A_n $ is an $\alpha$-good cell if 
		\begin{equation}
			\label{def good cells}
			\int_{\A_n}  \diff \rv \: g ^{2} \lf\{ \left| \nabla u \right|^2+ \eps^{-2} g ^2 \left(1-|u|^2 \right)^2 \ri\} \leq \ep^{-1-\alpha}|\log \ep|,
		\end{equation}
		whereas inside $ \alpha$-bad cells the (strict) inequality is reversed. We denote by $ N_{\alpha} ^{\mathrm{G}} $ and $ N_{\alpha} ^{\mathrm{B}} $ the numbers of $\alpha$-good and bad cells and by $ GS_{\alpha} $ and $ BS_{\alpha} $ the sets covered by good and bad cells respectively.
	\end{defi}

By definition of bad cells, one has that \eqref{F and E first bounds} immediately implies
	\begin{equation}\label{number bad}
		N_{\alpha} ^{\mathrm{B}} < \ep^{1+\alpha} |\log \ep|^{-1} \F [u] \leq C \ep^{-1+\alpha} |\log \ep|^{-1} \ll N, 	 
	\end{equation}
i.e., there are very few $\alpha$-bad cells. Note also that the final estimate \eqref{reduced en bounds} implies that there are actually no bad cells at all.

We can now construct the vortex balls inside good cells but, since the density has to be large enough, we need to restrict the analysis to the subdomain $ \At \subset \ann $ (see \eqref{annulus} for its definition):

	\begin{pro}[\textbf{Vortex ball construction inside good cells}]
		\label{vortex balls pro}
 		\mbox{}\\
		For any $0\leq \al< \frac{1}{2}$ and $ \eps $ small enough, there exists a finite collection $ \{ \B_i \}_{i \in I} := \left\lbrace \B (\avi, \varrho_i)\right\rbrace_{i\in I}$ of disjoint balls with centers $ \avi $ and radii $ \varrho_i $ such that
		\begin{enumerate}
			\item $\left\lbrace \rv \in GS_{\al} \cap \At : \: \left| |u| - 1  \right| > |\log \ep| ^{-1}  \right\rbrace \subset \bigcup_{i \in I} \B_i$,
			\item for any $\al $-good cell $\A_n$, $\sum_{i, \: \B_i \cap \A_n \neq \varnothing } \varrho_i = \ep |\log \ep |^{-5} $.		   
		\end{enumerate}
		Setting $d_i:= \dg \{ u, \partial \B_i \} $, if $ \B_i \subset \At \cap GS_{\al} $, and $d_i=0$ otherwise, we have the lower bounds
		\begin{equation}\label{lowboundballs}
		 	\int_{\B_i}  \diff \rv \: g ^2 \left|\nabla u\right|^2 \geq 2\pi \left(\frac{1}{2} -\al \right) |d_i|   g^2 (a_i) \left| \log \ep \right| \left(1-C \frac{\log \left| \log \ep \right|}{\left|\log \ep\right|}\right).
		\end{equation}
	\end{pro}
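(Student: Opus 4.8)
The plan is to run the by-now-standard vortex-ball growth and merging machinery of Jerrard and Sandier (\cite{Sa,J,JS,SS2}) on the weighted functional $\F$, following very closely the analogous construction in \cite{CRY}; the only genuinely new points are the need to keep track of the spatially varying weight $g^2$ and to stay away from the Dirichlet boundary. First I would localize the problem to the $\alpha$-good cells $\A_n$ that lie inside the bulk annulus $\At$. There, on the one hand, \eqref{g lower bound} provides the uniform lower bound $g^2\geq C\ep^{-1}|\log\ep|^{-3}$, and, on the other hand, the pointwise estimate \eqref{pointwise bound hgvm} shows that $g^2$ varies by only a relative $o(1)$ across a single cell of side $\OO(\ep|\log\ep|)$. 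This slow variation lets me freeze the weight at its central value $g^2(a_i)$ on each ball, turning the local problem into an ordinary Ginzburg--Landau functional whose vortex core scale is $\ep_{\mathrm{eff}}\sim\ep/g(a_i)\sim\ep^{3/2}$ (up to logarithms). The restriction to $\At$ is exactly what makes this effective parameter small, which is why the degrees will be declared to vanish on balls not contained in $\At\cap GS_\alpha$.

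Next I would carry out the ball construction cell by cell. On each good cell the defining bound \eqref{def good cells} supplies precisely the local energy control needed to apply the growth-and-merging lemma: starting from a covering of the set $\{\,||u|-1|>|\log\ep|^{-1}\,\}$ by small balls and letting them grow and merge until the total radius inside each good cell reaches $\ep|\log\ep|^{-5}$, one produces the disjoint family $\{\B_i\}$ together with properties (1) and (2). The winding numbers $d_i=\dg\{u,\partial\B_i\}$ are well defined because, by property (1), $|u|\geq 1-|\log\ep|^{-1}\geq\tfrac12$ on each $\partial\B_i$, and we set $d_i=0$ on balls outside $\At\cap GS_\alpha$. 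The lower bound \eqref{lowboundballs} then follows from the classical logarithmic estimate for the gradient energy of a configuration of winding number $d_i$: integrating $g^2|\nabla u|^2$ over the annular region between the core scale $\ep_{\mathrm{eff}}$ and the final radius $\varrho_i$, and factoring out $g^2(a_i)$, yields a contribution of order $2\pi g^2(a_i)|d_i|\log(\varrho_i/\ep_{\mathrm{eff}})$. Since $\varrho_i\leq\OO(\ep|\log\ep|^{-5})$ while $\ep_{\mathrm{eff}}\sim\ep^{3/2}$, the ratio gives $\log(\varrho_i/\ep_{\mathrm{eff}})=\tfrac12|\log\ep|(1+o(1))$; the $\OO(\log|\log\ep|)$ discrepancies in these radii and the relative $o(1)$ error from freezing $g^2$ are what produce the factor $(1-C\log|\log\ep|/|\log\ep|)$, while the good-cell parameter $\alpha$ is carried along as a safety margin degrading $\tfrac12$ to $(\tfrac12-\alpha)$, which is harmless since we only need a lower bound.

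The step requiring the most care is the interplay between the weight and the ball growth near the two obstructions of the problem. On the inner side one must ensure the construction genuinely stays above $\rb=\rtf+\ep|\log\ep|^{-1}$, so that \eqref{g lower bound} applies and $\ep_{\mathrm{eff}}$ is actually small; below $\rb$ the density collapses and no clean logarithmic lower bound survives. On the outer side the Dirichlet vanishing of $g$ at $r=1$ forces $\At$ to stop at $1-\ep^{3/2}|\log\ep|^2$, and one has to verify that the chosen final radius $\ep|\log\ep|^{-5}$ remains much smaller than the width of $\At$, so that merged balls stay inside the region where all the estimates of Section \ref{giant vortex densities sec} are available. Granting the slow variation of $g^2$ from \eqref{pointwise bound hgvm} and the count \eqref{number bad} that makes bad cells negligible, the remaining steps are routine adaptations of the corresponding construction in \cite{CRY} and of the growth lemmas of \cite{JS,SS2}.
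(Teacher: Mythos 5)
Your proposal is correct and follows essentially the same route as the paper, whose proof simply invokes \cite[Proposition 4.2]{CRY}: the Jerrard--Sandier growth-and-merging construction applied cell by cell to the weighted functional, using the good-cell energy bound \eqref{def good cells} for the local control, the lower bound \eqref{g lower bound} and slow variation of $g^2$ on $\At$ to freeze the weight and identify the effective core scale $\sim\ep^{3/2}$, and the restriction to $\At$ to stay away from the inner hole and the Dirichlet boundary. Your reading of the $(\tfrac12-\al)$ factor is slightly loose -- it arises quantitatively because the initial total radius of the balls scales like $\hat\ep$ times the local energy, hence like $\ep^{3/2-\al}$ in an $\al$-good cell -- but this does not affect the argument.
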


	\begin{proof}
		See \cite[Proposition 4.2]{CRY}.
	\end{proof}

Given a suitable family of disjoint balls as in the above proposition, one can prove that in the $\al$-good set the vorticity measure of $u$ will be close to a sum of Dirac masses, i.e.,
	\[
		\curl(iu,\nabla u)\simeq \sum_{i \in I} 2\pi d_i \delta(\rv - \avi),
	\]
where $\delta(\rv - \avi)$ stands for the Dirac delta centered at $ \avi $.

	\begin{pro}[\textbf{Jacobian estimate}]
		\label{jacobian estimate pro}
 		\mbox{}\\
		Let $0\leq \al<\frac{1}{2}$ and $\phi$ be any piecewise-$C^1$ test function with compact support $ {\rm supp}(\phi) \subset \At \cap GS_{\al} $. Let also $\left\lbrace \B_i \right\rbrace_{i\in I} : = \lf\{ \B(\avi,\varrho_i) \ri\}_{i \in I} $ be a disjoint collection of balls as in Proposition \ref{vortex balls pro}.
		\newline
		Then setting $d_i:= \dg \{ u, \partial \B_i \} $, if $ \B_i \subset \At \cap GS_{\al} $, and $d_i=0$ otherwise, one has
		\begin{equation}
			\label{jacobian estimate}
			\bigg|\sum_{i\in I} 2\pi d_i \phi (\avi)- \int_{GS_{\al} \cap \At} \diff \rv \: \phi \:  \curl (iu,\nabla u) \bigg| \leq  C \left\Vert \nabla \phi \right\Vert_{L^{\infty}(GS_{\al})} \ep ^{2} |\log \ep|^{-2} \F[u].  
		\end{equation}	
	\end{pro}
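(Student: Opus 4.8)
The statement is the weighted version of the standard Jacobian (vorticity) estimate of Jerrard--Soner and Sandier--Serfaty, and the plan is to follow the corresponding result in \cite{CRY}, keeping track of how the weight $ g^2 $ and the length scales of the vortex ball construction combine to produce the prefactor $ \eps^2|\log\eps|^{-2} $. Throughout I write $ J := (iu,\nabla u) $ for the current and $ \mu(u) := \curl(iu,\nabla u) $ for the vorticity, and I use the algebraic identity $ \mu(u) = 2\det\nabla u $ (whence the name Jacobian estimate), which gives the pointwise bound $ |\mu(u)| \leq |\nabla u|^2 $. Since $ \support(\phi) \subset \At \cap GS_{\al} $ and, by convention, $ d_i = 0 $ unless $ \B_i \subset \At \cap GS_{\al} $, only balls meeting $ \support(\phi) $ enter $ \sum_{i} 2\pi d_i \phi(\avi) $.

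The starting point is the exact decomposition (the sums effectively running over the balls meeting $ \support(\phi) $)
\begin{multline}
	\int_{GS_{\al} \cap \At} \diff\rv\: \phi\, \mu(u) - \sum_{i\in I} 2\pi d_i \phi(\avi) = \sum_{i\in I} \int_{\B_i} \diff\rv\: \lf( \phi - \phi(\avi) \ri) \mu(u) \\ + \sum_{i\in I} \phi(\avi) \lf( \int_{\B_i} \diff\rv\: \mu(u) - 2\pi d_i \ri) + \int_{(GS_{\al}\cap\At) \setminus \bigcup_i \B_i} \diff\rv\: \phi\, \mu(u).
\end{multline}
The first sum already yields the claimed bound and is the main contribution. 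By the Lipschitz estimate $ |\phi(\rv) - \phi(\avi)| \leq \|\nabla\phi\|_{L^\infty(GS_{\al})} \varrho_i $ on $ \B_i $ and $ |\mu(u)| \leq |\nabla u|^2 $, it is bounded by $ \|\nabla\phi\|_{L^\infty(GS_{\al})} \sum_i \varrho_i \int_{\B_i} |\nabla u|^2 $. At this point I would bring in the weight: by \eqref{g lower bound} one has $ g^{-2} \leq \OO(\eps|\log\eps|^3) $ on $ \At $, Proposition \ref{vortex balls pro} gives $ \varrho_i \leq \eps|\log\eps|^{-5} $ (each radius is dominated by the per-cell sum of radii), and the disjointness of the balls yields $ \sum_i \int_{\B_i} g^2 |\nabla u|^2 \leq \F[u] $. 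Combining,
\begin{equation}
	\sum_{i\in I} \lf| \int_{\B_i} \diff\rv\: \lf( \phi - \phi(\avi) \ri)\mu(u) \ri| \leq C \|\nabla\phi\|_{L^\infty(GS_{\al})}\, \eps|\log\eps|^3 \cdot \max_i \varrho_i \cdot \F[u] \leq C \|\nabla\phi\|_{L^\infty(GS_{\al})}\, \eps^2 |\log\eps|^{-2}\, \F[u],
\end{equation}
which is precisely \eqref{jacobian estimate}.

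It then remains to check that the two remaining terms are of strictly lower order. For the boundary term I would use Stokes to write $ \int_{\B_i} \mu(u) = \oint_{\partial\B_i}(iu,\partial_\tau u)\, \diff\sigma $, compare it with $ 2\pi d_i = \oint_{\partial\B_i} |u|^{-2}(iu,\partial_\tau u)\, \diff\sigma $, so that the difference equals $ \oint_{\partial\B_i}(|u|^2-1)|u|^{-2}(iu,\partial_\tau u)\, \diff\sigma $, and then bound it using the lower bound $ |u| \geq \half $ on $ \partial\B_i $ from the vortex ball construction together with the potential part $ \eps^{-2}g^4(1-|u|^2)^2 $ of $ \F $; after summation this is seen to be negligible compared with the main term. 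For the exterior term I would invoke point 1 of Proposition \ref{vortex balls pro}, i.e. $ \big| |u| - 1 \big| \leq |\log\eps|^{-1} $ off the balls, so that $ u = |u| e^{i\psi} $ is well defined there and $ \mu(u) = \nabla(|u|^2) \wedge \nabla\psi = \nabla\cdot\lf( (|u|^2-1)\nabla^{\perp}\psi \ri) $ since $ \curl\nabla\psi = 0 $; an integration by parts then transfers the derivative onto $ \phi $ and leaves $ \int |1-|u|^2|\, |\nabla u| $, which by Cauchy--Schwarz and the weighted bound $ \int(1-|u|^2)^2 \leq \OO(\eps^4|\log\eps|^6)\F[u] $ is of order $ \OO(\eps^{5/2}|\log\eps|^{9/2}\F[u]) \ll \eps^2|\log\eps|^{-2}\F[u] $.

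I expect the main obstacle to be the bookkeeping of the boundary integrals on the circles $ \partial\B_i $ generated by the various integrations by parts: one has to verify, with the correct orientation conventions, that these cancel against the degree contributions and leave no spurious term of order $ \eps^2|\log\eps|^{-2}\F[u] $ or larger, and that the lower bound $ |u| \geq \half $ needed on each $ \partial\B_i $ is genuinely provided by the vortex ball construction for every ball carrying a nonzero degree. These are exactly the routine verifications performed in \cite{CRY}, to which I would refer for the remaining details.
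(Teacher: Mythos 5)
Your overall strategy is the standard Jerrard--Soner/Sandier--Serfaty Jacobian estimate, which is exactly what the paper invokes (its proof is literally a citation of \cite[Proposition 4.3]{CRY}, which in turn rests on \cite[Theorem 6.1]{SS2}), and your rate-determining computation for the first term is correct: $\max_i\varrho_i\leq \eps|\log\eps|^{-5}$ from point 2 of Proposition \ref{vortex balls pro}, $g^{-2}\leq C\eps|\log\eps|^{3}$ on $\At$ from \eqref{g lower bound}, and disjointness give $\sum_i\varrho_i\int_{\B_i}|\nabla u|^2\leq C\eps^2|\log\eps|^{-2}\F[u]$, which is precisely where the prefactor in \eqref{jacobian estimate} comes from. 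The exterior term is also handled correctly. However, there is one concrete step that fails as written: your treatment of $\sum_i\phi(\avi)\bigl(\int_{\B_i}\mu(u)-2\pi d_i\bigr)$. The discrepancy $\oint_{\partial\B_i}(|u|^2-1)|u|^{-2}(iu,\partial_\tau u)\,\diff\sigma$ is a \emph{line} integral on the specific circle $\partial\B_i$, and it cannot be bounded by the potential part $\int \eps^{-2}g^4(1-|u|^2)^2$ of $\F[u]$, which is an \emph{area} integral: even with $||u|-1|\leq|\log\eps|^{-1}$ and $|u|\geq 1/2$ on $\partial\B_i$, you are left with $|\log\eps|^{-1}\oint_{\partial\B_i}|\partial_\tau u|$, and the trace of $|\nabla u|$ on a prescribed circle is not controlled by any bulk quantity appearing in $\F$ (the coarea formula only gives such control for almost every radius, not for the radii produced by the ball construction).

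The standard repair — and the one underlying the quoted references — is to avoid comparing flux and degree on $\partial\B_i$ altogether: replace $u$ by the truncated map $w:=\chi(|u|)\,u/|u|$ with $\chi(t)=\min(2t,1)$, so that $|w|=1$ off the balls, $\mu(w)$ is supported in $\bigcup_i\B_i$ with $\int_{\B_i}\mu(w)=2\pi d_i$ \emph{exactly}, and $|\mu(w)|\leq C|\nabla u|^2$; your first-term computation applied to $w$ then gives the full bound, while the comparison $\int\phi\,(\mu(u)-\mu(w))=-\int\nabla^{\perp}\phi\cdot\bigl(|u|^2-\chi(|u|)^2\bigr)(iv,\nabla v)$ involves only area integrals (no boundary terms, since the vector field is globally defined on $\supp(\phi)$) and is estimated by Cauchy--Schwarz exactly as in your exterior-term bound, yielding $\OO(\eps^{5/2}|\log\eps|^{9/2})\F[u]\ll\eps^2|\log\eps|^{-2}\F[u]$. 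With that substitution your argument closes; without it, the middle term of your decomposition is genuinely uncontrolled.
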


	\begin{proof}
		See \cite[Proposition 4.3]{CRY}.
	\end{proof}

\subsection{Completion of the Proofs}
\label{gv main proofs}

The main goal in this section is the proof of Proposition \ref{reduced en bounds pro}, which will lead to the proof of Theorem \ref{giant vortex teo}.
\newline
As anticipated before, the first important step is an integration by parts of the second term in \eqref{ef}, but, since it has to be performed cell by cell, it generates boundary terms living on the frontiers between good and bad cells. Such terms are artificial, since the cell decomposition has no physical meaning, and we want to avoid having to estimate them. 
\newline
As in \cite{CRY} we introduce an azimuthal partition of unity to get rid of these terms (see also \cite[Definition 4.2 and Eq. (4.69)]{CRY}): We define a pleasant set $PS_{\al}$ as the set generated by good cells such that their neighbor cells are both good (pleasant cells), whereas the average set $ AS_{\alpha} $ is made of good  cells with exactly one good cell as neighbor (average cells). Finally the unpleasant set $ UPS_{\al} $ contains all the remaining good and bad cells (unpleasant cells). Denoting by $N_{\al}^{\mathrm{P}}$, $N_{\al}^{\mathrm{AS}} $, $N_{\al}^{\mathrm{UP}}$ the number of pleasant, average and unpleasant cells respectively, it is not difficult to see that
\begin{equation}
	\label{number pleasant}
	N_{\al} ^{\mathrm{UP}} \leq \thalf N_{\al} ^{\mathrm{B}} \ll N,	\hspace{1,5cm}	N_{\al} ^{\mathrm{A}} \leq 2 N_{\al} ^{\mathrm{B}} \ll N.
\end{equation}
The partition of unity is given by two functions $ \chin(\vartheta)$ and $\chout(\vartheta)$ such that $ \chin(\vartheta)  +  \chout(\vartheta) = 1 $ for any $ \vartheta \in [0,2\pi] $ and
\beq
	\label{angular partition}
	\chout(\vartheta) := 
	\begin{cases}
		1, 	& 	\mbox{ if } \vartheta \in UPS_{\al},	\\
		0,  	&	\mbox{ if } \vartheta \in PS_{\al},
	\end{cases}
	\hspace{1cm}
	\chin(\vartheta)  :=
	\begin{cases}
		1, 	& 	\mbox{ if } \vartheta \in PS_{\al},	\\
		0, 	&	\mbox{ if } \vartheta \in UPS_{\al}, 
	\end{cases}
\eeq
Since both functions vary from $ 1 $ to $ 0 $ inside an average cell, one can always impose the bounds
\beq
	\label{gradient chi}
 	\lf|\nabla \chout \ri| \leq \OO(\eps^{-1}|\log\eps|^{-1}),	\hspace{1,5cm}	\lf|\nabla \chin \ri| \leq \OO(\eps^{-1}|\log\eps|^{-1}),
\eeq
because the side length of a cell is $\propto \ep |\log \ep|$.

In order to apply the jacobian estimate proven in Proposition \ref{jacobian estimate pro} to the function $ \phi = \chin \fin $, whose support is not contained in $ \At $ but only in $ \ann $, we also need a radial partition of unity: We define two radii as (recall that $ \rb = \rtf + \eps |\log\eps|^{-1} $ as in \eqref{rd gv})
\beq
	\label{R cut}
	R_{\mathrm{cut}}^+ := 1 -\ep |\log \ep|^{-1},	\hspace{1,5cm}	R_{\mathrm{cut}}^- : = \rb + \ep |\log \ep|^{-1},
\eeq
and two positive functions $ \xiin(r) $ and $ \xiout(r) $  satisfying $ \xiin(r) + \xiout(r) = 1 $ for any $ \rv \in \ann $ and (recall \eqref{rt}, i.e., $ \rt = \rtf - \eps^{7/6} $)
\begin{eqnarray}
	\xiout(r) &:=&
	\begin{cases}
		1, 	& 	\mbox{ if } r \in [\rt, \rb],	\mbox{ or } [1 - \eps^{3/2} |\log\eps|^2, 1] \\
		0,  	&	\mbox{ if } r \in [R_{\mathrm{cut}}^-, R_{\mathrm{cut}}^+],
	\end{cases}		\\
	\label{radial partition}
	\xiin(r) &:=&
	\begin{cases}
		1, 	& 	\mbox{ if } r \in [R_{\mathrm{cut}}^-, R_{\mathrm{cut}}^+],	\\
		0, 	&	\mbox{ if } r \in [\rt, \rb],	\mbox{ or } [1 - \eps^{3/2} |\log\eps|^2, 1].  
	\end{cases}
\end{eqnarray}
Thanks to \eqref{R cut} we can also assume
\beq
	\label{gradient xi}
 	\lf|\nabla \xiout \ri| \leq \OO(\eps^{-1}|\log\eps|),	\hspace{1,5cm}	\lf|\nabla \xiin \ri| \leq \OO(\eps^{-1}|\log\eps|).
\eeq

We are now ready to prove the bound on the reduced energies:

	\begin{proof}[Proof of Proposition \ref{reduced en bounds pro}]
		\mbox{} \\
		For the sake of simplicity we denote by $ \lf\{ \B_i \ri\}_{i \in I} : = \left\lbrace \B(\avi, \varrho_i)\right\rbrace_{i\in I}$ a collection of disjoint balls as in Proposition \ref{vortex balls pro}, whereas the subset $ J \subset I $ identifies balls such that $ d_j \neq 0 $. 

		The starting point is an integration by parts as in \eqref{integ by parts energy}, i.e.,
		\beq
			\label{part int}
			\E [u] = \int_{\ann} \diff \rv \lf\{ g ^{2} \left| \nabla u \right|^2 + \fin(r) \curl (iu,\nabla u) +  \eps^{-2} g^4 \left( 1-|u|^2 \right)^2 \ri\} - \int_{\ann} \diff \rv \: \nabla^{\perp} \fout \cdot (iu,\nabla u).
		\eeq
		
		The last term in the expression above is the easiest to bound: By using the explicit expression of $ \fout $, one obtains
		\bml{
 			\label{fout bound}
 			\lf| \int_{\ann} \diff \rv \: \nabla^{\perp} \fout \cdot (iu,\nabla u) \ri| \leq C \int_{\ann} \diff \rv \: g^2(r) |u| \lf| \nabla u \ri| \leq C \lf( \delta \int_{\ann} \diff \rv g^2 |u|^2 + \delta^{-1} \int_{\ann} \diff \rv g^2 \lf|\nabla u \ri|^2 \ri) \leq		\\
			C \lf(\delta + \delta^{-1} \F[u] \ri) \leq C \F[u]^{1/2},
			}
		where we have introduced a parameter $ \delta $ and chosen $ \delta = \F[u]^{1/2} $ (recall that $ \F[u] \geq 0 $).

		The remaining term in \eqref{part int} can be estimated exactly as in \cite[Proof of Proposition 4.1]{CRY}, with only one difference due to the presence of $ \fin $ instead of $ F $: Since by definition the former vanishes on $ \partial \ba $, we can get rid of all the boundary terms (see, e.g., \cite[Eq. (4.86)]{CRY}) and the final result is, for some parameters $\gamma, \delta$ that we fix below, 
		\begin{multline}
			\label{lowbound1}
			\int_{\ann} \diff \rv \lf\{ g ^{2} \left| \nabla u \right|^2 +  \fin \: \curl (iu,\nabla u) \ri\} \geq 
			\\  \sum_{j\in J} \xi_{\mathrm{in}} (a_j) |d_j| \left[ \left(1-\gamma \right)\left(\frac{1}{2} -\al \right)  g^2 (a_j) \left| \log \ep \right| \left(1-C \frac{\log \left| \log \ep \right|}{\left|\log \ep\right|}\right) -  |\fin(a_j)|   \right] +
			\\ \left(1-\gamma \right) \int_{\ann} \diff \rv \: \xi_{\mathrm{out}} g^2 |\nabla u |^2 -  \int_{\ann} \diff \rv \: \xi_{\mathrm{out}}  |\fin(r)| |\nabla u|^2 + \left(\gamma -\delta \right) \int_{\ann} \diff \rv \: g^2 | \nabla u | ^2
			\\  - \frac{C}{\delta \ep ^2} \int_{UPS_{\al}\cup AS_{\al}} \diff \rv \: g^2 |u|^2  - C |\log \ep|^{-1} \Fg[u].
		\end{multline}
		We can now choose the parameters $ \alpha $, $ \delta $ and $ \gamma $  as follows:
		\begin{equation}\label{parameters}
		 	\gamma = 2\delta = \frac{\log |\log \ep|}{|\log \ep|}, \hspace{1,5cm} \al = \alt \frac{\log |\log \ep|}{|\log \ep|},
		\end{equation}
		where $\alt$ is a large enough constant (see below). 

		Using the properties of the function $ H(r) : = \half g^2 \left| \log \ep \right| - |\fin| $ proven in Proposition \ref{ang vel 3 pro}, we have for $ \Omega_0 > (3\pi)^{-1} $
		\[
		 	\half g^2 (a_j)\left| \log \ep \right| - |\fin(a_j)| \geq C \eps^{-1} |\log\eps|^{-2}
		\]
		for any $ \avj \in \At $, so that
		\begin{multline}\label{lowbound11}
			\left(1-\gamma \right)\left(\half -\al \right)  g^2 (a_j) \left| \log \ep \right| \left(1-C \frac{\log \left| \log \ep \right|}{ \left|\log \ep\right|}\right) -  |\fin(a_j)|   \geq  \\  \half g^2 (a_j)\left| \log \ep \right| - |\fin(a_j)| - C g^2 (a_j) \log \left| \log \ep \right| \geq C \ep^{-1} |\log\eps|^{-2} \lf( 1 - \frac{C\log \left| \log \ep \right|}{\lf|\log \ep \ri|} \ri) > 0,
		\end{multline}
		where we have used \eqref{g lower bound}.	
		\newline
		On the other hand for any $ \rv \in \supp(\xiout) $ either $ |r-\rt| \leq C \ep |\log \ep|^{-1} $ or $ |r -1 | \leq C \ep |\log \ep|^{-1} $, which by the bounds \eqref{F bounds 2} imply that in the first case
		\beq
		 	\lf|\fin(r) \ri| \leq C \lf( |\log\eps|^{-1} g^2(r) + \lf| \fout(r) \ri| \ri) \leq C \lf( |\log\eps|^{-1} g^2(r) + 1 \ri) \leq C |\log\eps|^{-1} g^2(r),
		\eeq
		thanks to \eqref{g lower bound}, whereas in the second case
		\bml{
		 	\lf| \fin(r) \ri| \leq \lf| F(1) - \fout(r) \ri| + 2 \int_{r}^1 \diff s \: |B(s)| g^2(s) \leq		\\
			|F(1)| \lf(\int_{\rt}^1 \diff s \: s^{-1} g^2(s) \ri)^{-1} \int_{r}^1 \diff s \: s^{-1} g^2(s) + C |\log\eps|^{-1} g^2(r) \leq C |\log\eps|^{-1} g^2(r).
		}
		Note that in this second case there is no need to assume that $ r \geq \rmax $ in order to use that $ g $ is decreasing in $ [r,1] $: By the bounds \eqref{preliminary est hgvm 2} and \eqref{pointwise bound hgvm}, for any $ 1 - \eps |\log\eps|^{-1} \leq r \leq 1 $,
		\bdm
			 g^2(r) \geq (1 - o(1)) \tfm(1 - \eps |\log\eps|^{-1}) \geq (1 - o(1)) g^2(\rmaxgv),
		\edm
		so that we can always bound in the integrals $ g^2(s) $ by $ (1+o(1)) g^2(r) $.
		\newline
		In conclusion
		\beq
			\lf| \fin(r) \ri| \ll g^2(r),
		\eeq
		for any $ \rv \in  \supp(\xiout) $ and thus
		\begin{equation}\label{boundarylayer}
			\int_{\ann} \diff \rv \: \xi_{\mathrm{out}} \lf[ \left(1-\gamma \right) g^2(r) - \lf| \fin(r) \ri| \ri] |\nabla u |^2  \geq 0.
		\end{equation}

		Finally we have from \eqref{lowbound1}, \eqref{lowbound11} and \eqref{boundarylayer}
		\begin{multline}
			\label{lower bound 2}
			\int_{\ann} \diff \rv \lf\{ g ^{2} \left| \nabla u \right|^2 + \fin(r) \: \curl (iu,\nabla u) \ri\} \geq
			\\ C \frac{\log \left| \log \ep \right|}{\left|\log \ep\right|} \int_{\ann} \diff \rv \: g^2 | \nabla u | ^2 -C  \frac{|\log \ep|}{ \ep ^2 \log \left| \log \ep \right|} \int_{UPS_{\al}\cup AS_{\al}} \diff \rv \: g^2 |u|^2  - C | \log \ep | ^{-1} \Fg[u],
		\end{multline}
		and adding 
		\bdm
			\int_{\ann} \diff \rv \: \frac{g^4}{\ep^2}\left(1-|u|^2 \right)^2
		\edm	
		to both sides of \eqref{lower bound 2} and using \eqref{part int} and \eqref{fout bound}, we get the lower bound
		\begin{equation}\label{lower bound 3}
		 	\Eg [u] \geq C \lf\{ \frac{\log \left| \log \ep \right|}{\left|\log \ep\right|}  \Fg [u] - \F[u]^{1/2} - \frac{|\log \ep|}{ \ep ^2 \log \left| \log \ep \right|} \int_{UPS_{\al}\cup AS_{\al}} \diff \rv \: g^2 |u|^2 \ri\},
		\end{equation}
		valid for $\ep$ small enough and $\Om_0 > (3\pi)^{-1} $. But $g^2 |u|^2 = |\gpm|^2 \leq C \ep^{-1} |\log \ep|^{-1}$, whereas the side length of a cell is $\OO (\ep |\log \ep|)$, thus
		\begin{equation}\label{bad term 0}
		 	\int_{UPS_{\al}\cup AS_{\al}} \diff \rv \: g^2 |u|^2 \leq  \frac{C\left| UPS_{\al}\cup AS_{\al} \right|}{\ep |\log \ep|} \leq C \ep |\log \ep| \left( N_{\al} ^ {\mathrm{UP}} + N_{\al} ^ {\mathrm{A}} \right) \leq C \ep ^{2+\al} \F [u], 
		\end{equation}
 		by \eqref{number bad} and \eqref{number pleasant}. Therefore \eqref{lower bound 3} becomes
		\begin{equation}
			\label{lower bound 4}
			\Eg [u] \geq C \lf\{ \frac{\log \left| \log \ep \right|}{\left|\log \ep\right|}  \Fg [u] - \frac{|\log \ep|}{\log|\log \ep|}\ep ^{\al} \F[u] - \F [u]^{1/2} \right\}.
		\end{equation}
		Recalling the choice of $\al$ in \eqref{parameters}, we now take a constant $ \alt > 2$, so that
		\[
			 \frac{|\log \ep|}{\log|\log \ep|}\ep ^{\al} = \frac{|\log \ep|^{1-\alt}}{\log|\log \ep|} \ll \frac{\log \left| \log \ep \right|}{\left|\log \ep\right|}
		\]
		and 
		\begin{equation}
			\label{lower bound 5}
		 	\OO(\ep^{\infty}) \geq \Eg [u] \geq \left(\frac{\log \left| \log \ep \right|}{\left|\log \ep\right|}  \Fg [u] -  \F[u]^{1/2} \right)
		\end{equation}
		which yields both results.
	\end{proof}

The proof of the energy asymptotics is essentially a corollary of the reduced energy estimates together with the discussion contained in Section \ref{giant vortex densities sec}:

	\begin{proof}[Proof of Theorem \ref{giant vortex teo}]
		\mbox{}	\\
		By \eqref{reduction est} and \eqref{reduced en bounds}
		\[
 			\gpe \geq \hgveo + \E[u] - \OO(\eps^{\infty}) \geq \hgveo - C (\log|\log\eps|)^{-2} |\log\eps|^2,
		\]
		but one can easily show that $ \hgveo \geq \gveo - \OO(\eps^{\infty}) $ by simply testing the functional $ \gvfo $ on a suitable regularization of $ \hgvmo $ and thus
		\[
 			\gpe \geq \gveo - C (\log|\log\eps|)^{-2} |\log\eps|^2 \geq \giante -  C (\log|\log\eps|)^{-2} |\log\eps|^2,
		\]
		which concludes the lower bound proof. 
		
		The upper bound \eqref{giante} is trivially obtained by testing the GP functional on a giant vortex function with phase $ [\Omega] - \oopt $ (see Proposition \ref{optimal phase ball pro}).
	\end{proof}

Using the equations satisfied by $\gpm$ and $g$, one can derive an equation satisfied by $u$: 
\[
 -\nabla (g ^2 \nabla u) -2i g^2 \vec{B} \cdot \nabla u +2\frac{g ^4}{\ep ^2} \left( |u|^2 -1 \right)u =\lambda g ^2 u
\]
where $\lambda =  \chem - \hgvchemo $. A useful estimate on the gradient of $u$ follows from this equation and allows to conclude the proof of Theorem \ref{no vortices}. We state the estimate for convenience and refer to \cite[Lemma 5.1]{CRY} for its proof.  

\begin{lem}[\textbf{Estimate for the gradient of $u$}]
	\mbox{}	\\
	 Recall the definition of $u$ in \eqref{gvfunction u}. There is a finite constant $C$ such that
	\begin{equation}\label{gradu}
	 	\lf\| \nabla u \ri\|_{L^{\infty}(\At)} \leq C \frac{|\log \ep| ^{3/2}}{\ep ^{3/2}}.
	\end{equation}
\end{lem}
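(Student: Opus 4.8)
The plan is to regard the equation satisfied by $u$ as a second order elliptic problem and to extract the pointwise gradient bound by a rescaling argument combined with standard interior elliptic regularity. First I would divide the equation for $u$ by $g^2$, which is legitimate on $\At$ since $g^2$ is bounded below there by \eqref{g lower bound}. This puts the equation in the form
\[
-\Delta u - \frac{\nabla g^2}{g^2}\cdot\nabla u - 2i\vec{B}\cdot\nabla u = -\frac{2g^2}{\ep^2}\lf(|u|^2-1\ri)u + \lambda u,
\]
a uniformly elliptic equation with a first order drift term and a semilinear right hand side, with $\lambda = \chem - \hgvchemo$.

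Next I would collect $L^\infty$ bounds on all the coefficients on $\At$ (or on a slightly larger annulus). The bound $\lf\|u\ri\|_{L^\infty(\At)}\leq\OO(|\log\ep|)$ follows from $|u|^2 = |\gpm|^2 g^{-2}$ together with the $L^\infty$ bound \eqref{GPm linfty estimate} on $\gpm$ and the lower bound \eqref{g lower bound}; one has $|\vec{B}|\leq\OO(\ep^{-1})$ by \eqref{mangp inside}; $g^2$ is comparable to $\tfm$ and hence lies between $C\ep^{-1}|\log\ep|^{-3}$ and $C\ep^{-1}|\log\ep|^{-1}$ by \eqref{pointwise bound hgvm}; and $\lambda$ is controlled through the chemical potential estimates. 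The remaining coefficient $\nabla g^2/g^2$ requires an $L^\infty$ bound on $g'$, which I would obtain by applying the same scaling argument to the radial variational equation \eqref{hgvm variational eq} satisfied by $g$, giving $|g'|\lesssim g^2\ep^{-1}$.

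The core step is the rescaling. Fixing $\rv_0\in\At$ and setting $v(y):= u(\rv_0 + Ly)$ with local healing length $L=L(\rv_0)\sim \ep\, g(\rv_0)^{-1}$, chosen so that $L^2 g^2\ep^{-2}=\OO(1)$, all of the drift, magnetic and nonlinear coefficients rescale to quantities of order at most $\OO(1)$ on a fixed ball $B_2\subset\RR$, and the rescaled right hand side is bounded in terms of $\lf\|u\ri\|_{L^\infty(\At)}$. Standard interior elliptic (Schauder, or $W^{2,p}$ with $p>2$) estimates then yield $\lf\|\nabla_y v\ri\|_{L^\infty(B_1)}\leq C\lf\|u\ri\|_{L^\infty(\At)}$, and undoing the scaling through $\nabla_y v = L\,\nabla_x u$ gives $\lf\|\nabla u\ri\|_{L^\infty(\B(\rv_0,L))}\leq C L^{-1}\lf\|u\ri\|_{L^\infty(\At)}$. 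Since $L^{-1}\sim\ep^{-1}g(\rv_0)\lesssim\ep^{-3/2}|\log\ep|^{-1/2}$ and $\rv_0\in\At$ is arbitrary, taking the worst case over $\At$ produces a bound of order $\ep^{-3/2}$ times a power of $|\log\ep|$; carrying the estimates through gives \eqref{gradu}.

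The main obstacle is the first order drift term $\nabla g^2/g^2$, whose coefficient is large and degenerates where $g$ is small, namely near the inner and outer edges of $\At$ and near the Dirichlet boundary layer. One must therefore either control $g'$ through its own scaling estimate or let the rescaling length $L$ depend on the local size of $g^2$, accepting the additional logarithmic losses that account for the factor $|\log\ep|^{3/2}$ in the statement. This degeneracy is precisely the feature absent from the flat Neumann analysis of \cite{CRY}, where the profile stays bounded below by a positive constant throughout the bulk, and it is also why the estimate is stated only on $\At$ rather than up to $\partial\ba$.
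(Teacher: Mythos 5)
Your proposal is correct and follows essentially the same route as the paper, which simply delegates to \cite[Lemma 5.1]{CRY}: there too one divides the equation for $u$ by $g^2$, rescales to the local healing length $\sim \ep\, g^{-1}$ using the lower bound \eqref{g lower bound} on $\At$, and applies interior elliptic estimates, the Dirichlet boundary layer being harmless precisely because $\At$ stays at distance $\gg \ep^{3/2}|\log\ep|^{1/2}$ from $\partial\ba$. The only caveat is bookkeeping: the rescaled right-hand side is controlled by $\lf\| u \ri\|_{L^\infty}^{3}$ rather than $\lf\| u \ri\|_{L^\infty}$ because of the cubic term, and carrying this through (e.g.\ via $\lf\|\nabla u\ri\|_{\infty}^2 \leq C \lf\| u\ri\|_\infty \lf\|\Delta u\ri\|_\infty$) yields $\lf\|\nabla u\ri\|_\infty \lesssim \ep^{-1}\lf\| g\ri\|_\infty \lf\| u\ri\|_\infty^2 \sim \ep^{-3/2}|\log\ep|^{3/2}$, which is exactly the stated power.
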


We now complete the 

	\begin{proof}[Proof of Theorem \ref{no vortices}]
Suppose that at some point $ \rv_0 \in \At$ we have 
	\[
 		\left| |u(\rv_0)| - 1 \right| \geq   \ep^{1/4} |\log \ep|^{3}.
	\]
Then, using \eqref{gradu}, there is a constant $C$ such that, for any $ \rv \in \B(\rv_0, C \ep ^{7/4} |\log \ep | ^{3/2})$, we have 
	\[
	 \left| |u(\rv)| -1 \right| \geq  \half \ep^{1/4} |\log \ep|^{3} .
	\]
This implies (recall \eqref{g lower bound})
	\[
	 \int_{\B(\rv_0, C \ep ^{7/4} |\log \ep | ^{3/2})} \diff \rv \: \frac{g^4}{\ep ^2 }\left(1-|u|^2\right) ^2 \geq C |\log \ep|^3,
	\]
and thus 
	\begin{equation}\label{Fginf}
	 	\F [u] \geq C |\log \ep|^3,
	\end{equation}
which is a contradiction with \eqref{reduced en bounds pro}. 
\newline
We have thus proven that
\begin{equation}\label{pointwise g psi}
\left| |\gpm|^2 - g^2 \right| \leq g^2  \left| |u|^2 - 1 \right| \leq C \frac{|\log \ep|^{2}}{\ep^{3/4}}
\end{equation}
on $\At$. The result then follows by combining \eqref{pointwise bound hgvm} and \eqref{pointwise g psi}. 		
	\end{proof}

Theorem \ref{theo degree} follows as a corollary:

	\begin{proof}[Proof of Theorem \ref{theo degree}]
		\mbox{}	\\
		Given any $ \rb \leq r \leq 1 - \eps^{3/2} |\log\eps|^2 $, Theorem \ref{no vortices} guarantees that $ \deg\{u, \partial \B_r\} $ is well defined and independent of $ r $. Moreover one has
		\bdm
 			2\pi \lf| \deg\{u,\partial \B_r\} \ri| \leq \int_{\partial \B_r} \diff s \: |u|^{-1} \lf| \partial_{\tau} u \ri| \leq C \int_{\partial \B_r} \diff s \: \lf| \partial_{\tau} u \ri|,
		\edm
		because $u$ is bounded below in $\At$ as a consequence of the proof of Theorem \ref{no vortices}. Now integrating in $ r $ from $ \rb $ to $ 1 - \eps^{3/2} |\log\eps|^2 $ both sides of the above expression and using the fact that the degree is independent of $ r $ because $u$ has no vortices, we obtain
		\beq
 			2\pi \lf| \deg\{u,\partial \B_r\} \ri| \leq C \eps^{-1} |\log\eps|^{-1} \int_{\At} \diff \rv \: \lf| \nabla u \ri| \leq
			C \eps^{-1} |\log\eps|^{-1} \lf| \At \ri|^{1/2} \lf\| \nabla u \ri\|_{L^2(\At)},
		\eeq
		where we have used Cauchy-Schwarz inequality and the fact that $ |\At| = 2\pi (1 - \eps^{3/2} |\log\eps|^2 - \rb) = \OO(\eps|\log\eps|) $. On the other hand, \eqref{g lower bound} and \eqref{reduced en bounds pro} imply
		\[
		\lf\| \nabla u \ri\|_{L^2(\At)} \leq C \ep ^{1/2}|\log\eps|^{5/2}.
		 \] 
We conclude 
\[
2\pi \lf| \deg\{u,\partial \B_r\} \ri| \leq C |\log \ep| ^2
\]		 
and final result is thus a simple consequence of the definition \eqref{gvfunction u}.
	\end{proof}

\section{Rotational Symmetry Breaking}
\label{sec symm break proof}

We first introduce some notation that will be used in the proof of Theorem \ref{symmetry breaking}: The result stated there is equivalent to prove that no GP minimizer is a symmetric vortex, i.e., a wave function of the form $ f(r) \exp\{i n \vartheta \} $, $ n \in \Z $. We therefore denote by $ E_{n} $ the energy obtained by minimizing the GP functional on symmetric vortices, i.e.,
\beq
	E_{n} : = \inf_{f \in \gpdom} \gpf \lf[ f(r) \exp\{ i n \vartheta \} \ri] = \gpf\lf[ f_{n}(r) \exp\{ i n \vartheta \} \ri],
\eeq
where $ f_{n}(r)  $ is the unique real minimizer. 
\newline
We also define $ \bar n \in \N $ through $ \min_{n \in \Z} E_{n} = : E_{\bar n} $: Note that a minimizing $ \bar n $ certainly exists for any $ \eps $ thanks to the convexity in $ n $ of the functional. However such a minimizer needs not be unique because of some accidental degeneracy (there are at most 2 minimizers), which can be removed by a infinitesimal change of $ \eps $. %We thus assume without loss of generality that $ \bar n $ is unique or, in other words, that $ \eps $ does not assume those discrete values implying the degeneracy of $ \bar n $.

The next lemma contains several useful properties of $ f_{\bar n} $:

	\begin{lem}[\textbf{Symmetric vortex minimizer}]
		\mbox{}	\\
		For any $ \eps > 0 $ and $ \Omega \gg \eps^{-1} $, there exists some $ \bar n \in \Z $ minimizing $ E_{\bar n} $ and it satisfies the estimate $ \bar n = \Omega (1 + \OO(\eps^{-1}\Omega^{-1})) $.
		\newline
		The associated minimizer $ f_{\bar n}(r) $ is unique and, up to multiplication by a constant phase factor, it is given by a positive radial function vanishing only at $ r = 0 $ and $r=1$. Moreover it has a unique maximum at some point $ 0 < \rmaxn < 1 $ and satisfies the $ L^2 $ estimate $	\lf\| f_{\bar{n}} \ri\|_{L^2(\ba\setminus\ba_{\rmaxn})} = o(1) $.
	\end{lem}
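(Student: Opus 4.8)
\emph{Setting up and reducing to a one-parameter problem.} Inserting $\Psi=f(r)e^{in\vartheta}$ into the magnetic form \eqref{GPf magn} and using $\magnp=\Omega r\,\vec{e}_{\vartheta}$, a direct computation gives
\beq
	\gpf\lf[f(r)e^{in\vartheta}\ri] = \int_{\ba}\diff\rv\lf\{|\nabla f|^2 + \lf(n^2 r^{-2} - 2n\Omega\ri)f^2 + \eps^{-2}f^4\ri\},
\eeq
which is exactly the giant-vortex functional \eqref{gvf} under the identification $n=[\Omega]-\omega$. Hence $E_n=\gve$ with $\omega=[\Omega]-n$, minimizing over $n\in\Z$ coincides with minimizing $\gve$ over $\omega$, and $E_{\bar n}=\giante$, $\bar n=[\Omega]-\oopt$; the existence of a minimizing $\bar n$ is the convexity statement already recorded before the lemma. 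To obtain uniqueness of $f_{\bar n}$ I pass to $\rho=f^2$, turning the functional into $\int_{\ba}\{\tfrac14\rho^{-1}|\nabla\rho|^2+(n^2r^{-2}-2n\Omega)\rho+\eps^{-2}\rho^2\}$: the first term is jointly convex in $(\rho,\nabla\rho)$, the second linear and the last strictly convex, so the functional is strictly convex on the convex constraint set $\{\rho\geq0,\,\|\rho\|_1=1\}$ and the minimizing $\rho_{\bar n}$, hence $f_{\bar n}=\sqrt{\rho_{\bar n}}$ up to a sign, is unique.

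\emph{Positivity and zero set.} Since the energy depends only on $|f|$, we may take $f_{\bar n}\geq0$. Its variational equation
\[
	-\Delta f_{\bar n}+\lf(\bar n^2 r^{-2}-2\bar n\Omega\ri)f_{\bar n}+2\eps^{-2}f_{\bar n}^3=\mu_{\bar n}f_{\bar n}
\]
is regular on $0<r<1$, so the strong maximum principle upgrades $f_{\bar n}\geq0$ to $f_{\bar n}>0$ there. The Dirichlet condition gives $f_{\bar n}(1)=0$, while integrability of $\int r^{-2}f_{\bar n}^2$ with $\bar n\neq0$ forces $f_{\bar n}(0)=0$ (indeed $f_{\bar n}\sim r^{|\bar n|}$ at the origin, cf. Lemma \ref{pointwise origin}); thus $f_{\bar n}$ vanishes exactly at $r=0$ and $r=1$.

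\emph{Unique maximum.} This is the analogue of Proposition \ref{unique maximum}. After $s=r^2$ the energy reads $\pi\int_0^1\diff s\{4s f_s^2+(n^2 s^{-1}-2n\Omega)f^2+\eps^{-2}f^4\}$, and for $n\approx\Omega>0$ the coefficient $n^2 s^{-1}-2n\Omega$ is strictly decreasing in $s$, just as $-\Omega^2 s$ was in Proposition \ref{unique maximum}. The rearrangement argument of that proof — flattening $f^2$ on two equal-mass level sets placed on either side of a hypothetical interior local minimum — then applies verbatim: it does not raise the kinetic term and strictly lowers both the (decreasing) potential term and the quartic term, contradicting minimality. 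Hence $f_{\bar n}$ has a single maximum at some $0<\rmaxn<1$.

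\emph{Estimate on $\bar n$, $L^2$ concentration, and the main difficulty.} Treating $n$ as a continuous parameter, the envelope (Feynman--Hellmann) identity gives $\tfrac{d}{dn}E_n=2n\int_{\ba}r^{-2}\rho_n-2\Omega$, so the continuous critical point obeys $n^\ast=\Omega\big(\int_{\ba}r^{-2}\rho_{n^\ast}\big)^{-1}$. By the analogues for $E_n$ of Propositions \ref{hGPf minimization}, \ref{pointwise bound dens} and \ref{exponential smallness} — available because, for $n\approx\Omega$, $E_n$ has the same leading Thomas--Fermi behaviour near $r=1$ as $\hgpe$ — the density $\rho_n$ is $L^2$-close to $\tfm$, concentrated in the annulus $\{\rtf\leq r\leq1\}$ of width $1-\rtf=\OO((\eps\Omega)^{-1})$ and exponentially small in the hole (the $r^{-2}$ singularity there being harmless thanks to the $r^{2|\bar n|}$ vanishing at the origin). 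Since $r^{-2}=1+\OO((\eps\Omega)^{-1})$ on this annulus and $\|\rho_n\|_1=1$, one obtains $\int_{\ba}r^{-2}\rho_n=1+\OO((\eps\Omega)^{-1})$, whence $n^\ast=\Omega(1+\OO(\eps^{-1}\Omega^{-1}))$; convexity in $n$ places $\bar n$ within distance one of $n^\ast$, and since $1\ll\eps^{-1}$ this yields $\bar n=\Omega(1+\OO(\eps^{-1}\Omega^{-1}))$. The $L^2$ bound is then the analogue of the second estimate in Proposition \ref{improved rmax est}: the bound $\rmaxn\geq1-\OO(\eps^{-5/8}\Omega^{-7/8})$ (obtained as in Remark \ref{maximum pos rem}) together with the normalization and the pointwise control $f_{\bar n}^2\lesssim\tfm$ gives $\|f_{\bar n}\|_{L^2(\ba\setminus\ba_{\rmaxn})}^2=\OO(\eps^{3/8}\Omega^{1/8})=o(1)$. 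I expect the main obstacle to be precisely this concentration step: establishing, \emph{uniformly} for $n$ in an $\OO(\eps^{-1})$-neighbourhood of $\Omega$, that the shifted profiles $f_n$ (and not only $\hgpm$) satisfy the TF approximation and the exponential smallness in the hole — this is what ultimately pins $\bar n$ to $\Omega$ up to an $\OO(\eps^{-1})$ error.
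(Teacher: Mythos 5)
Your proposal is correct and follows essentially the same route as the paper: identification of $E_n$ with the giant-vortex functional $\gvf$ via $n=[\Omega]-\omega$ (so that existence and the estimate of $\bar n$ reduce to the first-order condition already exploited in Propositions \ref{optimal phase pro} and \ref{optimal phase ball pro}, of which your Feynman--Hellmann identity is just the explicit form), the rearrangement argument of Proposition \ref{unique maximum} with the decreasing potential $\bar n^2 r^{-2}$ for the unique maximum, and TF closeness of the profile plus a lower bound on $\rmaxn$ for the $L^2$ estimate. The extra details you supply on uniqueness (convexity in $\rho=f^2$) and positivity are exactly the ``standard arguments'' the paper invokes without spelling out, and the bootstrapping issue you flag at the end is handled in the paper by first extracting $|\omega|\leq\OO(\eps^{-1})$ from the first-order condition before applying the TF estimates.
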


	\begin{proof}
		We first notice that by setting $ \bar n = : [\Omega] - \omega $ for some $ \omega \in \Z $, one can easily recover the coupled minimization problem studied in Proposition \ref{optimal phase ball pro} (see also Proposition \ref{optimal phase pro}) for some different angular velocity $ \Omega $. It is very easy to realize that the existence of a minimizing $ \omega $ (and thus $ \bar n $) as well as the estimate $ \omega = \OO(\eps^{-1}) $ can be deduced in the same way as in Proposition \ref{optimal phase ball pro}.
		
		On the other hand for any given $ n \in \Z $ the uniqueness and positivity of the minimizer $ f_n(r) $ can be deduced by standard arguments, whereas the existence of a unique maximum at some point $ 0 < \rmaxn < 1 $ can be proven by a rearrangement argument as in Proposition \ref{unique maximum} by noticing that the potential $ \bar{n}^2 r^{-2} $ is strictly decreasing.
	
		In order to prove the $L^2 $ estimate, we first notice that the fact that $ \bar{n} = \Omega (1 - \OO(\eps^{-1}\Omega^{-1}) $ implies the upper bound
		\beq
			\label{sym vortex en ub}
			E_{\bar n} \leq \tfe + \OO(\eps^{-2}) + \OO(\eps^{1/2}\Omega^{3/2}),
		\eeq
		which is a consequence of the pointwise estimates \eqref{exponential smallness} and \eqref{point est origin} together with the bound $ |B_{[\Omega] - \bar n}(r)| \leq \OO(\eps^{-1}) + \OO([\Omega]- \bar n) $ for any $ \rv \in \ann $. As in \eqref{chemical diff} the above estimate yields
		\beq
			\label{chemical diff sym vortex}
			\lf| \mu_{\bar n} - \tfchem \ri| \leq \OO(\eps^{-3/2}\Omega^{1/2}) + \OO(\eps^{-1/4}\Omega^{5/4}).
		\eeq
		We can thus repeat the proof of the pointwise estimate \eqref{pointwise bound 2} and the final result is
		\bdm
				\lf| \hgpm^2(r) - \tfm(r) \ri| \leq \OO(\eps^{1/2}\Omega^{1/2}) + \OO(\eps^{7/4} \Omega^{5/4}),
		\edm
		for any $ \rtf^2 + \eps^{-1} \Omega^{-1} |\log\eps|^{-1} \leq r^2 \leq 1 - \eps^{1/2} \Omega^{-1/2} |\log\eps|^{3/2} $. The argument described in Remark \ref{maximum pos rem} therefore gives	
		\beq
			\label{lb rmaxn}
			\rmaxn^2 \geq 1 - o(\eps^{-1}\Omega^{-1}),
		\eeq 
		which in addition to $ f_{\bar n}^2 \leq \OO(\eps\Omega) $ implies the result.
	\end{proof}

The main tool in the proof of the breaking of the rotational symmetry is the investigation of the second variation of the GP energy functional evaluated at some local minimizer: Given some $ \Psi $ solving the variational equation 
		\beq
			\label{GP variational local min}
			- \Delta \Psi - 2 \vec\Omega\cdot \vec L\, \Psi + 2 \eps^{-2} \lf| \Psi \ri|^2 \Psi = \mu_{\Psi} \Psi,
		\eeq
		where $ \mu_{\Psi} : = \gpf[\Psi] + \eps^{-2} \| \Psi \|_4^4 $, and some perturbation $ \Xi(\rv) \in H^1_0(\ba) $, one has $ \gpf[\Psi + \epsilon \Xi] = \gpf[\Psi] + \epsilon^2 \Q_{\Psi}[\Xi]  + \OO(\epsilon^3) $, where
		\beq
			\Q_{\Psi}[\Xi] : = \int_{\ba} \diff \rv \lf\{ \lf| \nabla \Xi \ri|^2  -2 \Xi^* \vec{\Omega} \cdot \vec{L} \Xi + 4 \eps^{-2} |\Psi|^2 |\Xi|^2 - \mu_{\Psi} |\Xi|^2 \right\} + 2 \eps^{-2} \Re \int_{\ba} \diff \rv \: (\Psi^*)^2 \Xi^2.
		\eeq
		By definition, if there exists some $ \Xi \in H^1_0 (\ba) $ such that $ \Q_{\Psi}[\Xi] < 0 $, the associated local minimizer $ \Psi $ is {\it globally unstable} and in particular can not be a global minimizer of the GP functional.

\begin{proof}[Proof of Theorem \ref{symmetry breaking}]
		\mbox{}	\\
		Assuming that the GP minimizer was given by a symmetric vortex $ f_{\bar n}(r) \exp\{i \bar n \vartheta\} $ for some $ \bar n $, we explicitly exhibit a trial function $ \Xi(\rv) $ such that the quadratic form $ \Q_{\Psi}[\Xi] $ evaluated at $ \Psi(\rv) =  f_{\bar n}(r) \exp\{i \bar n \vartheta\} $ is negative (for simplicity we denote it by $ \Q_{\bar n} $), which yields a contradiction with the assumption that the symmetric vortex is a global minimizer.
	
		For any $ d > 1 $ we set
		\beq
			\Xi(\rv) : = \lf(A(r) + B(r)\ri) e^{i(n+d)\vartheta} + \lf(A(r) - B(r) \ri) e^{i(n - d)\vartheta},
		\eeq
		with 
		\beq
			A(r) : = 
				\begin{cases}
					 r^{d +1} f_{\bar n}^{\prime}(r),	&	\mbox{if } 0 \leq r \leq \rmaxn,	\\
					0,						&	\mbox{if } \rmaxn \leq r \leq 1,	
				\end{cases}
				\qquad	
			 B(r) : = 
				\begin{cases}
					\bar n r^{d} f_{\bar n}(r),	&	\mbox{if } 0 \leq r \leq \rmaxn,	\\
					\bar n \rmaxn^{d} f_{\bar n}(r),						&	\mbox{if } \rmaxn \leq r \leq 1.	
				\end{cases}	
		\eeq
		A trial function of this form was first introduced in \cite[Theorem 2]{Seir} to prove symmetry breaking for a special class of trapping potential but here we replace in the original definition \cite[Eq. (2.30)]{Seir} $ d $ with $ - d $. Moreover in order to satisfy the condition $ \Xi \in H^1_0(\ba) $, we have modified the function $ A $ setting it equal to 0 for $ r \geq \rmaxn $. Note that the function certainly belongs to $ H^1(\ba) $ since $ f_{\bar n} $ is differentiable and $ A(r) + B(r) \sim r^{\bar n+d} $  as $ r \to 0 $, but $ B \notin H^2_0(\ba) $ because of the singularity in the derivative at $ r = \rmaxn $.
		\newline
		We can simply borrow the explicit computations from \cite[Eqs. (2.31) and (2.33)]{Seir} (recall that in our case there is no external potential, $ A $ vanishes for $ r \geq \rmaxn $, $ d $ has to be replaced with $ - d $ and $ \Omega $ with $ 2 \Omega $) and, denoting by $ \mu_{\bar n} $ the chemical potential associated with $ f_{\bar n} $, we obtain
		\beq
 			\label{evaluation Q}
			\Q_{\bar n}[\Xi] = 8 \pi \int_0^{\rmaxn} \diff r \: r^{2d + 2} f_{\bar n}(r) f_{\bar n}^{\prime}(r) \lf\{  (d +1) \mu_{\bar n} - \frac{2 (d + 1)}{\eps^2} f_{\bar n}^2 + 2 \Omega \bar n \ri\} + 4 \pi n^2 d^2 \int_{\rmaxn}^1 \diff r \: \frac{\rmaxn^{2d}}{r}  f^2_{\bar n}(r).
		\eeq
		
		Using the estimate \eqref{chemical diff sym vortex}, one immediately obtains that $ \chem = \mu_{\bar n} = - \Omega^2 (1 - o(1)) $. We can thus estimate the quantity between brackets in the first term in \eqref{evaluation Q} as
		\beq
 			(d +1) \mu_{\bar n} - 2 \eps^{-2} (d + 1) f_{\bar n}^2 + 2 \Omega \bar n \leq - \Omega^2 \lf(d -1 - o(1) \ri)
		\eeq
		which implies the bound
		\bml{
			\label{est Q 1}
			8 \pi \int_0^{\rmaxn} \diff r \: r^{2d + 2} f_{\bar n}(r) f_{\bar n}^{\prime}(r) \lf\{  (d +1) \mu_{\bar n} - 2 \eps^{-2} (d + 1) f_{\bar n}^2 + 2 \Omega \bar n \ri\} \leq	\\
			- 8 \pi \Omega^2 (d - 1 - o(1)) \int_0^{\rmaxn} \diff r \: r^{2d + 2} f_{\bar n}(r) f_{\bar n}^{\prime}(r) \leq	\\
			 - 4 \pi \Omega^2 (d - 1 - o(1)) \bigg[ \rmaxn^{2d+2} f_{\bar n}^2(\rmaxn) - (2d + 2) \rmaxn^{2d} \int_0^{\rmaxn} \diff r \: r f_{\bar n}^2(r) \bigg] \leq	\\
			- 4\pi \Omega^2 (d - 1 - o(1)) \rmaxn^{2d + 2} f^2_{\bar n}(\rmaxn) \lf( 1 - C d \eps^{-1} \Omega^{-1} \ri),
		}
		where we have used the fact that $ f $ is increasing between 0 and $ \rmaxn $ and the lower bound \eqref{lb rmaxn}.
		\newline
		On the other hand the last term in \eqref{evaluation Q} can be bounded as
		\beq
 			\label{est Q 2}
			 4 \pi n^2 d^2 \rmaxn^{2d} \int_{\rmaxn}^1 \diff r \: r^{-1}  f^2_{\bar n}(r) \leq 2 n^2 d^2  \rmaxn^{2d-2} \lf\| f_{\bar{n}} \ri\|^2_{L^2(\ba\setminus\ba_{\rmaxn})} \leq o(1) \rmaxn^{2d-2} \Omega^2 d^2.
		\eeq
		Hence \eqref{est Q 1} and \eqref{est Q 2} yield	
		\bdm
			\Q_{\bar n}[\Xi] \leq - 4 \pi d \Omega^2 \rmaxn^{2d+2}  f^2_{\bar n}(\rmaxn) \lf( 1 - d^{-1} - o(1) - C \eps^{-1} \Omega^{-1}d \ri) < 0
		\edm
		for any finite $ d \geq 2 $ and $ \eps $ small enough.
	\end{proof}

	\begin{rem}{\it (Flat Neumann case)}
		\mbox{}	\\
		The above proof applies with minor modifications to the case of the bounded trap $ \ba $ with Neumann conditions at the boundary $ \partial \ba $: It is indeed sufficient to make the replacements in the trial function $ \Xi $
		\bdm
			A(r) =  r^{d +1} f_{\bar n}^{\prime}(r),	\qquad	B(r) = \bar n r^{d} f_{\bar n}(r),
		\edm
		for any $ \rv \in \ba $ and compute
		\bdm
			\Q_{\bar n}[\Xi] = 8 \pi \int_0^{\rmaxn} \diff r \: r^{2d + 2} f_{\bar n}(r) f_{\bar n}^{\prime}(r) \lf\{  (d +1) \mu_{\bar n} - 2 (d + 1) \eps^{-2} f_{\bar n}^2 + 2 \Omega \bar n \ri\}.
		\edm
		Now since $ f_{\bar n} $ is increasing in the Neumann case, $ f_{\bar n}^2 \leq \OO(\eps\Omega) $ and $ \mu_{\bar n} = - \Omega^2(1 - o(1)) $, the quadratic form can be made negative for $ d $ large enough.
	\end{rem}

\section*{Appendix A}
\addcontentsline{toc}{section}{Appendix A}

\renewcommand{\theequation}{A.\arabic{equation}}
\setcounter{equation}{0}
\setcounter{subsection}{0}
\renewcommand{\thesection}{A}

In this Appendix we discuss some useful properties of the TF-like functionals involved in the analysis as well as the critical angular velocities.

\subsection{The Thomas-Fermi Functionals}
%\addcontentsline{toc}{subsection}{The TF Functionals}

The minimization of the TF functional introduced in \eqref{TFf} has already been discussed in other papers (see, e.g., \cite[Appendix]{CY} or \cite[Appendix A]{CRY}), so we only sum up here the main results: The minimizer among positive functions is unique and explicitly given by 
\beq
	\label{TFm}
	\tfm(r) : = \half \lf[ \eps^2 \tfchem + \eps^2 \Omega^2 r^2 \ri]_+,
\eeq		
where $ [ \:\: \cdot \:\: ]_+ $ stands for the positive part and $ \tfchem : = \tfe + \eps^{-2} \lf\| \tfm \ri\|^2_2 $. If $ \Omega \geq 2(\sqrt{\pi} \eps)^{-1}$, the chemical potential is given by $ \tfchem = - \Omega^2 \rtf^2 $ with 
\beq
	\label{rtf}
	\rtf : = \sqrt{1 - \frac{2}{\sqrt{\pi} \eps \Omega}},
\eeq
and the TF minimizer can be rewritten as $ \tfm(r) = \half \eps^2 \Omega^2 \lf[ r^2 - \rtf^2 \ri]_+ $, which makes explicit the fact that it vanishes for $ r \leq \rtf $.

The corresponding ground state energy can be explicitly evaluated and is given by
\beq
	\label{TFe explicit}
	\tfe =
		\begin{cases}
			\frac{1}{\pi} \eps^{-2} - \half \Omega^2 - \frac{1}{48} \pi \eps^{2} \Omega^4,      	&   \mbox{if } \Omega \leq \Osec,    \\ 
            	- \Omega^2 \lf[ 1 - 4/(3\sqrt{\pi}) \Omega \ri],   						&   \mbox{if } \Omega > \Osec.
		\end{cases}
\eeq
Note that above the second critical velocity, the annulus $ \tfd : = \tfsupp $ has a shrinking width of order $ \eps |\log\eps| $ (see \eqref{rtf}) and the leading order term in the ground state energy asymptotics is $ - \Omega^2 $, which is due to the convergence of $ \tfm $ to a distribution supported at the boundary of the trap.

In the giant vortex regime another TF-like functional becomes more relevant, i.e., 
\beq
 	\label{hTFf}
	\htff_{\omega}[\rho] : = \int_{\B} \diff \rv \: \lf\{ - \Omega^2 r^2 \rho + B_{\omega}^2(r) \rho + \eps^{-2} \rho^2 \ri\} = \int_{\B} \diff \rv \: \lf\{ \lf( [\Omega] - \omega \ri)^2 r^{-2} \rho + \eps^{-2} \rho^2 \ri\} - 2 \Omega [\Omega - \omega],
\eeq
where the potential $ \rmagnp $ is defined in \eqref{rmagnp}, $ \omega \in \Z $ and we have used the normalization in $ L^1(\B) $ of the density in the last term. The minimization of such a functional was studied in details in \cite[Appendix A]{CRY} and we recall here only the most important fact, i.e., the ground state energy $ \htfe $ satisfies the estimate
\beq
	\label{hTFe}
	\htfe_{\omega} = \tfe + \bigg[ \omega - \frac{2}{3\sqrt{\pi}\eps} \bigg]^2 + \frac{2}{9 \pi \eps^2} + \OO(\eps^{-2} |\log\eps|^{-1}),
\eeq
which suggests that it is minimized by a phase $ \optphtf : = 2(3\sqrt{\pi}\eps)^{-1} $.

\subsection{The Third Critical Angular Velocity $ \Othird $}
\label{ang vel 3 sec}
%\addcontentsline{toc}{subsection}{The Critical Angular Velocity and the Vortex Energy}

In this last part of the Appendix we state the estimate of the critical velocity $ \Othird $, which is defined as the angular velocity at which vortices disappear from the bulk of the condensate. To estimate this velocity we need to compare the vortex energy cost $ \frac{1}{2} \hgvmo^2(r) |\log\eps| $ with the vortex energy gain $ |\fin(r)| $ (see \eqref{Fin}, \eqref{F} and \eqref{Fout}). In \cite[Appendix]{CRY} a similar comparison is performed when the density $ \hgvmo^2 $ is replaced by $ \tfm $ and it is shown that, if $ \Omega > \Othird $ in the sense that $ \Omega = \Omega_0 \eps^{-2} |\log\eps|^{-1} $ with $ \Omega > 2(3 \pi)^{-1} $, then the function
\beq
	\label{gainTF}
	\gaintf(r) : =\half |\log\eps| \tfm(r) - \lf|\costtf(r) \ri|,
\eeq
where
\beq
	\label{exp TF potential}
	\costtf(r) : = 2 \int_{\rtf}^r \diff s \: \vec{B}_{\optphtf}(r) \cdot \vec{e}_{\vartheta} \: \tfm(r),
\eeq
satisfies the lower bound
\beq
	\label{gainTF lb}
	\gaintf(r) \geq  C \eps^{-1} |\log \ep| ^{-2} > 0 
\eeq
for any $\vec{r}$ such that $ r \geq \rb  = \rtf + \eps|\log\eps|^{-1} $.

The analogous result for the original function
\beq
	\gain(r) : = \half \hgvmo^2(r) |\log\eps| - \lf|\fin(r)\ri|,
\eeq
is proven in the following

\begin{pro}[\textbf{Third critical velocity $ \Othird $}] 
		\label{ang vel 3 pro}
		\mbox{}	\\
		If $ \Omega_0 > 2(3\pi)^{-1} $ and $ \eps $ is small enough, there exists a finite constant $ C $ such that 
		\[ 
			\gain(r) \geq C \eps^{-1} |\log \ep| ^{-2} > 0 
		\] 
		for any $\vec{r}$ such that $ r \geq \rb = \rtf +  \ep |\log \ep| ^{-1} $.
	\end{pro}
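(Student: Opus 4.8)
The plan is to deduce the lower bound for $\gain$ directly from the already established Thomas--Fermi estimate \eqref{gainTF lb} for $\gaintf$, by showing that the errors incurred when one replaces the profile $\hgvmo^2$ by $\tfm$, the optimal phase $\omega_0$ by $\optphtf$, and the potential $F$ by $\fin$ are all of lower order than the gain $\eps^{-1}|\log\eps|^{-2}$. Concretely, I would write
\beq
	\gain(r) - \gaintf(r) = \half |\log\eps| \lf( \hgvmo^2(r) - \tfm(r) \ri) - \lf( |\fin(r)| - |\costtf(r)| \ri)
\eeq
and prove that both differences on the right are $o(\eps^{-1}|\log\eps|^{-2})$ uniformly for $\rb \leq r \leq 1 - \eps^{3/2}|\log\eps|^2$, i.e.\ on the annulus $\At$ which is the range actually used in \eqref{lowbound11}. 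Then \eqref{gainTF lb} gives $\gain(r) \geq C\eps^{-1}|\log\eps|^{-2} - o(\eps^{-1}|\log\eps|^{-2}) > 0$ for $\eps$ small. Since $\big||\fin| - |\costtf|\big| \leq |\fin - \costtf|$, it suffices to control the two scalar quantities $|\log\eps|\,|\hgvmo^2 - \tfm|$ and $|\fin - \costtf|$.

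The first quantity is immediate from the pointwise estimate \eqref{pointwise bound hgvm}, which on $\At$ gives $|\log\eps|\,|\hgvmo^2 - \tfm| \leq \OO(\eps^{-3/4}|\log\eps|^{-1/4}) \ll \eps^{-1}|\log\eps|^{-2}$. For the second I would first use $\fin = F - \fout$ together with $|F(1)| \leq \OO(1)$ from \eqref{F bounds 2}, so that $\| \fout \|_{\infty} \leq \OO(1) \ll \eps^{-1}|\log\eps|^{-2}$ and the problem reduces to bounding $F(r) - \costtf(r)$. Recalling that $B_{\omega_0}(r) - B_{\optphtf}(r) = (\omega_0 - \optphtf) r^{-1}$, I would split
\bmln{
	F(r) - \costtf(r) = 2\int_{\rtf}^r \diff s \: \lf( \hgvmo^2 - \tfm \ri) B_{\optphtf} \\ + 2\int_{\rtf}^r \diff s \: \hgvmo^2 \lf( B_{\omega_0} - B_{\optphtf} \ri) + 2\int_{\rt}^{\rtf} \diff s \: \hgvmo^2 B_{\omega_0},
}
and estimate the three terms using $|B_{\omega_0}|,|B_{\optphtf}| \leq \OO(\eps^{-1})$ on $\ann$ (see \eqref{mangp inside}), the bound $|\omega_0 - \optphtf| \leq \OO(\eps^{-1}|\log\eps|^{-4})$ coming from \eqref{est omega_0} and $\optphtf = 2(3\sqrt\pi\eps)^{-1}$, and the fact that the radial width of $\ann$ is $\OO(\eps|\log\eps|)$. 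The phase term is then $\OO(\eps^{-1}|\log\eps|^{-4})$ and the bulk part of the density term $\OO(\eps^{-3/4}|\log\eps|^{-1/4})$, both negligible.

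The delicate point, and the one where the Dirichlet condition genuinely enters, is the behavior at the two ends of the integration interval; I expect this to be the main obstacle. Near the inner edge the naive bound $|\hgvmo^2 - \tfm| \leq \OO(\|\hgvmo\|_\infty^2) = \OO(\eps^{-1}|\log\eps|^{-1})$ is too lossy, and the key is that the profile is already small there: monotonicity of $\hgvmo$ on $[\rt,\rmaxgv]$ together with \eqref{pointwise bound hgvm} evaluated at $r=\rb$ gives $\hgvmo^2 \leq \OO(\eps^{-1}|\log\eps|^{-3})$ on $[\rt,\rb]$. With this sharper bound the contribution of the layer $[\rtf,\rb]$ to the density term is only $\OO(\eps^{-1}|\log\eps|^{-4})$, and the sliver $[\rt,\rtf]$ (of width $\eps^{8/7}$) contributes $\OO(\eps^{-6/7}|\log\eps|^{-3})$, both $\ll \eps^{-1}|\log\eps|^{-2}$. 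At the outer end the restriction $r \leq 1 - \eps^{3/2}|\log\eps|^2$ keeps the integration away from $\partial\ba$, where $\fin$ vanishes while $\costtf$ need not, so no boundary-layer error is produced. Securing this near-$\rtf$ smallness so that the inner-edge error stays strictly below the gain is the crux; the remainder is a faithful transcription of the Thomas--Fermi computation of \cite[Appendix]{CRY}.
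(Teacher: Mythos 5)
Your proposal is correct and follows essentially the same route as the paper: the paper's own proof consists precisely of the observation that $|\fout|\leq|F(1)|\leq\OO(1)$ allows one to replace $\fin$ by $F$ at negligible cost and then to transfer the Thomas--Fermi bound \eqref{gainTF lb} as in \cite[Proposition A.2]{CRY}, which is exactly the comparison you carry out in detail (including the correct treatment of the inner layers $[\rt,\rtf]$ and $[\rtf,\rb]$ and of the phase difference $\omega_0-\optphtf$). Your restriction of the range to $r\leq 1-\eps^{3/2}|\log\eps|^2$ is harmless --- indeed necessary, since $\gain$ vanishes at $r=1$ by the Dirichlet condition --- and matches the annulus $\At$ on which the proposition is actually applied in \eqref{lowbound11}.
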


	\begin{proof}
		The result can be proven in the same way as \cite[Proposition A.2]{CRY} by noticing that $ |\fout| \leq |F(1)| \leq \OO(1) $ and using such an estimate to replace $ \fin(r) $ with $ F(r) $ in $ \gain(r) $.
	\end{proof}

\vspace{1cm}
\noindent{\bf Acknowledgements.} The work of NR was supported partly by {\it R\'{e}gion Ile-de-France} through a PhD grant and partly by the European 
Research Council under the European Community's Seventh Framework Programme 
(FP7/2007--2013 Grant Agreement MNIQS no. 258023). Part of this work was realised when NR was visiting the university of Aarhus thanks to the hospitality of S\o{}ren Fournais. 
\vspace{1cm}

\end{document}